\newcommand{\Ii}{\mathcal{I}}
\newcommand{\cmax}{\mathsf{cmax}}
\newcommand{\qseq}{\mathsf{qseq}}
\newcommand{\freg}{\mathsf{\mathcal{F}_I}}
\newcommand{\sreg}{\mathsf{\mathcal{P}_I}}
\newcommand{\fregm}{\mathcal{F}}
\newcommand{\sregm}{\mathcal{P}}
\newcommand{\fregk}{\mathcal{F}^{k}}
\newcommand{\sregk}{\mathcal{P}^{k}}
\newcommand{\fregnm}{\mathcal{F}^{n}_{\mathsf{I_1,\ldots, I_n}}}
\newcommand{\sregnm}{\mathcal{P}^{n}_{\mathsf{I_1,\ldots, I_n}}}
\newcommand{\fregkm}{\mathcal{F}^{k}_{\mathsf{I_1,\ldots, I_k}}}
\newcommand{\sregkm}{\mathcal{P}^{k}_{\mathsf{I_1,\ldots, I_k}}}
\newcommand{\fut}{\mathsf{F}}
\newcommand{\sbf}{\mathcal{G}}
\newcommand{\sbm}{\mathcal{H}}
\newcommand{\past}{\mathsf{P}}
\newcommand{\until}{\mathsf{U}}
\newcommand{\since}{\mathsf{S}}
\newcommand{\wB}{\mathcal{G}^w}
\newcommand{\nex}{\oplus}
\newcommand{\col}{\mathsf{Col}}
\newcommand{\normalize}{\mathsf{Norm}}
\newcommand{\Clcap}{\mathsf{CL}}
\newcommand{\pnregmtl}{\mathsf{PnEMTL}}
\newcommand{\mtl}{\mathsf{MTL}}
\newcommand{\tptl}{\mathsf{TPTL}}
\newcommand{\mitl}{\mathsf{MITL}}
\newcommand{\emitl}{\mathsf{EMITL}}
\newcommand{\ltl}{\mathsf{LTL}}
\newcommand{\emitlinf}{\mathsf{EMITL_{0,\infty}}}
\newcommand{\R}{\mathbb{R}}
\newcommand{\re}{\mathsf{A}}
\newcommand{\A}{\mathcal{A}}
\newcommand{\I}{\mathcal{I}}
\newcommand{\seq}{\mathsf{seq}}
\newcommand{\rev}{\mathsf{Rev}}
\newcommand{\Sseq}{\mathcal{T}(I_\nu)}
\newcommand{\bseq}{\mathsf{BSequence}}
\newcommand{\init}{\mathsf{init}}
\newcommand{\Qseq}{\mathsf{Qseq}}
\newcommand{\fd}{\mathsf{fd}}
\newcommand{\first}{\mathsf{first}}
\newcommand{\last}{\mathsf{last}}
\newcommand{\anch}{\mathsf{anch}}
\newcommand{\Last}{\mathsf{Last}}
\newcommand{\Lastts}{\mathsf{LastTS}}
\newcommand{\Ovs}{\mathsf{OVS}}
\newcommand{\ovs}{\mathsf{ovs}}
\newcommand{\Int}{\mathsf{Int}}
\newcommand{\nt}{\mathsf{int}}
\newcommand{\ext}{\mathsf{ext}}
\newcommand{\intintervaln}{\mathcal{I}_\mathsf{nat}}
\newcommand{\intinterval}{\mathcal{I}_\mathsf{int}}
\newcommand{\dseq}{\mathsf{dseq}}
\newcommand{\Tseq}{\mathsf{Tseq}}
\newcommand{\tseq}{\mathsf{tseq}}
\newcommand{\Dseq}{\mathsf{Dseq}}
\newcommand{\boundaryint}{\mathsf{Boundary}}
\newcommand{\oomit}[1]{}
\begin{document}

\title{Generalizing Non-Punctuality for Timed Temporal Logic with Freeze Quantifiers.}
\author{Shankara Narayanan Krishna \inst{1} \and
Khushraj Madnani\inst{2}{(\Letter)} \and
Manuel Mazo Jr. \inst{2} \and Paritosh K. Pandya \inst{1}}
\authorrunning{S.N.Krishna et al.}
\institute{IIT Bombay, Mumbai, India \email{krishnas,pandya58}@cse.iitb.ac.in\and Delft University of Technology, Delft, The Netherlands \email{k.n.madnani-1,m.mazo}@tudelft.nl}
\maketitle
\begin{abstract} 
Metric Temporal Logic (MTL) and Timed Propositional Temporal Logic (TPTL) are prominent real-time extensions of Linear Temporal Logic (LTL). In general, the satisfiability checking problem for these extensions is undecidable when both the future U and the past S modalities are used.  In a classical result, the satisfiability checking for MITL[U,S], a non-punctual fragment of MTL[U,S], is shown to be decidable with EXPSPACE complete complexity. Given that this notion of non-punctuality does not recover decidability in the case of TPTL[U,S], we propose a 
 generalization of non-punctuality called \emph{non-adjacency} for TPTL[U,S], and focus on its  1-variable fragment, 1-TPTL[U,S].  While 
non-adjacent 1-TPTL[U,S] appears to  be a very small fragment, it is  strictly more expressive than MITL. As our main result, we show that the satisfiability checking problem for non-adjacent 1-TPTL[U,S] is decidable with EXPSPACE complete complexity. 
\end{abstract}

\section{Introduction}
Metric Temporal Logic ($\mtl$) and Timed Propositional Temporal Logic ($\tptl$) are natural extensions of Linear Temporal Logic ($\ltl$) for specifying real-time properties \cite{AH94}. $\mtl$ extends the $\until$ and $\since$ modalities of $\ltl$ by associating a timing interval with these. $a \until_I b$ describes behaviours modeled as timed words consisting of a sequence of $a$'s followed by 
a $b$ which occurs at a time within (relative) interval $I$.
On the other hand, $\tptl$ uses freeze quantification  to store the current time stamp. A Freeze quantifier with clock variable $x$  has the  form $x.\varphi$. When it is 
 evaluated at a point $i$ on a timed word, the time stamp $\tau_i$ at $i$ is frozen in $x$, and the formula $\varphi$ is evaluated using this value for $x$.
 Variable $x$ is used in $\varphi$ in a constraint of the form $T-x \in I$; this constraint, when evaluated at a point $j$, checks if $\tau_j -\tau_i \in I$, where $\tau_j$ is the time stamp at point $j$.
 For example, the formula  $\fut x.(a\wedge \fut (b \wedge T-x \in [1,2] \wedge \fut (c \wedge T-x \in [1,2])))$\footnote{Here $T$ is a special symbol denoting the timestamp of the present point and $x$ is the clock that was frozen when $x.$ was asserted.}
  asserts that there is a point in future where $a$ holds and in its future within interval $[1,2]$, $b$ and $c$ occur, and the former occurs before the latter. This property is not expressible in $\mtl[\until,\since]$ \cite{Patricia,simoni}. Moreover, every property in $\mtl[\until,\since]$ can be expressed in  1-$\tptl[\until,\since]$. Thus, 1-$\tptl[\until,\since]$ is strictly more expressive than $\mtl[\until,\since]$. Unfortunately, both the logics have an undecidable satisfiability problem, making automated analysis for these logics theoretically impossible. 

Exploring natural decidable variants of these logics has been an active area of research since the advent of these logics \cite{AH93}\cite{raskin-thesis}\cite{icalp-raskin}\cite{Wilke}\cite{Rabinovich}\cite{rabin}\cite{rabinovichY}. 
One line of work restricted itself to the future only fragments $\mtl[\until]$ and 1-$\tptl[\until]$ which have both been shown to have decidable  satisfiability over finite timed words, under a pointwise interpretation\cite{Ouaknine05,OWH}. The complexity however is non-primitive recursive. 
Reducing the complexity to elementary has been challenging. One of the most celebrated of such logics is the Metric Interval Temporal Logic ($\mitl[\until,\since]$) \cite{AFH96},  a subclass of $\mtl[\until,\since]$ where the timing intervals are restricted to be non-punctual (i.e. intervals of the form $\langle x, y \rangle$ where $x < y$). The  satisfiability checking for $\mitl$ formulae is decidable with EXPSPACE complete complexity \cite{AFH96}.  While non-punctuality helps to recover the decidability of $\mtl[\until,\since]$, it does not help $\tptl[\until,\since]$.  
The freeze quantifiers  of $\tptl$ enables us to  trivially express punctual timing constraints using only the non-punctual intervals : for instance 
the 1-$\tptl$ formula $x.(a \until (a \wedge T-x \in [1, \infty) \wedge T-x \in [0, 1]))$ uses only non-punctual intervals but captures the $\mtl$ formula $a \until_{[1,1]} b$. Thus, a more refined notion of non-punctuality is needed to recover the decidability of 1-$\tptl[\until,\since]$. 

\noindent \textbf{Contributions}. With the above observations, to obtain a decidable class of 1-$\tptl[\until,\since]$ akin to $\mitl[\until,\since]$, we  revisit the notion of non-punctuality as it stands currently. As our first contribution, we propose \emph{non-adjacency}, a refined version of  non-punctuality. Two intervals, $I_1$ and $I_2$ are non-adjacent if the supremum of $I_1$ is not equal to the infimum  of $I_2$.
Non-adjacent 1-$\tptl[\until,\since]$ is the subclass of 1-$\tptl[\until,\since]$  where,  every interval used in clock constraints within the same freeze quantifier is non-adjacent to itself and to every other timing interval that appears within the same scope. (Wlog, we consider  formulae in negation normal form only.)
The non-adjacency restriction disallows punctual timing intervals : every punctual timing interval is adjacent to itself.  It can be  shown  (Theorem \ref{thm:me})  that non-adjacent 1-$\tptl[\until,\since]$, while seemingly very restrictive, is strictly more expressive than $\mitl$ and it can also express the counting and the Pnueli modalities \cite{rabin}. Thus, the logic is of considerable interest in practical real-time specification. See the full version for an example.


Our second  contribution is to give a decision procedure for the satisfiability checking of non-adjacent 1-$\tptl[\until,\since]$. We do this in two steps. {\bf 1)} We introduce a logic $\pnregmtl$ which combines and generalises the automata modalities of  \cite{raskin-thesis,Wilke,H19} and the Pnueli modalities of \cite{rabin,rabinovichY,Rabinovich}, and has not been studied before to the best of our knowledge. We show that a formula of non-adjacent 1-$\tptl[\until,\since]$ can be reduced to an equivalent formula of non-adjacent $\pnregmtl$ (Theorem \ref{thm: tptltopnregmtlsat}).  
{\bf 2)} We prove that the satisfiability of non-adjacent $\pnregmtl$ is decidable with  EXPSPACE complete complexity (Theorem \ref{thm:main}). For brevity, some of the proof details are omitted here and can be found in the full version \cite{kkmpfull}.

\noindent \textbf{Related Work and Discussion} Much of the related work has already been discussed. $\mitl$ with counting and Pnueli modalities has been shown to have EXPSPACE-complete satisfiability \cite{Rabinovich}\cite{count}. Here, we tackle even more expressive logics: namely non-adjacent 1-$\tptl[\until,\since]$ and non-adjacent $\pnregmtl$. We show that EXPSPACE-completeness of satisfiability checking is retained in spite of the additional expressive power. These decidability results are proved by equisatisfiable reductions to logic  $\emitl_{0,\infty}$ of Ho \cite{H19}. As argued by Ho, it is quite practicable to extend the existing model checking tools like UPPAAL to logic $\emitl_{0,\infty}$ and hence to our logics too. 

Addition of regular expression based modalities to untimed logics like LTL has been found to be quite useful for practical specification; even the IEEE standard temporal logic PSL has this feature. With a similar motivation, 
there has been considerable recent work on adding regular expression/automata based modalities to $\mtl$ and $\mitl$. Raskin as well as Wilke added automata modalities to $\mitl$ as well as an Event-Clock logic \emph{ECL} \cite{raskin-thesis,Wilke} and showed the decidability of satisfaction. The current authors showed that $\mtl[\until,\since_{NP}]$ (where $\until$ can use punctual intervals but $\since$ is restricted to non-punctual intervals), when extended with counting as well as regular expression modalities  preserves decidability of  satisfaction \cite{time14,KKP17,Kkp18,khushraj-thesis}. Recently,  Ferr{\`{e}}re showed the EXPSPACE decidability of MIDL which is $\ltl[\until]$ extended with a fragment of timed regular expression modality \cite{F18}. Moreover, Ho has investigated a PSPACE-complete fragment $\emitl_{0,\infty}$ \cite{H19}. Our non-adjacent $\pnregmtl$ is a novel extension of MITL with modalities which combine the features of EMITL  \cite{raskin-thesis,Wilke,H19} and the Pnueli modalities\cite{rabin,rabinovichY,Rabinovich}. 

\section{Preliminaries}
\label{sec:prelim}
  Let $\Sigma$ be a finite set of propositions, and let  $\Gamma = 2^{\Sigma} \setminus \emptyset$. 
   A word over $\Sigma$ is a finite sequence $\sigma = \sigma_1 \sigma_2 \ldots \sigma_n$, where $\sigma_i \in \Gamma$.
  A timed word $\rho$ over $\Sigma$ is a finite sequence of pairs $(\sigma, \tau)  \in \Sigma \times \R_{\geq 0}$;  
$\rho = (\sigma_1, \tau_1) \ldots (\sigma_n, \tau_n) \in (\Sigma \times \R_{\geq 0})^*$ 
where $\tau_1=0$ and $\tau_i \leq \tau_j$ for all $1 \leq i \leq j \leq n$.  The $\tau_i$ are called time stamps. 
For a timed or untimed word $\rho$, let $dom(\rho) = \{i | 1 \le i \le |\rho|\}$ where $|\rho|$ denotes the number of (event, timestamp) pairs composing the word $\rho$, and 
$\sigma[i]$ denotes the symbol at position $i \in dom(\rho)$.
The set of timed words over $\Sigma$ is denoted $T\Sigma^*$.
Given a (timed) word $\rho$ and $i \in dom(\rho)$, a pointed (timed) word is the pair $\rho, i$. 
Let $\intinterval$ ($\intintervaln$)  be the set of open, half-open or closed time intervals, such that the end points of these intervals are in $\mathbb{Z}\cup \{-\infty,\infty\}$ ($\mathbb{N} \cup \{0,\infty\}$, respectively).
We assume familiarity with $\ltl$.

\smallskip 

\noindent \textbf{Metric Temporal Logic($\mtl$)}.
$\mtl$ is a real-time extension of $\ltl$ where the modalities ($\until$ and $\since$) are guarded with intervals. Formulae of $\mtl$ are built from $\Sigma$ using Boolean connectives and 
time constrained versions  $\until_I$ and $\since_I$ of the standard $\until,\since$ modalities, where 
 $I \in \intintervaln$. 
Intervals of the form $[x,x]$  are called  punctual; a non-punctual interval is one which is not punctual. Formulae in $\mtl$ are defined as follows.  
$\varphi::=a ~|\top~|\varphi \wedge \varphi~|~\neg \varphi~|
~\varphi \until_I \varphi ~|~ \varphi \since_I \varphi$, 
where $a \in \Sigma$ and  $I \in \intintervaln$.    
For a timed word $\rho = (\sigma_1, \tau_1 ) (\sigma_2, \tau_2) \ldots (\sigma_n, \tau_n) \in T\Sigma^*$, a position 
$i \in dom(\rho)$, an $\mathsf{MTL}$ formula $\varphi$, the satisfaction of $\varphi$ at a position $i$ 
of $\rho$, denoted $\rho, i \models \varphi$, is defined below. We discuss the time constrained modalities.  \\
 \noindent $\bullet$ $\rho,i\ \models\ \varphi_{1} \until_{I} \varphi_{2}$  iff  $\exists j > i$, 
$\rho,j\ \models\ \varphi_{2}, \tau_{j} - \tau_{i} \in I$, and  $\rho,k\ \models\ \varphi_{1}$ $\forall$ $i< k <j$, 
\noindent $\bullet$ $\rho,i\ \models\ \varphi_{1} \since_{I} \varphi_{2}$  iff  $\exists j < i$, 
$\rho,j\ \models\ \varphi_{2}, \tau_{j} - \tau_{i} \in I$, and  $\rho,k\ \models\ \varphi_{1}$ $\forall$ $j< k <i$.\\ 
The language of an $\mtl$ formula $\varphi$ is defined as $L(\varphi) = \{\rho | \rho, 1 \models \varphi\}$.
Using the above, we obtain some derived formulae : the \emph{constrained eventual} operator $\fut_I \varphi \equiv true \until_I \varphi$ and its dual is  $\sbf_I \varphi \equiv \neg \fut_I \neg \varphi$. Similarly $\sbm_I \varphi \equiv true \since_I \varphi $. The  \emph{next} operator is defined as  $\nex_I \varphi \equiv \bot \until_I \varphi$. The non-strict versions of $\fut, \sbf$ are respectively denoted $\fut^w$ and $\wB$ 
and include the present point. Symmetric 
non-strict versions for past operators are also allowed. 
The subclass of $\mathsf{MTL}$ obtained by restricting the intervals $I$ in the until and since 
modalities to {\bf non-punctual intervals} is denoted  $\mathsf{MITL}$. We say that a formula $\varphi$ is satisfiable iff $L(\varphi)\neq \emptyset$. 
\begin{theorem}
 \label{thm-basic}
	 Satisfiability checking for $\mtl[\until,\since]$ is undecidable \cite{AH93}.Satisfiability Checking for $\mathsf{MITL}$  is EXPSPACE-complete\cite{AFH96}.
\end{theorem}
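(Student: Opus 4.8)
The statement collects two classical facts, and I would establish each by recalling its standard argument, both in the pointwise finite-timed-word setting fixed above.

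For the undecidability of $\mtl[\until,\since]$, the plan is a reduction from the halting problem of deterministic two-counter (Minsky) machines. Given such a machine $M$, I would construct an $\mtl[\until,\since]$ formula $\varphi_M$, over an alphabet containing a symbol per control state together with symbols marking counter units, such that the models of $\varphi_M$ are exactly the timed words encoding halting runs of $M$. The encoding places the $k$-th configuration inside the time window $[k,k+1)$ and represents counter value $c$ by $c$ suitably-tagged events in that window; untimed, $\ltl$-style conjuncts handle the control flow and the well-formedness of a single block. The decisive conjuncts force configuration $k+1$ to be the $M$-successor of configuration $k$: each tagged event at time $\tau$ must have a matching tagged event at time $\tau+1$, expressed with $\until_{[1,1]}$, and symmetrically every tagged event at time $\tau+1$ must be matched one unit in the past, expressed with $\since_{[1,1]}$; together with the local increment/decrement/zero-test bookkeeping, these punctual constraints make the copying faithful — the future constraint forbids deletion errors, the past constraint forbids insertion errors. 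Then $M$ halts iff $\varphi_M$ is satisfiable, so satisfiability is undecidable. This is precisely where the argument needs punctuality: the copying conjuncts genuinely use the singular interval $[1,1]$, exactly what $\mitl$ disallows (and why the future-only fragment remains decidable).

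For $\mitl$ I would prove the two bounds separately. \emph{Membership in EXPSPACE.} Following Alur--Feder--Henzinger, compile each $\mitl$ formula $\varphi$ into a timed automaton $\A_\varphi$ with $L(\A_\varphi)=L(\varphi)$ whose locations carry Boolean types (maximal consistent subformula sets, as in the Vardi--Wolper tableau for $\ltl$) and whose clocks time the outstanding $\psi_1\until_I\psi_2$ and $\psi_1\since_I\psi_2$ obligations. The key observation is that when $I=\langle a,b\rangle$ is non-punctual, all obligations for a fixed subformula whose deadlines fall in a common window of width $b-a$ can be merged and tracked jointly, so a constant number of clocks per temporal subformula suffices; hence $\A_\varphi$ has exponentially many locations but only polynomially many clocks, with constants bounded by those of $\varphi$. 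Since emptiness of a timed automaton is decidable in space polynomial in the size of its region graph, which is exponential in the number of clocks and in the binary length of the constants, exploring the region graph of $\A_\varphi$ on the fly — without ever materializing $\A_\varphi$ — yields an EXPSPACE procedure. \emph{EXPSPACE-hardness.} Reduce from the acceptance problem of a Turing machine working in space $2^n$, equivalently from the $2^n$-bounded corridor-tiling problem: encode a computation as a timed word in which a single interval constant of binary length $n$ (for instance a bound $2^n$ on accumulated delay) delimits the $2^n$ tape cells of a configuration within a bounded span of time, and let $\ltl$-style conjuncts check the consistency of consecutive configurations. The formula has size polynomial in $n$ and is satisfiable iff the machine accepts, giving hardness and, with the upper bound, EXPSPACE-completeness.

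I expect the delicate step to be the clock-merging analysis in the upper bound: showing precisely that non-punctuality lets one bound, by a constant, the number of clocks needed for the pending $\until_I/\since_I$ obligations of each subformula — so that $\A_\varphi$ stays within polynomially many clocks and its region graph within exponential space — is exactly the technical content of ``relaxing punctuality'' and the crux of the whole proof.
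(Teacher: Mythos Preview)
The paper does not prove this theorem at all: it is stated purely as a citation of known results, with references \cite{AH93} and \cite{AFH96} carrying the entire burden. There is therefore no ``paper's own proof'' to compare against.

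That said, your sketches are faithful to the classical arguments in those references. The undecidability of $\mtl[\until,\since]$ is indeed established by a Minsky two-counter machine encoding in which punctual $\until_{[1,1]}$ and $\since_{[1,1]}$ constraints enforce exact copying of counter-unit events across unit-spaced configuration blocks. For $\mitl$, the EXPSPACE upper bound in \cite{AFH96} proceeds exactly via the compilation to timed automata you describe, with the non-punctuality-based clock-sharing argument bounding the number of clocks polynomially; the hardness is a succinct-constant reduction from $2^n$-space-bounded computation. Your identification of the clock-merging analysis as the delicate step is accurate --- that is precisely the technical core of the Alur--Feder--Henzinger result.
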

\noindent \textbf{Time Propositional Temporal Logic ($\tptl$)}. The logic $\tptl$ also extends $\ltl$ using freeze quantifiers. 
Like $\mtl$, $\tptl$ is also evaluated on timed words. 
Formulae of $\mathsf{TPTL}$ are built from $\Sigma$  using Boolean connectives, modalities 
 $\until$ and $\since$ of $\ltl$. In addition, $\tptl$ uses a finite set of 
  real valued clock variables  $X = \{x_1,\ldots,x_n\}$. Let $\nu: X \rightarrow \mathbb{R}_{\geq 0}$ represent  a
  valuation assigning a non-negative real value to each clock variable. 
    The formulae of $\tptl$ are defined as follows. Without loss of generality  we work with $\tptl$ in the negation normal form. 
    $\varphi::=a~|~\neg a ~|\top~|~\bot~|~ x.\varphi ~|~ T-x \in I ~|~x-T \in I~|~\varphi \wedge \varphi~|~ \varphi \vee \varphi~|
~\varphi \until \varphi ~|~ \varphi \since \varphi ~|~ \sbf \varphi ~|~ \sbm \varphi$, 
where $x \in X$, $a \in \Sigma$, $I \in \intinterval$. Here  $T$ denotes the time stamp of the point where the formula is being evaluated.
 $x. \varphi$ is the freeze quantification construct which remembers the time stamp of the current point in variable $x$ and evaluates $\varphi$. 

For a timed word $\rho=(\sigma_1,\tau_1)\dots(\sigma_n,\tau_n)$, $i \in dom(\rho)$ and a $\tptl$ formula $\varphi$, we define the satisfiability relation, $\rho, i, \nu \models \varphi$ with valuation $\nu$ of all the clock variables. We omit the semantics of Boolean, $\until$ and $\since$ operators as they are similar to those of $\ltl$. 

	   \noindent $\bullet$ $\rho, i, \nu \models a$   iff  $a \in \sigma_{i}$, and 
			$\rho,i,\nu \models x.\varphi $  iff $\rho,i,\nu[x \leftarrow \tau_i] \models \varphi$\\
	\noindent $\bullet$	$\rho,i,\nu \models T-x\ \in I $  iff $\tau_i - \nu(x) \in I$, and $\rho,i,\nu \models x-T \in I $   iff  $\nu(x) - \tau_i \in I$\\
		\noindent $\bullet$ $\rho,i,\nu\ \models\ \sbf \varphi$   iff  $\forall j > i$, $\rho,j,\nu \ \models\ \varphi$, and\\
	\noindent $\bullet$	$\rho,i,\nu\ \models\ \sbm \varphi$   iff    $\forall j < i$, $\rho,j,\nu \ \models\ \varphi$

Let $\overline{0}=(0,0,\dots,0)$ represent the initial valuation of all clock variables. 	
For a timed word $\rho$ and $i \in dom(\rho)$, we say that $\rho, i$ satisfies $\varphi$ denoted $\rho, i \models \varphi$ iff $\rho,i,\overline{0}\models \varphi$. The language of $\varphi$, $L(\varphi) = \{\rho | \rho, 1 \models \varphi\}$. The Pointed Language of $\varphi$ is defined as $L_{pt}(\varphi) = \{\rho,i | \rho, i \models \varphi\}$. Subclass of $\tptl$ that uses {\bf only 1 clock variable} (i.e. $|X| = 1$) is known as 1-$\tptl$.
The satisfiability checking for 1-$\tptl[\until, \since]$ is undecidable, which is implied by theorem \ref{thm-basic} and the fact that 1-$\tptl[\until, \since]$ trivially generalizes $\mtl[\until,\since]$ . 
\smallskip 
As an example, the formula $\varphi=x.(a \until (b \until (c \wedge T-x \in [1,2])))$ 
is satisfied by the timed word $\rho=(a,0)(a,0.2)(b,1.1)(b,1.9)(c,1.91)(c,2.1)$ since $\rho, 1 \models \varphi$.  
The word $\rho'=(a,0)(a,0.3)(b,1.4)(c,2.1)(c,2.5)$ does not satisfy $\varphi$. 
However, $\rho',2 \models \varphi$ : if we start from the second position of $\rho'$, 
we assign $\nu(x)=0.3$, and when we reach the position 4 of $\rho'$ with $\tau_4=2.1$
we obtain $T-x=2.1-0.3 \in [1,2]$. 

\section{Introducing Non-Adjacent 1-$\tptl$ and $\pnregmtl$}
\label{sec:natptl}
In this section, we define 
non-adjacent 1-$\tptl$ and $\pnregmtl$ logics. Let $x$ denote the unique freeze variable we use in 1-$\tptl$.
\\\noindent\textbf{Non-Adjacent 1-$\tptl$} is defined as a subclass of 1-$\tptl$ where adjacent intervals within the scope of any freeze quantifier is disallowed. Two intervals $I_1, I_2 \in \intinterval$ are non-adjacent 
iff $\sup(I_1) {=} \inf(I_2) \Rightarrow \sup(I_1) {=} 0$.
A set $\mathcal{I}_{na}$ of intervals is non-adjacent iff any two intervals 
in $\mathcal{I}_{na}$ are non-adjacent. It does not contain punctual intervals other than $[0,0]$ as every punctual interval is adjacent to itself.  For example, 
the set $\{[1,2), (2,3], [5,6)\}$ is not a non-adjacent set, while 
$\{[0,0], [0,1), (3,4], [5,6)\}$ is. 
Let $\mathcal{I}_{na}$ denote a set of non-adjacent intervals with end points in $\mathbb{Z} \cup \{-\infty,\infty\}$.
 See full version for an example specification using this logic.
The freeze depth of a $\tptl$ formula $\varphi$, $\fd(\varphi)$ is defined inductively. For a ptopositional formula $prop$, $\fd (prop){=}0$. Also, $\fd(x.\varphi){=}\fd(\varphi)+1$,  and $\fd (\varphi_1 \until \varphi_2){=}\fd(\varphi_1 \since \varphi_2) =\fd(\varphi_1 \wedge \varphi_2){=}\fd(\varphi_1 \vee \varphi_2){=}\mathsf{Max(\fd(\varphi_1),\fd (\varphi_2))}, \fd (\sbf(\varphi)){=}\fd(\sbm(\varphi)){=}\fd(\varphi)$. 

\begin{theorem}
Non-Adjacent 1-$\tptl[\until,\since]$ is more expressive than $\mitl[\until, \since]$. It can also
express the Counting and the Pnueli modalities of \cite{rabin}\cite{rabinovichY}.
\label{thm:me}
\end{theorem}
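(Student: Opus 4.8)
The plan is to establish the two claims separately: (i) containment $\mitl[\until,\since] \subsetneq$ non adjacent 1-$\tptl[\until,\since]$, and (ii) expressibility of the counting and Pnueli modalities. For the containment direction of (i), I would give a syntactic translation $T$ from $\mitl[\until,\since]$ into non adjacent 1-$\tptl$. The only nontrivial cases are the timed modalities. For $\varphi_1 \until_I \varphi_2$ with $I$ a non punctual interval $\langle l, u\rangle$, I would define $T(\varphi_1 \until_I \varphi_2) = x.\big(T(\varphi_1) \until (T(\varphi_2) \wedge T-x \in I)\big)$, and symmetrically for $\since_I$ using $x - T \in I$ (or $T - x \in I'$ with the reflected interval, depending on the semantics convention fixed above). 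The key observation is that the freeze variable $x$ is frozen exactly once at the evaluation point, and inside its scope \emph{only one} interval, namely the non punctual $I$ inherited from the $\mitl$ formula, is used in a clock constraint; since a single non punctual interval is trivially a non adjacent set (it is not adjacent to itself precisely because it is non punctual), the translated formula lies in non adjacent 1-$\tptl$. A straightforward structural induction on $\varphi$, carrying the invariant $\rho, i, \nu \models T(\varphi)$ iff $\rho, i \models \varphi$ (the valuation $\nu$ is irrelevant because every free occurrence of $x$ sits under a freshly introduced $x.\,$), then yields $L(T(\varphi)) = L(\varphi)$.

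For strictness in (i), I would exhibit a property expressible in non adjacent 1-$\tptl[\until,\since]$ but not in $\mitl[\until,\since]$. The natural candidate is the ordering property already flagged in the introduction: $\fut\, x.\big(a \wedge \fut(b \wedge T-x \in [1,2] \wedge \fut(c \wedge T-x \in [1,2]))\big)$, which uses only the single non punctual interval $[1,2]$ within the scope of $x$ and is therefore non adjacent. Its inexpressibility in $\mtl[\until,\since]$ (hence a fortiori in $\mitl[\until,\since]$) is cited to \cite{Patricia,simoni}, so I would simply invoke that result; if a self-contained argument is wanted, one can run the standard EF/bisimulation game argument separating timed words that agree on all $\mtl$ subformulas up to the relevant modal depth but differ on the $b$-before-$c$ ordering inside the $[1,2]$ window.

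For (ii), I would recall the definitions of the counting modality $C^{\ge n}_I \varphi$ (``at least $n$ positions in the interval $I$ from now satisfy $\varphi$'') and the Pnueli modality $\mathrm{Pn}_I(\varphi_1,\dots,\varphi_n)$ (``there exist positions $i_1 < \cdots < i_n$ all within $I$ with $\rho, i_j \models \varphi_j$''), for $I$ non punctual, following \cite{rabin,rabinovichY}. Both are expressed by a single freeze followed by a nested future/until pattern in which every clock constraint reuses the same non punctual $I$: e.g. $\mathrm{Pn}_I(\varphi_1,\dots,\varphi_n) \equiv x.\,\fut\big((\varphi_1 \wedge T-x\in I) \wedge \fut((\varphi_2 \wedge T-x \in I) \wedge \cdots)\big)$, and counting is the special case where all $\varphi_j$ coincide and strictness of the chain is enforced by inserting $\oplus$ or by demanding $n$ distinct witnesses via $\until$. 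Since only the interval $I$ (and possibly $[0,0]$, which is allowed) appears under the freeze, the set of intervals used is a singleton non punctual set, hence non adjacent. Correctness is again a routine unfolding of the $\tptl$ semantics.

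The main obstacle is the strictness claim in (i): everything else is a direct translation plus bookkeeping, but separating non adjacent 1-$\tptl[\until,\since]$ from $\mitl[\until,\since]$ requires either a clean appeal to the known $\mtl$-inexpressibility results \cite{Patricia,simoni} (the route I would take) or, failing that, a carefully designed family of timed-word pairs together with an argument that no $\mitl$ formula of bounded size distinguishes them. I would lean on the cited results to keep the proof short, noting only that the witness formula is manifestly non adjacent.
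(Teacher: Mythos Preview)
Your proposal is correct and follows essentially the same approach as the paper: a compositional syntactic translation for the containment, the same inexpressibility result from \cite{simoni} for strictness (the paper uses the open interval $(1,2)$ rather than $[1,2]$, but this is immaterial), and the same nested-$\fut$ encoding with a single repeated interval for the Pnueli and counting modalities.

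One point you gloss over that the paper handles explicitly: since non adjacent 1-$\tptl$ is defined only on formulae in negation normal form, the translation must also cover the dual modalities $\sbf_I$ and $\sbm_I$ that arise when an $\mitl$ formula is put in NNF. There the clock constraint becomes the negated one, which in NNF is the disjunction $T{-}x \in [0,l) \vee T{-}x \in [u,\infty)$, so the freeze scope now contains \emph{two} intervals rather than one. The paper gives the translation $\sbf_{[l,u)}(\varphi) \mapsto x.\sbf\big(T{-}x \in [0,l) \vee T{-}x \in [u,\infty) \vee T(\varphi)\big)$ and observes that $\{[0,l),[u,\infty)\}$ is non adjacent precisely because $l<u$ (non punctuality of $I$). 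Your sketch implicitly relies on this but does not state it; you should add this case to make the induction complete.
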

The straightforward translation of MITL into TPTL in fact gives rise to non-adjacent 1-$TPTL$ formula. Let $\widehat{I}$ abbreviate $T-x \in I$. E.g. MITL formula $a \until_{[2,3]} (b \until_{[3,4]} c)$ translates to $x.(a \until (\widehat{[2,3]} \land x.(b \until (\widehat{[3,4]} \land c)))$. It has been previously shown that $\fut [x.(a \wedge \fut(b \wedge \widehat{(1,2)} \wedge \fut(c \wedge \widehat{(1,2)})))]$, which is in fact a formula of non-adjacent 1-$TPTL$, is inexpressible in $\mtl[\until,\since]$\cite{simoni}. The Pnueli modality $\mathsf{Pn}_I(\phi_1, \ldots, \phi_k)$ expresses that there exist positions $i_1 \leq \ldots \leq i_k$ within (relative) interval $I$ where each $i_j$ satisfies $\phi_j$.  This is equivalent to the non-adjacent 1-$TPTL$ formula
$x. (\fut(\hat{I} \land \phi_1 \land \fut (\hat{I} \land \phi_2 \land \fut( \ldots ))))$. Similarly the (simpler) counting modality can also be expressed.
\\\noindent \textbf{Pnueli $\mathsf{EMTL}$}: There have been several attempts to extend logic $\mtl[\until]$ with regular expression/automaton
modalities \cite{Wilke,KKP17,F18,H19}. We use a generalization of these existing modalities to give the logic $\pnregmtl$. For any finite automaton $A$, let $L(A)$ denote the language of $A$.

Given a finite alphabet $\Sigma$, formulae of $\pnregmtl$ have the following syntax: \\
$\varphi::{=}a~|\varphi \wedge \varphi~|~\neg \varphi~| \fregm^k_{I_1,\ldots,I_k} (\re_1,\ldots, \re_{k+1})(S)~|~ 
\sregm^k_{I_1,\ldots,I_k} (\re_1,\ldots,\re_{k+1})(S)$ where $a \in \Sigma$, 
$I_1, I_2, \ldots I_k \in \intintervaln$ and $\re_1, \ldots \re_{k+1}$ are automata over $2^S$ where $S$ is a set of formulae from $\pnregmtl$. $\fregm^k$ and $\sregm^k$ are the new modalities called future and past {\bf Pnueli Automata} Modalities, respectively, where $k$ is the arity of these modalities.

Let $\rho {=} (a_1, \tau_1),\ldots (a_n, \tau_n) \in T\Sigma^*$, $x,y \in dom(\rho)$, $x{\le} y$ and $S {=} \{\varphi_1,\ldots, \varphi_n\}$ be a given set of $\pnregmtl$ formulae. Let $S_i$ be the exact subset of formulae from $S$ evaluating to true 
at $\rho, i$, and let  $\mathsf{Seg^+}({\rho},{x},{y},S)$ and $\mathsf{Seg^{-}}({\rho},{y},{x},S)$
be the  untimed words $S_x S_{x+1} \ldots  S_y$ and $S_y  S_{y{-}1} \ldots  S_x$ respectively.   
Then, the satisfaction relation for $\rho,i_0$ satisfying a $\pnregmtl$ formula $\varphi$ is defined recursively as follows:\\
$\bullet$ $\rho,i_0{\models}\fregm^k_{I_1,\ldots,I_k}(\re_1,\ldots,\re_{k+1})(S)$ iff 
 ${\exists} {i_0{ {\le} }i_1{\le} i_2 \ldots {\le} i_k {\le} n}$ s.t.\\ $\bigwedge \limits_{w{=}1}^{k}{[(\tau_{i_w} {-} \tau_{i_0} \in I_w)}
 \wedge \mathsf{Seg^+}(\rho, i_{w{-}1}, i_w,S) \in L({\re_w})] \wedge 
 \mathsf{Seg^+}(\rho, i_{k}, n,S) \in L({\re_{k+1}}) $\\
$\bullet$ $\rho,i_0 \models \sregm^k_{I_1,I_2,\ldots,I_k} (\re_1,\ldots,\re_k, \re_{k+1})(S)$ iff  
${\exists} i_0 {\ge} i_1 {\ge} i_2 \ldots {\ge} i_k {\ge} n$ s.t. \\
$\bigwedge \limits_{w{=}1}^{k}[(\tau_{i_0} {-} \tau_{i_w} \in I_w)
 \wedge \mathsf{Seg^{-}}(\rho, i_{w{-}1}, i_w,S) \in L({\re_{w}})] \wedge 
 \mathsf{Seg^{-}}(\rho, i_{k}, n,S) \in L({\re_{k+1}})$. 
\smallskip 
Language of any $\pnregmtl$ formula $\varphi$, as $L(\varphi) = \{\rho | \rho,1 \models \varphi\}$. The Pointed Language of $\varphi$ is defined as $L_{pt}(\varphi) = \{\rho,i | \rho, i \models \varphi\}$.
Given a $\pnregmtl$ formula $\varphi$, its arity is the maximum number of intervals appearing in any $\fregm, \sregm$ modality of $\varphi$. For example, the arity of $\varphi=\fregm^2_{I_1,I_2}(\re_1,\re_2, \re_3)(S_1) \wedge \sregm^1_{I_1} (\re_1,\re_2)(S_2)$ for some sets of formulae $S_1, S_2$ is 2. For the sake of brevity,  $\fregm^k_{I_1,\ldots,I_k}(\re_1,\ldots,\re_{k})(S)$ denotes $\fregm^k_{I_1,\ldots,I_k}(\re_1,\ldots,\re_{k}, \re_{k+1})(S)$ where automata $\re_{k+1}$ accepts all the strings over $S$. 
We define {\bf non-adjacent $\pnregmtl$}, as a subclass where every modality $\fregkm$ and $\sregkm$ is such that $\{\mathsf{I_1, \ldots, I_k}\}$ is a non-adjacent set of intervals.

$\emitl$ of \cite{Wilke} (and variants of it studied in \cite{KKP17}\cite{Kkp18} \cite{F18}\cite{H19}) are special cases of the non-adjacent $\pnregmtl$ modality where the arity is restricted to 1 and the second automata in the argument accepts all the strings. Hence, automaton modality of \cite{Wilke} is of the form $\freg(A)(S)$. 
Let $\emitlinf$ denote the logic $\emitl$ extended with $\fregm$ and $\sregm$ modality where the timing intervals are restricted to be of the form $\langle l, \infty)$ or $\langle 0 , u \rangle$.

We conclude this section defining size of a temporal logic formula.
\\\noindent  \textbf{Size of Formulae}. Size of a formula $\varphi$ denoted by $|\varphi|$ is a measure of how many bits are required to store it. The size of a $\tptl$ formula is defined as the sum of the number of $\until$, $\since$ and Boolean operators and freeze quantifiers in it. For $\pnregmtl$ formulae, $|op|$ is defined as the number of Boolean operators and variables used in it. $|(\fregk_{I_1,\ldots,I_k}(\re_1, \ldots, \re_{k+1})(S)|{=}\sum \limits_{\varphi \in S}(|\varphi|)+ |\re_1|{+}\ldots{+}|\re_{k+1}|$ where $|\re|$ denotes the size of the automaton $\re$ given by sum of number of its states and transitions.

\section{Anchored Interval Word Abstraction}
\label{sec:interval word}
All the logics considered here have the feature that a sub-formula asserts timing constraints on various positions relative to an anchor position; e.g. the position of freezing the clock in TPTL. Such constraints can be symbolically represented as an interval word with a unique anchor position and all other positions carry a set of time intervals constraining the time stamp of the position relative to the time stamp of the anchor. See interval word $\kappa$ in Figure  \ref{fig:norm_example}.We now define these interval words formally.
Let $I_{\nu}\subseteq \intinterval$. 
An $I_{\nu}$-interval word over $\Sigma$ is a word $\kappa$ of the form 
$a_1 a_2  \dots a_n \in 
(2^{\Sigma \cup \{\anch\} \cup  I_\nu})^*$. 
There is a unique $i \in dom(\kappa)$ called the \emph{anchor} 
of $\kappa$ and denoted by $\anch(\kappa)$. At the anchor position $i$,   $a_i \subseteq \Sigma \cup \{\anch\}$, 
and $\anch \in a_i$.  
Let $J$ be any interval in $\I_\nu$. We say that a point $i \in dom(\kappa)$ is a $J$-time restricted point if and only if, $J \in a_i$. 
$i$ is called time restricted point if and only if either $i$ is $J$-time restricted for some interval $J$ in $I_\nu$ or $\anch \in a_i$. 
\smallskip

\noindent \textbf{From $I_\nu$-interval word to Timed Words} : 
Given a $I_\nu$-interval word $\kappa=a_1 \dots a_n$ over $\Sigma$ and 
a timed word $\rho=(b_1, \tau_1)\dots (b_m, \tau_m)$, 
the pointed timed word $\rho, i$ is consistent with $\kappa$ iff 
$dom(\rho){=}dom(\kappa)$, $i{=}\anch(\kappa)$, for all $j\in dom(\kappa)$, $b_j=a_j\cap \Sigma$ and for $j \ne i$, $I \in a_j \cap I_\nu$ implies $\tau_j - \tau_i \in I$. 
Thus, $\kappa$ and $\rho,i$ agree on propositions at all positions, and the time stamp of a non-anchor position $j$ in $\rho$ satisfies every interval constraint in $a_j$ relative to $\tau_i$, the time stamp of anchor position.
$\mathsf{Time(\kappa)}$ denotes the set of all the pointed timed words consistent with a given interval word $\kappa$, and  $\mathsf{Time(\Omega)}{=}\bigcup \limits_{\kappa \in \Omega} (\mathsf{Time(\kappa)})$ for a set of interval words $\Omega$. Note that the ``consistency relation'' is a many-to-many relation. 
\\\noindent {\it Example.}
Let 
$\kappa{=}{\{a,b, (-1,0)\} \{b, (-1,0)\} 
\{a, \anch\} \{b,[2, 3]\}}$ be an interval word over the set of intervals 
$\{(-1,0),[2,3]\}$. 
Consider timed words $\rho$ and $\rho'$ s.t.
 $\rho{=}{(\{a,b\}, 0)(\{b\}, .5), (\{a\}, .95) (\{b\}, 3)}$,$\rho'{=}{(\{a,b\},0)(\{b\},0.8)(\{a\},0.9)(\{b\},2.9)}$. 
\\Then $\rho,3$ as well as $\rho',3$ are  consistent with $\kappa$ while $\rho,2$ is not.  Likewise, for the timed word $\rho'' {=} (\{a,b\}, 0), (\{b\}, 0.5), (\{a\}, 1.1) (\{b\}, 3)$,  $\rho'', 3$ is not consistent with $\kappa$ as $\tau_1-\tau_3 \notin (-1,0)$, as also $\tau_4-\tau_3 \notin [2,3]$.

Let $I_\nu, I_\nu' \subseteq \intinterval$. Let $\kappa=a_1\dots a_n$ and $\kappa'=b_1 \dots b_m$ be $I_\nu$ and $I_\nu'$-interval words, respectively. 
 $\kappa$ is \emph{similar} to $\kappa'$, denoted by $\kappa \sim \kappa'$ if and only if, \\(i) $dom(\kappa){=}dom(\kappa')$, (ii) for all $i \in dom(\kappa)$, $a_i \cap  \Sigma{=}b_i\cap \Sigma$, and \\(iii)$\anch(\kappa)=\anch(\kappa')$. Additionally,
 $\kappa$ is \emph{congruent} to $\kappa'$, denoted by $\kappa \cong \kappa'$, iff $\mathsf{Time}(\kappa){=}\mathsf{Time}(\kappa')$. I.e., $\kappa$ and $\kappa'$ abstract the same set of pointed timed words. 
\\\noindent \textbf{Collapsed Interval Words}. The set of interval constraints at a position can be collapsed into a single interval by taking the intersection of all the 
intervals at that position giving a Collapsed Interval Word.Given an $I_{\nu}$-interval word $\kappa{=}a_1 \dots a_n$, let $\Ii_j = a_j \cap I_{\nu}$.
Let $\kappa'{=}\col(\kappa)$ be the word obtained by replacing $\Ii_j$ with $\bigcap_{I \in \Ii_j} I$ in $a_j$, for all $j{\in}dom(\kappa)$. Note that $\kappa'$ is an interval word over $\Clcap(I_\nu){=}\{I | I{=}\bigcap I', I' \subseteq I_\nu\}$.
 Note that if  for any j, the set $\Ii_j$ contains two disjoint intervals (like $[1,2]$ and $[3,4]$) then $\col(\kappa)$ is undefined.
It is clear that  $\mathsf{Time}(\kappa){=}\mathsf{Time}(\kappa')$.  An interval word $\kappa$ is called \emph{collapsed} iff $\kappa{=}\col(\kappa)$. 
\begin{figure}[t]
    \centering
  \scalebox{0.6}{  \includegraphics{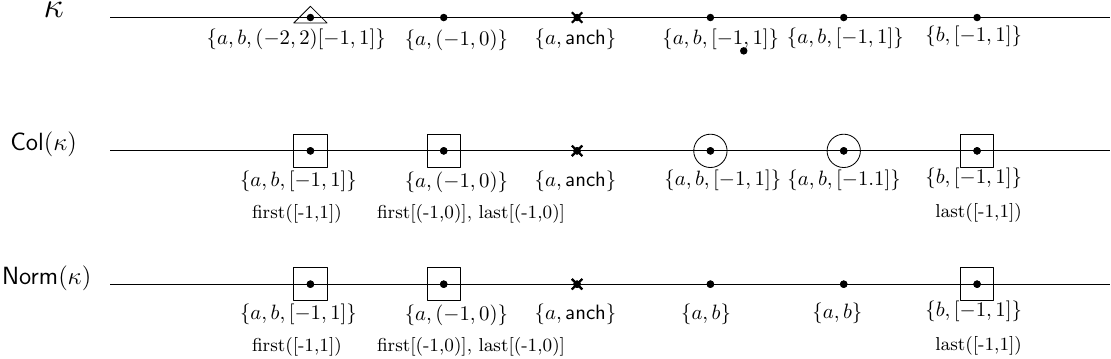}}
    \caption{Point within the triangle has more than one interval. The encircled points are intermediate points and carry redundant information. The required timing constraint is encoded by first and last time restricted points of all the intervals (within boxes). }
    \label{fig:norm_example}
\end{figure}
\\\noindent \textbf{Normalization of Interval Words} An interval $I$ may repeat many times in a collapsed interval word $\kappa$. Some of these occurrences are redundant and we can only keep the first and last occurrence of the interval in the  normalized form of $\kappa$. See Figure  \ref{fig:norm_example}.
For a collapsed interval word $\kappa$ and any $I \in I_\nu$, let $\first(\kappa, I)$ and $\last(\kappa,I)$  denote the positions of first and last occurrence of $I$ in $\kappa$. If $\kappa$ does not contain any occurrence of $I$,  then both $\first(\kappa, I){=}\last(\kappa,I){=}\bot$. We define, $\boundaryint(\kappa){=}\{i | i{\in}dom(\kappa){\wedge}\exists I{\in}I_\nu$ s.t. $(i{=}\first(\kappa,I) {\vee}i{=}\last (\kappa,I){\vee}i{=}\anch(\kappa)) \}$  

The normalized interval word corresponding to $\kappa$, denoted $\normalize(\kappa)$, is defined as 
$\kappa_{nor}=b_1 \dots b_m, $ such that (i) $\kappa_{nor} \sim \col(\kappa)$, (ii) for all $I \in \Clcap(I_\nu)$, $\first(\kappa, I){=}\first(\kappa_{nor}, I)$,  $\last(\kappa, I){=}\last (\kappa_{nor}, I)$, and for all points $j \in dom(\kappa_{nor})$ with $\first(\kappa, I) < j < \last(\kappa,I)$,
$j$ is not a $I$-time constrained point. See Figure  \ref{fig:norm_example}.
Hence, a normalized word is a collapsed word where for any $J\in \Clcap(I_\nu)$ there are at most two $J$-time restricted points.  This is the key property which will be used to reduce 1-$\tptl$ to a finite length $\pnregmtl$ formulae.

\begin{lemma}
\label{lem:normalize}
$\kappa \cong \normalize(\kappa)$.  Note, $\normalize(\kappa)$ has at most $2{\times}|I_\nu|^2{+}1$ restricted points.
\end{lemma}
The proof follows from the fact that $\kappa \cong \col(\kappa)$ and since $ \col(\kappa) \sim \normalize(\kappa)$, the set of timed words consistent with any of them will have identical untimed behaviour.
     For the timed part, the key observation is as follows. For some interval $I \in I_{\nu}$, let $i'{=}\first(\kappa,I), j'{=}\last(\kappa, I)$. Then  for any $\rho,i$ in $\mathsf{Time}(\kappa)$, points $i'$  and $j'$ are within the interval $I$ from point $i$. Hence, any point $i' \le i'' \le j'$ is also within interval $I$ from $i$. Thus, the interval  $I$ need not be explicitly mentioned at intermediate points. 
  The full proof can be found in the full version. 


\section{1-$\tptl$  to $\pnregmtl$}
\label{sec:tptltopnregmtl}
In this section, we reduce a 1-$\tptl$ formula into an equisatisfiable $\pnregmtl$ formula. First, we consider 1-$\tptl$  formula with a single outermost freeze quantifier (call these simple $\tptl$ formulae) and give the reduction. More complex formulae can be handled by applying the same reduction recursively as shown in the first step. 
For any set of formulae $S$, let $\bigvee S$ denote $\bigvee \limits_{s\in S}s$.
\\A $\tptl$ formula is said to be \emph{simple} if it is of the form $x.\varphi$ where, $\varphi$ is a 1-$\tptl$  formula with no freeze quantifiers. Let $\I_\nu \subseteq \intinterval$. Let $\psi{=}x.\varphi$  be a simple $\I_\nu$-$\tptl$  formula and let $\Clcap(\I_\nu){=}I_\nu$. 
We construct a $\pnregmtl$ formula $\phi$, such that $\rho,i \models \psi{\iff}\rho, i \models \phi$. We break this construction into the following steps:
\\1) We construct an $\ltl$ formula $\alpha$ s.t. $L(\alpha)$ contains only $I_\nu$-interval words and $\rho,i \models \psi$ iff $\rho, i \in \mathsf{Time}(L(\alpha))$. Let $A$ be the NFA s.t. $L(A)=L(\alpha)$.
Let $W$ be the set of all  $I_\nu$-interval words. 
\\2) We partition $W$ into finitely many types, each \emph{type}, capturing a certain relative ordering between first and last occurrences of intervals from $I_\nu$ as well as $\anch$.  
Let $\mathcal{T}(I_\nu)$ be the finite set of all types.  
\\3) For each type $\seq \in \mathcal{T}$, we construct an NFA $A_{\seq}$ such that $L(A_\seq)=\normalize(L(A) \cap W_\seq)$, where $W_\seq$ is the set of all the $I_\nu$-interval words of type $\seq$. 
\\4)For every type $\seq$, using the $A_{\seq}$ above, we construct a $\pnregmtl$ formula $\phi_\seq$ such that, $\rho, i \models 
\phi_\seq$ if and only if $\rho, i \in \mathsf{Time}(L(A_\seq))$. 
The desired $\phi = \bigvee \limits_{\seq \in \mathcal{T}(I_\nu)} ~\phi_\seq$.
 Hence, $L_{pt}(\phi){=}\bigcup  \limits_{\seq \in \mathcal{T}} \mathsf{Time}(L(A_\seq))
       {=}\mathsf{Time}(L(A)) = L_{pt}(\psi)$.
\\\noindent\textbf{1a) Simple $\tptl$  to $\ltl$ over Interval Words}:
As above, $\psi{=}x.\varphi$. Consider an $\ltl$ formula $\alpha{=}\fut[\ltl(\varphi) \wedge \anch \wedge \neg (\fut (\anch) \vee \past (\anch))] \wedge \sbf (\bigvee \Sigma)$ over $\Sigma'{=}\Sigma \cup \I_\nu \cup \{\anch\}$, where $\ltl(\varphi)$ is the the LTL formula obtained from $\varphi$ by replacing clock constraints $T-x \in I$ with $I$ and $x-T \in I$ with $-I$.  Then all words in $L(\alpha)$ are $\I_\nu$-interval words.
\begin{theorem}
\label{thm:tptl-ltl}
For any timed word $\rho$, $i\in dom(\rho)$, and any clock valuation $v$, $\rho, i, v \models \psi{\iff}\rho, i \in \mathsf{Time}(L(\alpha))$. 
\end{theorem}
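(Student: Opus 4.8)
Because $\psi = x.\varphi$ begins with a freeze quantifier, the valuation $v$ is overwritten by $\tau_i$ before $\varphi$ is evaluated, so it is enough to prove $\rho, i, v[x\leftarrow\tau_i] \models \varphi$ iff $\rho, i \in \mathsf{Time}(L(\alpha))$, after which the quantification over $v$ is automatic. By definition $\rho,i \in \mathsf{Time}(L(\alpha))$ says: there is some $\I_\nu$-interval word $\kappa = a_1\cdots a_n \in L(\alpha)$ with $\rho,i$ consistent with $\kappa$. Parsing $\alpha$, membership $\kappa \in L(\alpha)$ decomposes into (i) a \emph{well-formedness} part: the conjuncts built around $\anch$, $\neg(\fut(\anch)\vee\past(\anch))$ and $\sbf(\bigvee\Sigma)$ say that every position of $\kappa$ carries a nonempty $\Sigma$-label and that $\kappa$ has a unique anchor, at which the rest of $\alpha$ is asserted; and (ii) a \emph{content} part: $\ltl(\varphi)$ holds at that anchor. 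The technical core is a single structural induction on subformulae $\theta$ of $\varphi$ relating LTL-satisfaction of $\ltl(\theta)$ over an interval word to TPTL-satisfaction of $\theta$ over a consistent timed word; I would prove it once and apply it with full (biconditional) strength in one direction and in one-sided form in the other.

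For ($\Rightarrow$), assume $\rho, i, v[x\leftarrow\tau_i] \models \varphi$ with $\rho = (\sigma_1,\tau_1)\cdots(\sigma_n,\tau_n)$, and build the \emph{canonical} interval word $\kappa^\star = a_1\cdots a_n$ by $a_i = \sigma_i \cup \{\anch\}$ and $a_j = \sigma_j \cup \{I \in \I_\nu \mid \tau_j - \tau_i \in I\}$ for $j \ne i$. Then $dom(\kappa^\star) = dom(\rho)$, $\anch(\kappa^\star) = i$, and $\rho,i$ is consistent with $\kappa^\star$, so $\rho,i \in \mathsf{Time}(\kappa^\star)$; the well-formedness conjuncts of $\alpha$ visibly hold. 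For the content part I would prove, by induction on a subformula $\theta$ of $\varphi$: for every $j \in dom(\rho)$, $\kappa^\star, j \models \ltl(\theta)$ iff $\rho, j, v[x\leftarrow\tau_i] \models \theta$. The atoms $a,\neg a$ use $a_j \cap \Sigma = \sigma_j$; a constraint atom $T-x\in I$ (resp.\ $x-T\in I$) translates to the letter $I$ (resp.\ the letter for the reversed interval), and for $j\ne i$ the equivalence is exactly how $a_j$ was defined; Boolean and temporal cases are immediate since $\kappa^\star$ and $\rho$ have identical domain and untimed shape, so the LTL and TPTL modalities range over the same positions. (We may assume that no constraint atom occurs at the outermost propositional level of $\varphi$: evaluated at the anchor such an atom has the fixed truth value $[\,0 \in I\,]$ and can be replaced by $\top$ or $\bot$.) Instantiating $\theta = \varphi$, $j = i$ gives $\kappa^\star \in L(\alpha)$, hence $\rho, i \in \mathsf{Time}(L(\alpha))$.

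For ($\Leftarrow$), fix $\kappa = a_1\cdots a_n \in L(\alpha)$ with $\rho,i$ consistent with $\kappa$. Well-formedness forces a unique anchor, which by consistency equals $i$, and yields $\kappa, i \models \ltl(\varphi)$. Now I would run the same induction but only in the direction needed: $\kappa, j \models \ltl(\theta) \Rightarrow \rho, j, v[x\leftarrow\tau_i] \models \theta$ for all $j$. The only delicate atom is again $T-x\in I$: $\kappa, j \models I$ means $I \in a_j$, and for $j \ne i$ consistency yields $\tau_j - \tau_i \in I$. The point is that an \emph{arbitrary} consistent $\kappa$ may carry strictly fewer interval letters than $\kappa^\star$, so only the ``$\Rightarrow$'' half of the atom equivalence is available; but since $\varphi$ is in negation normal form the constraint atoms occur only positively in $\ltl(\varphi)$, and LTL is monotone in positively-occurring propositions, so this one-sided implication propagates through all connectives. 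Instantiating $\theta = \varphi$, $j=i$ yields $\rho, i, v[x\leftarrow\tau_i] \models \varphi$, i.e.\ $\rho, i, v \models \psi$.

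The main obstacle is getting this reverse induction right: one must notice that the biconditional available in the canonical case degrades to a one-sided implication for a general consistent interval word, and that negation normal form (positivity of the constraint atoms) is exactly what rescues it. The remaining work — checking that $L(\alpha)$ contains only genuine $\I_\nu$-interval words, that the consistency relation pins their unique anchor to position $i$, and the handling of the degenerate top-level constraint atom at the anchor — is routine but must be done carefully.
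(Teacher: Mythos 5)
Your proof is correct in substance and, for one of the two directions, takes a genuinely different and arguably cleaner route than the paper. The paper proves the theorem through a single two-part auxiliary induction (on subformulae $\gamma$ of $\varphi$): its part (i) is exactly your one-sided $\Leftarrow$ induction, but for the $\Rightarrow$ direction the paper only establishes the \emph{existence} of some suitable interval word, argues by contradiction, and has to do real work to combine the different interval words produced by different subformulae at different positions --- it needs two further facts, namely that the pointwise union of two similar interval words has $\mathsf{Time}$-set equal to the intersection of their $\mathsf{Time}$-sets, and that enlarging the interval labels preserves satisfaction of $\ltl(\gamma)$ (which again rests on negation normal form). Your canonical maximal word $\kappa^\star$, recording at each position the full set of intervals containing $\tau_j-\tau_i$, collapses all of this: the atom-level equivalence becomes a genuine biconditional by construction, the induction needs no combination step, and the NNF/monotonicity observation is invoked only where it is truly unavoidable, in the $\Leftarrow$ direction for an arbitrary consistent $\kappa$ that may carry fewer interval letters than $\kappa^\star$. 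You have correctly isolated that as the one delicate point.

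One wrinkle: the paper's definition of an interval word forbids interval letters at the anchor position, and its consistency relation constrains interval letters only at positions $j\neq i$. So if a constraint atom $T-x\in I$ with $0\in I$ is evaluated \emph{at} the anchor, your biconditional for $\kappa^\star$ fails there ($\rho$ satisfies the atom, $\kappa^\star$ cannot carry the letter $I$ at $i$). Your proposed repair --- eliminating constraint atoms at the outermost propositional level --- is not quite sufficient, because a nested atom can still end up being evaluated at the anchor (e.g.\ through a combination such as $\fut\,\past$, whose inner existential point may coincide with $i$). The clean fix is either to permit interval letters at the anchor and extend consistency to $j=i$ (where $I\in a_i$ simply asserts $0\in I$), or to treat the anchor position as a special case throughout the induction. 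This is a minor definitional edge case --- the paper's own induction silently skips over it in exactly the same place --- but since you flagged it, you should close it properly rather than with the stated WLOG.
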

{\it Proof Sketch}. Note that for any timed word $\rho$ and $i \in dom(\rho)$, $\rho,i, [x \leftarrow\tau_i] \models \varphi $ is equivalent to $\rho, i \models \psi$ since $\psi=x.\varphi$. 
Let $\kappa$ be any $\I_\nu$-interval word over $\Sigma$ with $\anch(\kappa){=}i$. 
It can be seen that  if $\kappa, i \models \ltl(\varphi)$ then for all 
$\rho,i \in \mathsf{Time}({\kappa})$ we have $\rho,i \models \psi$.
Likewise, if $\rho,i \models \psi$ for a timed word $\rho$, then there exists some $\I_\nu$-interval word over $\Sigma$ such that $\rho, i \in \mathsf{Time}(\kappa)$ and $\kappa, i \models \ltl(\varphi)$. 

Illustrated on an example, if $\psi=x.\varphi$ and $\varphi{=}\fut (x \in I \wedge a)$. 
Then $\rho,i \models \varphi$ iff there exists a point $j$ within an interval $I$ from $i$, where $a$ holds. 
Now consider $\alpha{=}\fut^{w}[(I \wedge a)  \wedge \anch \wedge \neg (\fut (\anch) \vee \past (\anch))] \wedge \wB (\bigvee \Sigma)$ whose language consists of interval words $\kappa$ such that there is a point ahead of the anchor point $i$ where both $a$ and $I$ holds. 
Clearly, words in $\mathsf{Time}({\kappa})$ are such that they contain a point $j>i$ within an interval $I$ from point $i$ where $a$ holds. 
Hence, $\rho, i \models \psi$ if and only if $\rho, i \in \mathsf{Time}(\{\kappa ~\mid~ \kappa, i \models \ltl(\varphi)\})$. Moreover, $\kappa \in L(\alpha)$ if and only if $\kappa, i \models \ltl(\varphi)$ and $\anch(\kappa){=}i$. The 
full proof is in the full version. 
\smallskip 

\noindent\textbf{1b) $\ltl$ to NFA over Collapsed Interval Words}.
It is known that for any $\ltl[\until,\since]$ formula, one can construct an equivalent NFA with at most exponential number of states \cite{gastin-oddoux}. We reduce the LTL formula $\alpha$ to an equivalent NFA $A_\alpha{=}(Q, \init, 2^{\Sigma'}, \delta', F)$ over $I_{\nu}$-interval words, where $\Sigma'=2^{\Sigma \cup I_{\nu} \cup \{\anch\}}$.  
From $A_{\alpha}$, we  construct an automaton $A{=}(Q, \init, 2^{\Sigma'}, \delta, F)$  s.t. $L(A){=}\col(L(A_\alpha))$. 
Automaton  $A$ is obtained from $A_\alpha$ by replacing  the set of intervals $\I$ on the transitions by the single interval $\bigcap \I$. In case $\exists I_1, I_2 \in \I$ s.t. $ I_1 \cap I_2 = \emptyset$ (i.e. with contradictory interval constraints), the transition is omitted in $A$.
This gives $L(A){=}\col(L(A_\alpha))$. 
\\
\smallskip 
\noindent\textbf{2) Partitioning Interval Words}.
We discuss here how to partition $W$, the set of all collapsed $I_\nu$-interval words, into finitely many classes. Each class is characterized by its {\bf type} given as a  finite sequences $\seq$ over $I_{\nu} \cup \{\anch\}$.
For any collapsed $w \in W$, its type $\seq$  gives an ordering between $\anch (w)$, $\first(w,I)$ and $\last(w,I)$ for all $I \in I_\nu$, such that, any $I \in I_{\nu}$ appears at most twice 
and $\anch$ appears exactly once in $\seq$. For instance, $\seq=I_1 I_1 \anch  I_2 I_2$ is a sequence 
different from $\seq'=I_1I_2 \anch I_2I_1$ since the relative orderings between the first and last occurrences of $I_1, I_2$ and $\anch$ differ in both. Let the set of types $\mathcal{T}(I_\nu)$ be the set of all such sequences;  by definition, $\mathcal{T}(I_\nu)$ is finite. 
Given $w \in W$, let $\boundaryint(w){=}\{i_1, i_2, \ldots, i_k\}$ be the 
positions of $w$ which are either $\first(w,I)$ or $\last(w,I)$ for some $I \in I_{\nu}$ 
or is $\anch(w)$. Let $w\downarrow_{\boundaryint(w)}$ be the subword of $w$ obtained by projecting $w$ to the positions in $\boundaryint(w)$, restricted to the sub alphabet $2^{I_{\nu}}\cup\{\anch\}$. For example, \\$w=\{a,I_1\}\{b,I_1\}\{c,I_2\}\{\anch,a\}\{b,I_1\}\{b,I_2\}\{c,I_2\}$
gives $w\downarrow_{\boundaryint(w)}$ as \\ $I_1I_2\anch I_1I_2$. 
Then $w$ is in the partition $W_{\seq}$ iff  $w\downarrow_{\boundaryint(w)}=\seq$. Clearly, $W=\bigcup_{\seq \in \mathcal{T}(I_\nu)}W_{\seq}$.
Continuing with the example above, $w$ is a collapsed $\{I_1, I_2\}$-interval word over $\{a,b,c\}$, with $\boundaryint(w){=}\{1,3,4,5,7\}$, and $w{\in} W_{\seq}$ for $\seq=I_1I_2\anch I_1 I_2$, while 
$w\notin W_{\seq'}$ for $\seq'=I_1I_1\anch I_2 I_2$. 
\smallskip 

\noindent \textbf{3) Construction of NFA for each type}:
Let $\seq$ be any sequence in $\mathcal{T}(I_\nu)$. In this section, given $A{=}(Q, \init, 2^{\Sigma'}, \delta, F)$ as constructed above, we construct an NFA 
$A_\seq{=}(Q \times \{1,2,\ldots |\seq|+1\}\cup \{\bot\}, (\init, 1), 2^{\Sigma'}, \delta_\seq, F \times \{|\seq|+1\})$ such that $L(A_\seq){=}\normalize(L(A) \cap W_\seq)$. Thus, $\bigcup_{\seq\in \mathcal{T}(I_\nu)}L(A_\seq) = \normalize(L(A))$. Thus, $\bigcup_{\seq\in \mathcal{T}(I_\nu)} \mathsf{Time}(L(A_\seq)) = \mathsf{Time}(\normalize(L(A))) = \mathsf{Time}(L(A)) = L(\psi)$.
 Intuitively, the second element of the state makes sure that only normalized words of type $\seq$ are accepted.
From $(q,j)$, $A_\seq$ is allowed to read a set $S \subseteq \Sigma$ (containing no time interval or $\anch$ and hence an unrestricted point) or it can read a set $S\cup \{I\}$ where $S \subseteq \Sigma$ and $J = \seq[j]$ (containing time interval/anchor $\seq[j]$). In case of latter, the $A_\seq$ ends up with a state of the form $(q',j+1)$ 
if and only if there is a transition in $A$ of the form $q \stackrel{S\cup{J}}{\rightarrow}q'$. 
In case of the former, it non-determinstically proceeds to a state $(q',j)$ if and only if, in automaton $A$, there is a transition of the form $q \stackrel{S}{\rightarrow}q'$ or 
$q \stackrel{S\cup{J}}{\rightarrow}q'$ where $\first(J,w)$ has already been read and $\last (J,w)$ is yet to be read in the future (that is, $\exists j'' j', j'< j \le j'' \wedge \seq[j'] = \seq[j''] = J$).
The detailed construction as well as the proof for Lemma \ref{lem:nfatonfaseq}
can be found in the full version. 
Let $W_\seq$ denote set of $I_\nu$-interval words of type $\seq$.
 \begin{lemma}
$L(A_\seq){=}\normalize(L(A) \cap W_\seq)$. Hence, $\bigcup \limits_{\seq \in \mathcal{T}(I_\nu)} L(A_\seq) {=}\normalize(L(A))$.
\label{lem:nfatonfaseq}
\end{lemma}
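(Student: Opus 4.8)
The plan is to prove the two stated inclusions $L(A_\seq) \subseteq \normalize(L(A) \cap W_\seq)$ and $\normalize(L(A) \cap W_\seq) \subseteq L(A_\seq)$ separately, after which the second sentence follows immediately: since $W = \uplus_{\seq} W_\seq$, we get $\bigcup_\seq (L(A)\cap W_\seq) = L(A)$, and $\normalize$ distributes over unions, so $\bigcup_\seq \normalize(L(A)\cap W_\seq) = \normalize(L(A))$. The core invariant I would maintain for the second component of the state is: a run of $A_\seq$ reaches a state of the form $(q,j)$ after reading a prefix $u$ of the input iff the run has so far encountered exactly the first $j-1$ time-restricted symbols prescribed by $\seq$ (i.e. $u$ projected to the boundary alphabet $2^{I_\nu}\cup\{\anch\}$ equals the length-$(j-1)$ prefix of $\seq$), and moreover every such symbol was a \emph{single} interval (or $\anch$), never an intermediate repetition — this is exactly what forces the output to be normalized and of type $\seq$.

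For the inclusion $L(A_\seq) \subseteq \normalize(L(A)\cap W_\seq)$: take an accepting run of $A_\seq$ on a word $w$; it ends in $F\times\{|\seq|+1\}$, so by the invariant $w$ has exactly $|\seq|$ time-restricted points whose boundary projection is $\seq$, hence $w \in W_\seq$ and $w$ is normalized (each interval appears at most — in fact exactly — twice, as a first and a last occurrence, with no intermediate time-restricted points). Projecting the run to the first component $q$ and observing that every transition of $A_\seq$ on a symbol $S$ corresponds (by construction, using the side-condition that the ``active'' interval $J$ has had its first occurrence read and its last not yet read) to a transition $q \xrightarrow{S\cup\{J\}} q'$ of $A$, we assemble a word $w' \in L(A)$ with $w' \sim w$ and identical boundary structure; since $w$ is normalized and $w' \cong w$ on the relevant points, $w = \normalize(w')$ with $w' \in L(A)\cap W_\seq$. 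For the reverse inclusion: take $w \in \normalize(L(A)\cap W_\seq)$, so $w = \normalize(v)$ for some $v \in L(A)\cap W_\seq$ with an accepting run in $A$. Because $v$ has type $\seq$, the first and last occurrences of each interval in $v$ occur in the order dictated by $\seq$; $\normalize$ deletes precisely the intermediate time-restricted labels, leaving a word on which we can lift the $A$-run: walk through the $A$-run of $v$, and whenever the $A$-transition carries an interval $J$ that in $w=\normalize(v)$ survives (it is a first or last occurrence), take the corresponding $A_\seq$-transition advancing $j$, and whenever $A$ reads an interval that $\normalize$ erases, note that in $w$ that point is unrestricted and $J$ is ``active'' (its first occurrence already read, last not yet), so the unrestricted-point clause of $\delta_\seq$ provides the matching transition. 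This yields an accepting run of $A_\seq$ on $w$.

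The main obstacle I expect is the bookkeeping in the unrestricted-point clause of $\delta_\seq$: showing that whenever $\normalize$ erases an interval label $J$ from a point of $v$, that point genuinely lies strictly between $\first(v,J)$ and $\last(v,J)$, so that $J$ is ``active'' at that moment of the $A_\seq$-run and the transition $q\xrightarrow{S}q'$ with $J=\emptyset$-or-active is available — and conversely that the side-condition is never satisfied spuriously, so $A_\seq$ cannot produce words outside $W_\seq$ or non-normalized words. This is where one has to be careful that $A$ operates on \emph{collapsed} words (a single intersected interval per point), so that ``$J$ is an interval in $I_\nu$'' in the side-condition refers unambiguously to one interval, and that the type $\seq$ ranging over $I_\nu\cup\{\anch\}$ with each symbol appearing exactly twice (once for $\anch$) is consistent with the collapsed/normalized structure established in Section \ref{sec:interval word}. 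The remaining steps — distributivity of $\normalize$ and of $\cap W_\seq$ over the partition, and the routine induction establishing the state invariant — are straightforward.
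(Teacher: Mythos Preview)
Your proposal is correct and follows essentially the same approach as the paper. The paper formalizes the two inclusions via two technical lemmas (Lemmas~\ref{lem:AsubsetAseq} and~\ref{lem:AseqsubsetA}) proved by induction on the length of the prefix read, together with a proposition that $L(A_\seq)\subseteq\normalize(W_\seq)$; this is precisely the ``routine induction establishing the state invariant'' you allude to, and your identification of the ``active interval'' bookkeeping as the only nontrivial case matches exactly the paper's Case~2.2.
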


\noindent Our next step is to reduce the NFAs $A_{\seq}$ corresponding to each type $\seq$ to $\pnregmtl$. 
The  words in $L(A_{\seq})$ are all normalized, and have at most $2|I_\nu|+1$-time restricted points. Thanks to this, its corresponding timed language can be expressed using $\pnregmtl$ formulae with arity at most $2|I_\nu|$.
\smallskip 

\noindent\textbf{4) Reducing NFA of each type to $\pnregmtl$}:
Next, for each $A_\seq$ we construct $\pnregmtl$ formula $\phi_\seq$  such that, for a timed word $\rho$ with $i \in dom(\rho), \rho, i \models \phi_\seq$ iff $\rho, i {\in} \mathsf{Time}(L(A_\seq))$. For any NFA $N = (St,\Sigma, i,Fin,\Delta)$, $q \in Q$ 
$F' \subseteq Q$,  let $N[q,F'] = (St,\Sigma, q, F', \Delta)$.
For brevity, we denote $N[q,\{q'\}]$ as $N[q,q']$.
We denote by $\rev(N)$, the NFA $N'$ that accepts the reverse of $L(N)$. 
The right/left concatenation of $a{\in}\Sigma$ with $L(N)$ is denoted  $N{\cdot}a$ and  $a{\cdot}N$ respectively.  
 \begin{lemma}
\label{lem:nfaseqtopnregmtl}
We can construct a $\pnregmtl$ formula $\phi_\seq$ with arity $\le$ $|I_\nu|^2$ and size $\mathcal{O}(|A_\seq|^{|\seq|})$ containing intervals from $I_\nu$ s.t. $\rho, i \models \phi_\seq$ iff $\rho, i \in \mathsf{Time}(L(A_\seq))$. 
\end{lemma}
\begin{proof}
\begin{figure}
    \centering\scalebox{1.1}{
    \includegraphics{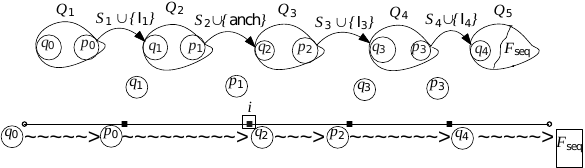}}
    \caption{Figure representing set of runs $A_{\mathsf{I_1 \anch I_3 I_4}}$ of type $Qseq$  where each $S_i \subseteq \Sigma$ and each sub-automaton $Q_i$ has only transitions
    without any intervals. Here $Qseq = T_1 T_2 T_3 T_4$, for $1\le i \le 4$, $T_i = (p_{i-1} \stackrel{S_i\cup \{I_i\}}{\rightarrow} q_i),$ $I_2 = \{\anch\}$.}
    \label{fig:nfatopnregmtl}
\end{figure}
Let $\seq{=}I_1\ I_2\ \ldots\ I_n$, and  $I_j{=}\anch$ for some $1{\le}j {\le}n$. 
 Let $\Gamma=2^{\Sigma}$ and
 \\$\Qseq{=}T_1\ T_2\ \ldots T_n$ be a sequence of transitions of $A_{\seq}$ where for any $1 \le i \le n$,  $T_i{=}p_{i-1} \stackrel{S'_{i}}{\rightarrow} q_{i}$, $S'_i=S_i \cup \{I_i\}$, $S_i \subseteq \Sigma$, 
$p_{i-1} \in Q \times \{i-1\}$, $q_{i} \in Q \times \{i\}$. Let $q_0{=}(\init, 1)$.
We define $\mathsf{R}_{\Qseq}$ as set of accepting runs containing transitions $T_1\ T_2\ \ldots T_n$. 
Hence the runs in $\mathsf{R}_{\Qseq}$ are of the following form:
\\
{$T_{0,1}~T_{0,2} \ldots T_{0,m_0}~T_1~~T_{1,1}~ \ldots T_{1,m_1}~T_{2}~~\cdots\cdots~~T_{n{-}1,1}~T_{n{-}1,2} \ldots T_{n}~~T_{n,1} \ldots  T_{n{+}1}\\$}
where the source of the transition $T_{0,1}$ is $q_0$ and the target of the transition $T_{n+1}$ is any accepting state of $A_\seq$. Moreover, all the transitions $T_{i,j}$ for $0\le i\le n$, $1\le j \le n_i$ are of the form $(p' \stackrel{S_{i,j}}{\rightarrow} q')$ where $S_{i,j} \subseteq \Sigma$ and $p',q' \in Q_{i+1}$. Hence, only $T_1, T_2, \ldots T_n$ are labelled by any interval from $I_\nu$. Moreover, only on these transitions the counter (second element of the state) increments. Let $\re_i = (Q_{i}, 2^{\Sigma}, q_{i-1}, \{p_{i-1}\}, \delta_\seq) \equiv A_{\seq}[q_{i-1},p_{i-1}]$ for $1 \le i \le n$ and \\$\re_{n+1}{=} (Q_{n+1}, 2^{\Sigma}, q_{n}, F_{\seq},\delta_{\seq}){\equiv}A[q_n,F]$. 
Let $\mathcal{W}_{Qseq}$ be set of words associated with any run in $\mathsf{R}_{Qseq}$.  In other words, any word $w$ in $\mathcal{W}_{Qseq}$ admits an accepting run on $A$ which starts from $q_0$ reads letters without intervals (i.e., symbols of the form $S \subseteq \Sigma$)ends up at $p_0$, reads $S'_1$, ends up at $q_1$ reads letters without intervals, ends up and $p_1$, reads $S'_2$ and so on.
Refer figure \ref{fig:nfatopnregmtl} for illustration.
Hence, $w \in W_{Qseq}$ if and only if $w \in L(\re_1).S'_1. L(\re_2).S'_2. \cdots. L(\re_n).S'_n.L(\re_{n+1})$.
\\ 
Let $\re'_{k}{=}S_{k-1}\cdot{\re_{k}}\cdot S_k$ for $1{\le}k{\le}n{+}1$, with $S_0{=}S_{n+1}{=}\epsilon$ \footnote{ We $\re'_k$ instead of $\re_k$ in the formulae below due to the strict inequalities in the semantics of $\pnregmtl$ modalities}.   
Let $\rho{=}(b_1, \tau_1) \ldots (b_m, \tau_m)$ be a timed word over $\Gamma$. Then
\\$\rho,i_j \in \mathsf{Time}(W_{Qseq})$ iff $\exists$  $0{\le} i_1 {\le} i_2 {\le} \ldots {\le} i_{j-1} {\le} i_j {\le} i_{j+1} {\le} \ldots {\le} i_n {\le} m$ s.t. 
\\$\bigwedge\limits_{k =1}^{j-1}[(\tau_{i_k}{-}\tau_{i_j} \in I_k)
  \wedge \mathsf{Seg^-}(\rho, i_{k+1}, i_{k},\Gamma) \in L(\rev({\re'_{k}}))] \wedge \bigwedge \limits_{k{=}j}^{n}[(\tau_{i_k}{-}\tau_{i_j} \in I_k)
    \wedge \mathsf{Seg^+}(\rho, i_{k}, i_{k+1},\Gamma) \in L(\re'_k)]$, where $i_0{=}0$ and $i_{n+1}{=}m$.
Hence, by semantics of $\fregk$ and $\sregk$ modalities, $\rho,i \in \mathsf{Time}(\mathcal{W}_{Qseq})$ if and only if $\rho, i{\models} \phi_{\qseq}$ where  \\
$\phi_{\qseq}=\sregm^{j}_{I_{j-1},\ldots,I_{1}} (\rev(\re'_1),\ldots,\rev(\re'_j))(\Gamma) \wedge \fregm^{n-j}_{I_{j+1},\ldots,I_{n}} (\re'_{j+1},\ldots,\re'_{n+1})(\Gamma)$. \\Let $\mathsf{State{-}seq}$ be set of all possible sequences of the form $\Qseq$.
As $A_\seq$ accepts only words which has exactly $n$ time restricted points, the number of possible sequences of the form $\Qseq$ is bounded by $|Q|^{n}$. Hence any word $\rho, i \in \mathsf{Time}(L(A_\seq))$ iff $\rho, i \models
\phi_{\mathsf{seq}}$ where $\phi_{\mathsf{seq}} =     
\bigvee \limits_{\mathsf{qseq} \in \mathsf{State{-}seq}} \phi_{\mathsf{qseq}}$. 
Disjuncting over all possible sequences $\seq{\in}\mathcal{T}(I_\nu)$ we get formula $\phi$ and the following lemma.
\end{proof}
\begin{lemma}
\label{lem:nfatopnregmtl}
    Let $L(A)$ be the language of $I_\nu$-interval words definable by a NFA $A$. 
    We can construct a $\pnregmtl$ formula $\phi$  s.t. $\rho,i \models \phi$ iff $\rho,i \in \mathsf{Time}(L(A))$.
\end{lemma}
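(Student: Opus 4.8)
The plan is to assemble Lemma~\ref{lem:nfatopnregmtl} by chaining together the machinery built in steps 2--4, applied to the NFA $A$ given in the hypothesis. First I would recall from Lemma~\ref{lem:normalize} that $\mathsf{Time}(L(A)) = \mathsf{Time}(\normalize(L(A)))$, since normalization preserves congruence class and hence the set of consistent pointed timed words. Thus it suffices to produce a $\pnregmtl$ formula $\phi$ with $\rho,i \models \phi$ iff $\rho,i \in \mathsf{Time}(\normalize(L(A)))$.

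Next I would invoke the partition of the set $W$ of all $I_\nu$-interval words into the finitely many types $W_\seq$ for $\seq \in \mathcal{T}(I_\nu)$, noting $W = \biguplus_{\seq} W_\seq$, so that $L(A) = \biguplus_\seq (L(A) \cap W_\seq)$ and therefore $\normalize(L(A)) = \bigcup_\seq \normalize(L(A)\cap W_\seq)$. By Lemma~\ref{lem:nfatonfaseq}, for each type $\seq$ we have an NFA $A_\seq$ with $L(A_\seq) = \normalize(L(A)\cap W_\seq)$, and by Lemma~\ref{lem:nfaseqtopnregmtl} a $\pnregmtl$ formula $\phi_\seq$ (with $\mathsf{Constraint}(\phi_\seq)\subseteq I_\nu$, size $O(|A_\seq|^{|\seq|})$, arity $O(|I_\nu|^2)$) such that $\rho,i\models \phi_\seq$ iff $\rho,i \in \mathsf{Time}(L(A_\seq))$. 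Taking $\phi = \bigvee_{\seq \in \mathcal{T}(I_\nu)} \phi_\seq$ and using $\mathsf{Time}(\bigcup_\seq L(A_\seq)) = \bigcup_\seq \mathsf{Time}(L(A_\seq))$ (which holds because $\mathsf{Time}$ distributes over union by definition), we get $\rho,i\models\phi$ iff $\rho,i\in \mathsf{Time}(\normalize(L(A))) = \mathsf{Time}(L(A))$, as required. The finiteness of $\mathcal{T}(I_\nu)$ guarantees $\phi$ is a well-formed finite $\pnregmtl$ formula.

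I would also remark on the size/arity bound inherited from Lemma~\ref{lem:nfaseqtopnregmtl}: since $|\mathcal{T}(I_\nu)|$ is bounded (by a function of $|I_\nu|$ — roughly the number of interleavings of $|I_\nu|$ pairs plus one anchor symbol), and each $A_\seq$ has at most $|A| \cdot (|\seq|+1) + 1$ states with $|\seq| = 2|I_\nu|+1$, the resulting $\phi$ has size singly exponential in $|A|$ and arity $O(|I_\nu|^2)$; this is the bookkeeping needed later for the complexity analysis.

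The main obstacle, such as it is, is not any deep argument but rather making the three equalities ($\mathsf{Time}$ commuting with $\normalize$, with finite unions, and the partition being a disjoint cover) fit together cleanly; in particular one must be careful that the types $W_\seq$ really do partition \emph{all} of $W$ (including words that are not collapsed — here one uses that every word is congruent to its collapse and that $\boundaryint$ and $\normalize$ are defined so as to respect this), so that no interval word of $L(A)$ is lost when we restrict attention to the $A_\seq$'s. Once that is granted, the proof is just the composition of Lemmas~\ref{lem:normalize}, \ref{lem:nfatonfaseq}, and \ref{lem:nfaseqtopnregmtl} together with the disjunction over the finite type set.
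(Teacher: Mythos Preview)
Your proposal is correct and follows essentially the same approach as the paper: the paper's proof is a one-liner stating that disjuncting the $\phi_\seq$ over all $\seq \in \mathcal{T}(I_\nu)$ yields $\phi$, relying on precisely the chain Lemma~\ref{lem:normalize} $\to$ Lemma~\ref{lem:nfatonfaseq} $\to$ Lemma~\ref{lem:nfaseqtopnregmtl} that you spell out. Your added remarks on size/arity and on the partition covering all of $W$ are consistent with (and slightly more explicit than) what the paper records.
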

Note that, if $\psi$ is a simple 1-$\tptl$ formula with intervals in $\I_\nu$, then the equivalent $\pnregmtl$ formula $\phi$ constructed above contains only interval in  $\Clcap(\I_\nu)$.  Hence, we have the following theorem.
\begin{theorem}
For a simple non-adjacent 1-$\tptl$ formula $\psi$ containing intervals from $\I_\nu$,  we can construct a non-adjacent $\pnregmtl$ formula  $\phi$, s.t. for any valuation $v$, $\rho,i,v {\models} \psi$ iff  $\rho, i {\models} \phi$ where, $|\phi|{=}O(2^{Poly(|\psi|)})$ and arity of $\phi$ is $O(|\I_\nu|^2)$.  
\label{thm:natptltonapnregmtl}
\end{theorem}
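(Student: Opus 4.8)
The plan is to obtain $\phi$ by composing the four reductions developed in this section and then to check that the composition does not blow up beyond one exponential. Since $\psi = x.\varphi$ is simple, I first apply Theorem~\ref{thm:tptl-ltl} to get an $\ltl$ formula $\alpha$ over $\Sigma' = \Sigma \cup \I_\nu \cup \{\anch\}$ of size $O(|\psi|)$ with $\rho, i, v \models \psi$ iff $\rho, i \in \mathsf{Time}(L(\alpha))$, where $L(\alpha)$ consists of $\I_\nu$-interval words. By \cite{gastin-oddoux} I turn $\alpha$ into an NFA $A_\alpha$ with $2^{O(|\alpha|)}$ states, and apply the collapsing construction of step 1b to obtain $A$ with $L(A) = \col(L(A_\alpha))$, a language of $\Clcap(\I_\nu)$-interval words; writing $I_\nu := \Clcap(\I_\nu)$ and noting $|I_\nu| \le |\I_\nu|^2$, the automaton $A$ still has $2^{Poly(|\psi|)}$ states. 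Because $\col$ only replaces a set of intervals by their intersection, $\mathsf{Time}$ is invariant under it, so $\mathsf{Time}(L(A)) = \mathsf{Time}(L(\alpha))$. I then partition the set $W$ of all $I_\nu$-interval words into the finitely many types $\seq \in \mathcal{T}(I_\nu)$ of step 2 (at most $(2|I_\nu|+1)! = 2^{Poly(|\psi|)}$ of them), and for each $\seq$ build $A_\seq$ with $L(A_\seq) = \normalize(L(A)\cap W_\seq)$ via Lemma~\ref{lem:nfatonfaseq}; as $A_\seq$ has $O(|Q|\cdot|\seq|)$ states and $|\seq| = 2|I_\nu|+1 = O(|\I_\nu|^2)$, it is still of size $2^{Poly(|\psi|)}$. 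Finally I apply Lemma~\ref{lem:nfaseqtopnregmtl} to each $A_\seq$ to get $\pnregmtl$ formulae $\phi_\seq$ with $\mathsf{Constraint}(\phi_\seq)\subseteq I_\nu$, satisfying $\rho, i \models \phi_\seq$ iff $\rho, i \in \mathsf{Time}(L(A_\seq))$, with $|\phi_\seq| = O(|A_\seq|^{|\seq|})$ and arity $O(|\I_\nu|^2)$, and set $\phi := \bigvee_{\seq \in \mathcal{T}(I_\nu)} \phi_\seq$.

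For correctness, I combine $\bigcup_\seq L(A_\seq) = \normalize(L(A))$ from Lemma~\ref{lem:nfatonfaseq} with $\kappa \cong \normalize(\kappa)$ from Lemma~\ref{lem:normalize}, which gives $\bigcup_\seq \mathsf{Time}(L(A_\seq)) = \mathsf{Time}(\normalize(L(A))) = \mathsf{Time}(L(A))$. Hence, for any valuation $v$, $\rho, i \models \phi$ iff $\rho, i \in \bigcup_\seq \mathsf{Time}(L(A_\seq))$ iff $\rho, i \in \mathsf{Time}(L(A)) = \mathsf{Time}(L(\alpha))$ iff $\rho, i, v \models \psi$, which is the required equivalence.

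It remains to check non-adjacency and the two bounds. Every interval appearing in a modality of $\phi$ lies in $I_\nu = \Clcap(\I_\nu)$, because $\mathsf{Constraint}(\phi_\seq)\subseteq I_\nu$; since $\psi$ is non-adjacent, $\I_\nu$ is a non-adjacent set, hence so is $\Clcap(\I_\nu)$, so the interval tuple of every $\fregkm,\sregkm$ subformula of each $\phi_\seq$ is drawn from a non-adjacent set, i.e. $\phi$ is a non-adjacent $\pnregmtl$ formula. The arity of a disjunction is the maximum arity of its disjuncts, so the arity of $\phi$ is $O(|\I_\nu|^2)$. For the size, $|\phi| \le |\mathcal{T}(I_\nu)|\cdot \max_\seq|\phi_\seq| = 2^{Poly(|\psi|)}\cdot \max_\seq O(|A_\seq|^{|\seq|})$; writing $|A_\seq| = 2^{p(|\psi|)}$ and using $|\seq| = O(|\I_\nu|^2) = O(|\psi|^2)$ (the number of distinct intervals in $\psi$ is $O(|\psi|)$), we get $|A_\seq|^{|\seq|} = 2^{p(|\psi|)\cdot O(|\psi|^2)} = 2^{Poly(|\psi|)}$, so $|\phi| = O(2^{Poly(|\psi|)})$.

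The step I expect to be the main obstacle is precisely this size audit. The reduction from $A_\seq$ to $\phi_\seq$ costs $O(|A_\seq|^{|\seq|})$, so a priori it could iterate into a tower of exponentials; it stays singly exponential only because the exponent $|\seq|$ — equivalently the arity — is polynomially bounded ($|\Clcap(\I_\nu)| \le |\I_\nu|^2 \le |\psi|^2$) while $|A_\seq|$ is itself only singly exponential. One has to make sure no earlier step secretly re-applies this exponentiation, in particular that a single pass through the NFA $A$ suffices rather than repeated logic-to-automata translations. The other point needing care is that the chain of $\mathsf{Time}$-equalities in the correctness argument is licensed exactly by the two invariance facts ($\col$-invariance of $\mathsf{Time}$ and $\kappa\cong\normalize(\kappa)$ from Lemma~\ref{lem:normalize}), which are what let the $\mathsf{Time}$-preimages glue back together across the collapse and across the type partition.
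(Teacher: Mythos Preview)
Your proposal is correct and follows essentially the same route as the paper: compose Theorem~\ref{thm:tptl-ltl}, the LTL-to-NFA step, the collapse, Lemma~\ref{lem:nfatonfaseq} and Lemma~\ref{lem:nfaseqtopnregmtl}, take the disjunction over $\mathcal{T}(I_\nu)$, and then audit size, arity and non-adjacency exactly as you do. Your correctness chain via $\mathsf{Time}$-invariance of $\col$ and Lemma~\ref{lem:normalize}, and your size computation $|A_\seq|^{|\seq|} = 2^{Poly(|\psi|)\cdot O(|\psi|^2)}$, match the paper's argument.
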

This is a consequence of  Theorem \ref{thm:tptl-ltl}, Lemma \ref{lem:nfatonfaseq} and Lemma \ref{lem:nfatopnregmtl}. A formal proof appears in the full version 
For the complexity : The size $\ltl$ formula $\alpha$ constructed from $\psi$ (in \textbf{1a))}) is linear in $\psi$. The translation from LTL formula $\alpha$ to NFA $A$ has a complexity $\mathcal{O}(2^{|\alpha|}) = \mathcal{O}(2^{|\psi|})$. Let $I_\mu = \Clcap(\I_\nu)$. Hence, $|I_\mu| = \mathcal{O}(|\I_\nu|^2)$. 
The size of $A_{\seq}$ is $\mathcal{O}(|\seq| \times 2^{(|\psi|)}) = \mathcal{O}(2^{Poly(|\psi|)})$ as $|\seq| \le 2 \times |I_\mu| = \mathcal{O}(|\I_\nu|^2) = \mathcal{O}(|\psi|^2)$. Next, 
$|\phi_\seq| = \mathcal{O}(|A_\seq|^{|\seq|}) = \mathcal{O}(2^{Poly(|\psi|)})$. $|T(\I_\nu)| = O(2^{Poly(n)})$. Hence, $|\phi|=O(2^{Poly(n, |Q|)}) = O(2^{Poly(|A|)})$. Moreover, the arity of $\phi$ is also bounded by $2\times |\Clcap(I_\nu)|$. Note that, $|\Clcap(I_\nu)| \le |\I_\nu|^2$. Moreover, $\Clcap(I_\nu)$ is non-adjacent iff $I_\nu$ is. This result is lifted to a (non-simple)  1-$\tptl$ formula $\psi$ as follows: for each occurrence of a subformula $x.\varphi_i$ in $\psi$, introduce a new propositional variable $a_i$ and replace $x.\varphi_i$  with $a_i$. After replacing all such,  we are left with the 
outermost freeze quantifier. Conjunct $\bigwedge\limits_{i=1}^{m} \wB (a_i \leftrightarrow x.\varphi_i)$ to the replaced 
formula obtaining a simple 1-$\tptl$ formula $\psi'$, equisatisfiable to $\psi$.  
Apply the procedure above to each of the $m+1$ conjuncts of $\psi'$ resulting in  $m+1$ equivalent non-adjacent $\pnregmtl$ formulae $\varphi'_i$. The conjunction of $\varphi'_i$ is the non-adjacent $\pnregmtl$ formula equisatisfiable with $\psi$, giving Theorem \ref{thm: tptltopnregmtlsat}. 
\begin{theorem}
\label{thm: tptltopnregmtlsat}
Any non-adjacent 1-$\tptl$ formula $\psi$ with intervals in $\I_\nu$, can be reduced to a non-adjacent $\pnregmtl$, $\phi$, with $|\phi|= 2^{Poly(|\psi|)}$ and arity of $\phi{=}O(|\I_\nu|^2)$ such that $\psi$ is satisfiable if and only if $\phi$ is.
\end{theorem}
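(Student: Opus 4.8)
The plan is to reduce an arbitrary non adjacent 1-$\tptl$ formula to a \emph{simple} one by a standard flattening (renaming) argument, and then invoke Theorem \ref{thm:natptltonapnregmtl} on each simple conjunct. First I would handle the flattening: given $\psi$, enumerate all occurrences of subformulae of the form $x.\varphi_i$ (for $1 \le i \le m$), where $\varphi_i$ itself contains no freeze quantifier (such innermost-freeze subformulae always exist since $\fd$ is finite, and after replacing them we can iterate, but here one pass on \emph{maximal} such subformulae suffices because 1-$\tptl$ has a single clock and nested $x.\,$ simply re-binds it). Introduce a fresh proposition $a_i$ for each, and let $\widehat{\psi}$ be $\psi$ with every $x.\varphi_i$ textually replaced by $a_i$; then $\widehat\psi$ has a single outermost freeze quantifier. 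Set $\psi' = \widehat\psi \wedge \bigwedge_{i=1}^{m} \wB\bigl(a_i \leftrightarrow x.\varphi_i\bigr)$ over the extended alphabet $\Sigma \cup \{a_1,\dots,a_m\}$. The key lemma here is equisatisfiability: $\psi$ is satisfiable iff $\psi'$ is. The forward direction is immediate (take a model of $\psi$ and label each position by the $a_i$'s that hold there, using the semantics of $x.\varphi_i$ evaluated with the clock frozen at that position — note the $\wB$ is the non-strict always so it ranges over \emph{all} positions, which is exactly what is needed for the equivalence to bind each $a_i$ at every point). The backward direction observes that in any model of $\psi'$, the $\wB(a_i \leftrightarrow x.\varphi_i)$ conjuncts force $a_i$ to be semantically interchangeable with $x.\varphi_i$ at every position, so $\widehat\psi$ holding implies $\psi$ holds after substituting back.

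Second, I would verify that each of the $m+1$ conjuncts of $\psi'$ is a \emph{simple} non adjacent 1-$\tptl$ formula with intervals drawn from $\I_\nu$. The conjunct $\widehat\psi$ has exactly one outermost $x.$ and no other freeze quantifier (all inner ones were renamed away), so it is simple; its intervals are a subset of those of $\psi$, hence in $\I_\nu$, and the non adjacency restriction — which constrains intervals within the scope of a single freeze quantifier — is inherited. Each conjunct $\wB(a_i \leftrightarrow x.\varphi_i)$ is, modulo pushing the outer $\wB$ and the biconditional into negation normal form, of the form $x.(\cdot)$ at top level with $\varphi_i$ freeze-free (one must be slightly careful: $\wB(a_i \leftrightarrow x.\varphi_i)$ is not literally $x.(\text{something})$, but it is equivalent to the conjunction of $\wB(a_i \rightarrow x.\varphi_i)$ and $\wB(\neg a_i \rightarrow x.(\neg\varphi_i))$, and $\wB \, x.\chi$ is handled by the same machinery — indeed Theorem \ref{thm:natptltonapnregmtl} is stated for $\psi = x.\varphi$ but the construction in Section \ref{sec:tptltopnregmtl} goes through the $\ltl$ formula $\alpha$ which already absorbs an outer $\wF/\wB$, so we may take each conjunct to be simple after this light massaging). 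The intervals of $\varphi_i$ are again in $\I_\nu$ and inherit non adjacency.

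Third, I apply Theorem \ref{thm:natptltonapnregmtl} to each of the $m+1$ simple non adjacent 1-$\tptl$ conjuncts, obtaining non adjacent $\pnregmtl$ formulae $\varphi'_0, \varphi'_1, \dots, \varphi'_m$ with $\varphi'_j$ equivalent (over the extended alphabet, for every valuation) to the $j$-th conjunct, each of size $O(2^{Poly(|\psi|)})$ and arity $O(|\I_\nu|^2)$. Then $\phi := \bigwedge_{j=0}^{m} \varphi'_j$ satisfies $L(\phi) = L(\psi')$ over the extended alphabet, so $\phi$ is satisfiable iff $\psi'$ is iff $\psi$ is. Conjunction preserves non adjacency (each modality in $\phi$ already satisfies the restriction), the arity of $\phi$ is the max of the arities, namely $O(|\I_\nu|^2)$, and $|\phi| = \sum_{j=0}^{m} |\varphi'_j| + O(m) = (m+1)\cdot O(2^{Poly(|\psi|)}) = 2^{Poly(|\psi|)}$ since $m \le |\psi|$. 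This gives the claimed bounds.

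The main obstacle is the equisatisfiability lemma for the flattening, specifically the interaction between the fresh propositions and the freeze semantics: one must argue that a \emph{single} fresh proposition $a_i$ can faithfully stand in for $x.\varphi_i$ at \emph{every} position simultaneously. This works precisely because $x.\varphi_i$, evaluated at position $j$ with any incoming valuation, depends only on $\tau_j$ (it freezes $x := \tau_j$), so its truth value is a function of the position alone — hence it \emph{can} be recorded by a position-labelling proposition, and the non-strict $\wB$ pins down this labelling everywhere. A secondary subtlety, already noted above, is that the conjuncts $\wB(a_i \leftrightarrow x.\varphi_i)$ are not syntactically ``simple'' in the narrow sense of Section \ref{sec:tptltopnregmtl}; the fix is to observe that the reduction of Section \ref{sec:tptltopnregmtl} actually applies to any formula of the form $\wF\,x.\chi$ or $\wB\,x.\chi$ with $\chi$ freeze-free — which is exactly what the $\ltl$ encoding $\alpha = \wF[\dots] \wedge \wB(\bigvee\Sigma)$ in step \textbf{1a)} already produces — so no real generalization is needed, only a remark.
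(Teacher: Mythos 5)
Your proposal is correct and follows essentially the same route as the paper: replace the inner freeze subformulae $x.\varphi_i$ by fresh witness propositions $a_i$, conjoin the temporal definitions $\wB(a_i \leftrightarrow x.\varphi_i)$ to obtain an equisatisfiable formula with $m+1$ simple conjuncts, apply Theorem \ref{thm:natptltonapnregmtl} to each, and conjoin the results. You in fact treat two points more carefully than the paper does — the justification that a single proposition can record the truth of $x.\varphi_i$ at every position (because its value depends only on $\tau_j$, not on the incoming valuation), and the observation that $\wB(a_i \leftrightarrow x.\varphi_i)$ is not literally of the form $x.\chi$ but is handled by the same $\ltl$-over-interval-words machinery.
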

\section{Satisfiability Checking for non-adjacent $\pnregmtl$}
\label{sec:pnregsatisfiability}
\begin{theorem}
Satisfiability Checking for non-adjacent $\pnregmtl$ and non-
adjacent 1-$\tptl$ are decidable with EXPSPACE complete complexity.
\label{thm:main}
\end{theorem}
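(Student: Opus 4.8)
\medskip
\noindent\emph{Proof plan.}
\emph{Hardness.} Both problems inherit EXPSPACE-hardness from $\mitl$. The modality $\varphi_1\until_I\varphi_2$ of $\mtl$ equals the arity-one automaton modality $\fregm^1_I(\re_1,\re_2)(\{\varphi_1,\varphi_2\})$, where $\re_1$ accepts the words of length $\ge 2$ whose first letter is arbitrary, whose intermediate letters all contain $\varphi_1$, and whose last letter contains $\varphi_2$, and $\re_2$ accepts everything; $\since_I$ is symmetric. A non-punctual interval is a non adjacent set by itself, so this embeds $\mitl[\until,\since]$ into non adjacent $\pnregmtl$, and the routine translation of $\mitl$ into $1$-$\tptl$ recalled after Theorem~\ref{thm:me} produces non adjacent $1$-$\tptl$ formulas. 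Theorem~\ref{thm-basic} then gives EXPSPACE-hardness of both.

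\emph{Membership, skeleton.} By Theorem~\ref{thm: tptltopnregmtlsat}, a non adjacent $1$-$\tptl$ formula of size $n$ is equisatisfiable to a non adjacent $\pnregmtl$ formula of size $2^{Poly(n)}$ whose arity is only $Poly(n)$, so it suffices to decide satisfiability of non adjacent $\pnregmtl$. I plan an equisatisfiability-preserving translation of a non adjacent $\pnregmtl$ formula $\phi$ of size $M$ and arity $r$ into an $\emitlinf$ formula $\phi'$ of size $M^{O(r)}$. Since $\emitlinf$-satisfiability is in PSPACE \cite{H19}, non adjacent $\pnregmtl$-satisfiability is then decidable in space $M^{O(r)}$: for an arbitrary formula $r\le M$, so this is $2^{Poly(M)}$ (EXPSPACE), while for $\phi$ obtained from a non adjacent $1$-$\tptl$ formula of size $n$ we get $(2^{Poly(n)})^{Poly(n)}=2^{Poly(n)}$, again EXPSPACE. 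Together with the hardness above this yields EXPSPACE-completeness for both logics.

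\emph{Membership, core translation.} Working bottom-up over modal depth (and treating $\sregm$ symmetrically), the task reduces to rewriting one modality $\fregm^k_{I_1,\dots,I_k}(\re_1,\dots,\re_{k+1})(S)$ with $\{I_1,\dots,I_k\}$ non adjacent into an $\emitlinf$ formula, assuming the formulas of $S$ already translated and allowing fresh ``witness'' propositions (so equivalence is relaxed to equisatisfiability). Fix the anchor $i_0$ at which the modality is evaluated, let $I_\nu$ collect the intervals occurring in $\phi$, and annotate each point of the model, relative to $i_0$, by the subset of $I_\nu$ containing its time-distance from $i_0$; this is precisely the $I_\nu$-interval word picture of Section~\ref{sec:interval word}. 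By Lemma~\ref{lem:normalize} only normalized interval words matter, in which each interval has at most two time-restricted points, so after partitioning into the finitely many types of Section~\ref{sec:tptltopnregmtl} (the relative order of $\anch$ and of all first/last occurrences) the timing information of interest concerns only $O(r^2)$ distinguished positions, which we mark with witness propositions. Within each type every required constraint ``$\tau_p-\tau_{i_0}\in\langle l,u\rangle$'' splits into ``$\tau_p-\tau_{i_0}\ge l$'' (shape $\langle l,\infty)$) and ``$\tau_p-\tau_{i_0}\le u$'' (shape $\langle 0,u\rangle$), each expressible by an $\emitlinf$ modality anchored at the witness of $i_0$, and the segment conditions on the $\re_i$ are re-threaded through those modalities by taking reversals and single-letter concatenations of state-restricted copies of the $\re_i$. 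The $\emitlinf$ formula for the modality is the disjunction over types of these rewritings; the only blow-up is the choice of the $O(r^2)$ distinguished positions and the associated sub-automata, giving the claimed size $M^{O(r)}$.

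\emph{Where non-adjacency enters, and the main obstacle.} The one-sided split is faithful exactly because of non-adjacency: after normalization and typing, the lower and upper endpoints of distinct members of $\{I_1,\dots,I_k\}$ are pairwise distinct, so the relative order of the split constraint-points is forced and the two halves ``$\ge l$'' and ``$\le u$'' reassemble without loss. A punctual interval, or two adjacent intervals, would share an endpoint and force some point to be pinned down exactly, which one-sided constraints cannot express --- this is precisely why $\mtl[\until,\since]$ is undecidable and why the restriction is needed. Hence the main obstacle is this core translation: engineering the witness-proposition encoding and the re-threading of the automaton arguments so that the high-arity, two-sided, non adjacent modality is captured exactly by one-sided $\emitlinf$ modalities, while keeping the blow-up polynomial in $M$ for fixed arity --- the latter being what keeps the composition with Theorem~\ref{thm: tptltopnregmtlsat} singly exponential in $n$, and the overall procedure in EXPSPACE rather than 2EXPSPACE.
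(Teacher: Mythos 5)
Your hardness argument and your overall skeleton coincide with the paper's: EXPSPACE-hardness is inherited from $\mitl$, and membership goes through Theorem~\ref{thm: tptltopnregmtlsat} followed by an equisatisfiable translation of non adjacent $\pnregmtl$ into $\emitlinf$ of size $M^{O(r)}$, after which the PSPACE procedure of \cite{H19} gives the bound. The gap is in the core translation, which is where all the difficulty lives, and it is twofold. First, you propose to mark the $O(r^2)$ distinguished positions (first/last $I$-restricted points and the anchor) with fresh witness propositions. But these positions are \emph{relative to the anchor} $i_0$, and after flattening the modality must be checked at every point of the word (the temporal definition has the form $\wB(b_i \leftrightarrow \phi_f)$); finitely many fresh propositions cannot simultaneously mark, for every anchor, that anchor's own distinguished positions. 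The paper's substitute is anchor-independent: it oversamples the word with new points at every integer timestamp, labelled by a modulo-$\cmax$ counter $\Int=\{\nt_0,\ldots,\nt_{\cmax-1}\}$, so that relative to any anchor the relevant landmark is located by $\fut_{[0,1)}\nt_{j}$ and is unique within any window of length $\cmax$.

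Second, splitting $\tau_p-\tau_{i_0}\in\langle l,u\rangle$ into two independent one-sided assertions of shapes $\langle l,\infty)$ and $\langle 0,u\rangle$ anchored at $i_0$ does not force the two to be witnessed by the \emph{same} point $p$; non-adjacency separates the endpoints of \emph{distinct} intervals but does nothing to coordinate the two halves of a single constraint. The paper does not perform this split at the level of the timing constraint in step 3 at all: there it keeps bounded two-sided $\fut_{I_g}$ and instead splits the \emph{automaton run} at the uniquely labelled integer landmark $k_g$ lying between consecutive constraint points --- a forward check of $\re_g[init_g,q_g]\cdot\{\nt_{j_g}\}$ from $i'_{g-1}$ and a backward check of $\rev(\nt_{j_g}\cdot\re_g[q_g,F_g])$ from $i'_g$, which are guaranteed to meet at the same point because the label $\nt_{j_g}$ occurs only once within $\cmax$ time units. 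Only in step 4 are the residual two-sided $\fut_{\langle l,u\rangle}$ eliminated, again by coordinating a lower-bound until and an upper-bound eventually through the counter labels; non-adjacency is used to guarantee that a landmark exists between consecutive constraint points in the disjoint case and, via Proposition~\ref{prop:non-adjacent}, to handle overlapping intervals. Without this oversampling/landmark machinery your two one-sided halves do not reassemble, so as written the membership direction does not go through.
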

The proof is via a satisfiability preserving reduction to logic $\emitl_{(0,\infty)}$ resulting in a formula whose size is at most exponential in the size of the input non-adjacent $\pnregmtl$ formula.  
Satisfiability checking for $\emitl_{0,\infty}$ is  PSPACE complete \cite{H19}. This along with our construction  
implies an EXPSPACE decision procedure for satisfiability checking of non-adjacent $\pnregmtl$. The EXPSPACE lower bound 
follows from the EXPSPACE hardness of sublogic $\mitl$. The same complexity also applies  to non-adjacent 1-$\tptl$, using the reduction in the previous section.  We now describe the technicalities associated with our reduction. We use the technique of equisatisfiability modulo oversampling \cite{time14}\cite{khushraj-thesis}. Let $\Sigma$ and $\Ovs$ be disjoint set of propositions.
Given any timed word $\rho$ over $\Sigma$, we say that a word $\rho'$ over $\Sigma \cup \Ovs$ is an oversampling of $\rho$ if $|\rho| \le |\rho'|$ and when we delete the symbols in $\Ovs$ from $\rho'$ we get back $\rho$. Intuitively, $\Ovs$ are set of propositions which are used to label oversampling points only. Informally, a formulae $\alpha$ is  equisatisfiable modulo oversampling to formulae $\beta$ if and only if for every timed word  $\rho$ excepted by $\beta$ there there exists an oversampling of $\rho$ accepted by $\alpha$ and, for every timed word $\rho'$ accepted by $\alpha$ its projection is accepted by $\alpha$. Note that when $|\rho'| > |\rho|$, $\rho'$ will have some time points where no proposition from $\Sigma$ is true. These new points are called oversampling points. Moreover, we say that any point $i' \in dom(\rho')$ is an old point of $\rho'$ corresponding to $i$ iff $i'$ is the $i^{th}$ point of $\rho'$ when we remove all the oversampling points. 
For the rest of this section, let $\phi$ be a non-adjacent $\pnregmtl$ formula over $\Sigma$. We break down the construction of an $\emitl_{0,\infty}$ formula $\psi$ as follows. \\
\noindent 1) Add oversampling points at every integer timestamp using  $\varphi_{\ovs}$ below,\\
\noindent 2) Flatten the $\pnregmtl$ modalities to get rid of nested automata modalities, obtaining an  equisatisfiable formula $\phi_{flat}$,\\
\noindent 3) With the help of oversampling points, assert the properties expressed by $\pnregmtl$ subformulae $\phi_i$ 
of $\phi_{flat}$ using only $\emitl$ formulae,\\
\noindent 4) Get rid of bounded intervals with non-zero lower bound, getting the required $\emitl_{0,\infty}$ formula $\psi_i$.  Replace  $\phi_i$ with $\psi_i$ in $\phi_{flat}$ getting $\psi$.
\\Let $\Last{=}\sbf \bot$ and $\Lastts{=}\sbf \bot \vee (\bot \until_{(0,\infty)} \top)$. $\Last$ is true only at the last point of any timed word. Similarly, $\Lastts$, is true at a point $i$ if there is no next point $i+1$ with the same timestamp $\tau_i$.
Let $\cmax$ be the maximum constant used in the intervals appearing in $\phi$.\\
\noindent\textbf{1) Behaviour of Oversampling Points}. We oversample  timed words over $\Sigma$ by adding new points where only propositions from $\Int$ holds, where  $\Int \cap \Sigma = \emptyset$.  
Given a timed word $\rho$ over $\Sigma$, consider an extension of $\rho$ called $\rho'$, by extending the alphabet $\Sigma$ of $\rho$ to 
$\Sigma' = \Sigma \cup \Int$. Compared to $\rho$, $\rho'$  has extra points called \emph{oversampling} points, where 
 $\neg \bigvee \Sigma$ (and $\bigvee \Int$) hold. These extra points are added at all integer timestamps, in such a way that if $\rho$ already  has points with integer time stamps, then the oversampled point with the same time stamp  
 appears last among all points with the same time stamp  in $\rho'$. 
  We will make use of these oversampling points to reduce the $\pnregmtl$ modalities into $\emitl_{0,\infty}$. These oversampling points are labelled with a modulo counter $\Int{=}\{\nt_0,\nt_1,\ldots, \nt_{\cmax-1}\}$. 
The counter is initialized to be $0$ at the first oversampled point with timestamp $0$ and is incremented, modulo $\cmax$, after exactly one time unit till the last point of $\rho$.  
Let $i \oplus j{=}(i+ j) \% \cmax$. 
The  oversampled behaviours are expressed using the formula $\varphi_{\ovs}$:  
$\{\neg \fut_{(0,1)} \bigvee \Int \wedge \fut_{[0,1)} \nt_{0}\} \wedge $ \\
$ \{\bigwedge \limits_{i=0}^{\cmax-1}\wB\{(\nt_i{\wedge}\fut(\bigvee \Sigma))\rightarrow (\neg \fut_{(0,1)} (\bigvee \Int) \wedge \fut_{(0,1]} (\nt_{i \oplus 1} \wedge(\neg \bigvee \Sigma) \wedge \Lastts))\}$. 
to an extension $\rho'$ given by $\ext(\rho){=}\rho'$ iff \textbf{(i)}$\rho$ can be obtained from $\rho'$ by deleting oversampling points and \textbf{(ii)}$\rho' \models \varphi_{\ovs}$. Map  $\ext$ is well defined as for any $\rho$, $\rho'=\ext(\rho)$ if and only if $\rho'$ can be constructed from $\rho$ by appending oversampling points at integer timestamps and labelling $k^{th}$ such oversampling point (appearing at time $k{-}1$) with $\nt_{k{\%}\cmax}$.
\\\noindent \textbf{2) Flattening}. Next, we  flatten $\phi$ to eliminate the nested $\fregkm$ and $\sregkm$ modalities while preserving satisfiability. Flattening is well studied  \cite{deepak08}, \cite{time14}, \cite{khushraj-thesis}, \cite{H19}.
The idea is to associate a fresh witness variable $b_i$ to each subformula  $\phi_i$ which needs to be flattened. 
This is achieved using the \emph{temporal definition} $T_i=\wB((\bigvee \Sigma \wedge \phi_i) \leftrightarrow b_i)$  and replacing 
$\phi_i$ with $b_i$ in $\phi$,  $\phi''_i{=}\phi[b_i /\phi_i]$, where  $\wB$ is the weaker form of 
$\sbf$ asserting at the current point and strict future.   Then, $\phi'_i{=}\phi''_i \wedge T_i \wedge \bigvee\Sigma$ is equisatisfiable to $\phi$. 
Repeating this across all  subformulae of $\phi$, we obtain $\phi_{flat}{=}\phi_t \wedge T $ 
over the alphabet  $\Sigma'{=}\Sigma \cup W$, where $W$ is the set of the witness variables, $T=\bigwedge_i T_i$, $\phi_t$ is a propositional logic formula over $W$.  Each $T_i$ is of the form $\wB(b_i \leftrightarrow (\phi_f \wedge \bigvee\Sigma))$ where $\phi_f{=}\fregnm(\re_1, \ldots, \re_{n+1})(S)$ (or uses $\sregnm$) and $S \subseteq \Sigma'$. 
For example, consider the formula $\phi=\fregm^2_{(0,1)(2,3)}(\mathcal{A}_1, \mathcal{A}_2,\mathcal{A}_3)(\{\phi_1 , \phi_2\})$, where \\$\phi_1 = \sregm^2_{(0,2)(3,4)}(A_4,A_5, A_6)(\Sigma),\phi_2 = \sregm^2_{(1,2)(4,5)}(A_7,A_8, A_9)(\Sigma)$.   
Replacing the $\phi_1, \phi_2$ modality with witness propositions 
$b_1, b_2$, respectively, we get \\$\phi_t=\fregm^2_{(0,1)(2,3)}(A_1,A_2,A_3)(\{b_1, b_2\}) \wedge T$, 
where
\\$T=\wB(b_1 \leftrightarrow (\bigvee \Sigma \wedge  \phi_1)) \wedge \wB(b_2 \leftrightarrow (\bigvee \Sigma \wedge \phi_2))$, $A_1,A_2,A_3$ are automata constructed from $\mathcal{A}_1, \mathcal{A}_2, \mathcal{A}_3$, respectively, by replacing $\phi_1$ by $b_1$ and $\phi_2$ by $b_2$ in the labels of their transitions. Hence, $\phi_{flat}=\phi_t \wedge T$ is obtained by flattening the $\fregkm,\sregkm$ modalities.
\\\noindent \textbf{3) Obtaining  equisatisfiable $\emitl$ formula $\psi_f$  for the $\pnregmtl$ formula $\phi_f$ in 
each $T_i=\wB(b_i \leftrightarrow (\phi_f \wedge \bigvee\Sigma))$}. The next step is to replace all the $\pnregmtl$ formulae occurring in temporal definitions $T_i$. We use oversampling to construct the formula $\psi_f$ : for a timed word $\rho$ 
over $\Sigma$, $i \in dom(\rho)$, there is an extension $\rho'=\ext(\rho)$ over 
an extended alphabet $\Sigma'$, and a point $i' \in dom(\rho')$ which is an old point 
corresponding to $i$ such that 
$\rho', i' \models \psi_f$ iff $\rho, i \models \phi_f$.  
\smallskip 
Consider $\phi_f=\fregnm(\re_1, \ldots, \re_{n+1})(S)$ where $S \subseteq \Sigma'$. 
Wlg, we assume:
\\$\bullet$ \textbf{[Assumption1]}: $\inf(\mathsf{I_1}) \le \inf(\mathsf{I_2}) \le \ldots \le \inf(\mathsf{I_n})$ and $\sup(\mathsf{I_1}) \le \ldots \le \sup(\mathsf{I_n})$. This is wlog, since the check for $\re_{j+1}$ cannot start before the check of $\re_{j}$ in case of $\fregnm$ modality (and vice-versa for $\sregnm$ modality) for any $1\le j \le n$.
\\$\bullet$ \textbf{[Assumption 2]}: Intervals $\mathsf{I_1, \ldots I_{n-1}}$ are bounded intervals. Interval $\mathsf{I_{n}}$ may or may not be bounded. This is also wlog
\footnote{Unbounded intervals can be eliminated using $\fregk_{\mathsf{I_1, I_2, \ldots, I_{k{-}2},} [l_1, \infty) [l_2, \infty)}(\re_1, \ldots, \re_{k+1}) {\equiv}\\ \fregk_{\mathsf{I_1, I_2, \ldots, I_{k{-}2},} [l_1, \cmax) [l_2, \infty)}(\re_1, \ldots, \re_{k+1}) {\vee} \fregm^{k-1}_{\mathsf{I_1, I_2, \ldots, I_{k{-}2},}  [l_2, \infty)}(\re_1, \ldots, \re_{k-1},\re_{k} \cdot \re_{k+1})$.}.
\\\noindent Let $\rho{=}(a_1, \tau_1) \dots (a_n, \tau_n)\in T\Sigma^*$, $i \in dom(\rho)$. Let $\rho'{=}\ext(\rho)$ be defined by 
$(b_1, \tau'_1)\dots (b_m, \tau'_m)$ with $m \geq n$, and each $\tau'_i$ is a either a new integer timestamp 
not among $\{\tau_1, \dots, \tau_n\}$ or is some $\tau_j$. 
Let  $i'$ be an old point in $\rho'$ corresponding to $i$. Let $i'_0{=}i'$ and $i'_{n+1} = |\rho'|$. 
 $\rho, i \models \phi_f$ iff $\mathsf{cond} \equiv \exists i'\le i_1' \le \ldots \le i'_{n+1} \bigwedge \limits_{g{=}1}^n (\tau'_{i'_g} -\tau'_{i'} {\in} \mathsf{I}_g \wedge \rho',i'_g{\models}\bigvee \Sigma \wedge \mathsf{Seg^+}(\rho', i'_{g-1}, i'_g, S'){\in}L(\re'_g)) \wedge \mathsf{Seg^+}(\rho', i'_{n}, i'_{n+1}, S'){\in}L(\re'_{n+1})$ where for any $1\le j \le n+1$, $\re'_j$ is the automata built from $\re_j$ by adding self loop on $\neg \bigvee \Sigma$ (oversampling points) and $S' = S \cup \{\neg \bigvee \Sigma\}$. 
This self loop makes sure that $\re'_j$ ignores(or skips) all the oversampling points while checking for $\re_j$. Hence, $\re'_j$ allows arbitrary interleaving of oversampling points while checking for $\re_j$. Hence, for any $g,h \in dom(\rho)$ with $g',h'$ being old action points of $\rho'$ corresponding to $g,h$, respectively, $\mathsf{Seg^{s}} (\rho, g, h, S){\in}L(\re_i)$ iff  $\mathsf{Seg^{s}} (\rho', g', h', S\cup \{\neg \bigvee \Sigma\}) \in L(\re'_i)$ for $s \in \{+,-\}$. Note that the question, ``$\rho,i\models \phi_f?$'', is now reduced to checking $\mathsf{cond}$  on $\rho'$. \\
 \noindent \textbf{Checking the conditions for $\rho, i \models \phi_f$}. Let $I_g{=}\langle l_g, u_g \rangle$ for any $1 \le g \le n$ (Here, $\langle \rangle$ denotes half-open, closed, or open). We discuss only the case where $\{I_1, \ldots, I_n\}$ are pairwise disjoint and $\inf(I_1) \ne 0$ in $\phi_f=\fregnm(\re_1, \ldots, \re_{n+1})(S)$. 
The case of overlapping intervals can be found in the full version. The disjoint interval assumption  along with [Assumption 1] implies that for any $1 \le g \le n$, $u_{g-1} < l_g$. 
By construction of $\rho'$, between $i'_{g-1}$ and $i'_g$, we have an oversampling point $k_g$. 
 The point $k_g$ is guaranteed to exist between $i'_{g-1}$ and $i'_g$, since these two points lie within two distinct non-overlapping, non-adjacent intervals  $\mathsf{I}_{g-1}$ and $\mathsf{I}_{g}$ from $i'$. Hence their timestamps have different integral parts, and there is always a uniquely labelled oversampling point $k_g$ with timestamp $\lceil \tau'_{i'_{g-1}} \rceil$ between $i'_{g-1}$ and $i'_{g}$ for all $1{\le} g {\le} n$. Let for all $1{\le} g {\le} n+1$, $\re'_g = (Q_g,2^S,init_g, F_g, \delta'_g)$. Let the unique label for $k_g$ be $\nt_{j_g}$.
For any $1\le g \le n$, we assert that the behaviour of propositions in $S$ between points  $i'_{g-1}$ and $i'_g$ (of $\rho'$) should be accepted by  $\re'_g$. This is done by  splitting the run at the oversampling point 
 $k_g$(labelled as $\nt_{j_g}$) with timestamp $\tau'_{k_g}{=}\lceil \tau'_{i'_{g-1}}\rceil$, $i'_{g-1}< k_g < i'_g$. \\
 (1) Concretely, checking for $\mathsf{cond}$,  for each $1 \leq g \leq n$, 
  we start at $i'_{g-1}$ in $\rho'$, from  the initial state $init_g$ of $\re_g$, and move to the state (say $q_g$) that is reached 
 at the closest oversampling point $k_g$. Note that we use only $\re_g$ (we disallow the $\neg \bigvee \Sigma$ self loops) 
 to move to the closest oversampling point. \\ 
(2)  Reaching $q_g$ from $init_g$ we have read a behaviour 
 between $i'_{g-1}$ and $k_g$; this must to the full behaviour, and hence must also be accepted by $\re'_g$(we use $\re'_g$ instead of $\re_g$ to ignore the oversampling points that could be encountered while checking the latter part).
Towards this,  we guess a point  $i'_g$ which is within interval $\mathsf{I}_g$ from $i'$, such that, 
 the automaton $\re'_g$  starts from state $q_g$ reading $\nt_{k_g}$ and reaches a final state in $F_g$ at point  $i'_g$. 
 Then indeed, the behaviour of propositions from $S$ between $i'_{g-1}$ and $i'_g$ respect $\re'_g$, and also 
 $\tau'_{i'_g} -\tau'_{i'} \in \mathsf{I}_g$. \\
 (1) amounts to $\mathsf{Seg^+}(\rho', i'_{g-1}, k_g, S){\in} L(\re_g[init_g,q_g])\cdot \nt_{j_g}$.  This is defined by the formula  $\psi^+_{g-1, \nt_{j_g},Q_g}$
 which asserts $\re_{g+1}[init_{g},q_g] \cdot \nt_{j_g}$ from point $i'_{g-1}$ to the next nearest oversampling point $k_g$ where $\nt_{j_g}$ holds.
 (2) amounts to checking from point $i$, within interval $\mathsf{I}_g$ in its future, the existence of a point $i'_g$ such that $\mathsf{Seg^-}(\rho', i'_g, k_g, S){\in}L(\rev(\nt_{j_g} \cdot \re'_g[q_g,F_g]))$. This is defined by the formula $\varphi^-_{g, \nt_{j_g},q_g}$ which asserts $\rev(\nt_{j_g}\cdot \re'_g[q_g, F_g])$, from point $i'_g$ to an oversampling point $k_g$ which is the earliest oversampling point  s.t. $i'_{g-1}< k_g < i'_g$. 
 For $\mathsf{cond}$, we define the formula 
 $\psi{=}\fut_{[0,1)} \nt_{j_0} \wedge \bigvee \limits_{g{=}1}^{n} [\psi^+_{g-1, \nt_{j_g},q_g} \wedge \psi^-_{g,\nt_{j_g},q_g}] \wedge \psi^+_n$. \\
\noindent $\bullet$ For $1\le g \leq n$, $\psi^+_{g-1, \nt_{j_g},q_g}{=}\fut_{I_g}( \bigvee \Sigma \wedge  \fregm(\re_{g}[init_{g},q_g] \cdot \{\nt_{j_g}\})(S \cup \{\nt_{j_g}\}))$, \\
\noindent $\bullet$ $\psi^+_{n}{=}\fut_{I_n}(\bigvee \Sigma \wedge\fregm(\re_{n+1}\cdot\{\Last\})(S \cup \{\Last\}))$, and \\
\noindent $\bullet$ For $1 \le g \le n$, $\psi^-_{g, \nt_{j_g}, q_g}{=}\fut_{I_g}(\bigvee \Sigma \wedge\sregm(\rev(\nt_{j_g} \cdot \re_{g}[q_g,F_g]))( S \cup \{\nt_{j_g}\}))$.\\ 
Note that there is a unique point between $i'_{g-1}$ and $i'_{g}$ labelled $\nt_{j_g}$. This is because, $\tau'_{i'_g} - \tau'_{i'_{g-1}} < \tau'_{i'_g} - \tau'_{i'} \le \cmax$. 
Hence, we can ensure that the meeting point for the check (1) and (2) is indeed characterized by a unique label.Note that there is exactly one point labeled $\nt_y$ from any point within future $\cmax$ or past $\cmax$ time units (by $\varphi_{\ovs}$). This is the reason we used the counter modulo $\cmax$ to label the oversampling points.  We encourage the readers to see the figure $\ref{fig:pnregmtlmain-sat}$.
\begin{figure}
    \centering\scalebox{0.85}{
    \includegraphics{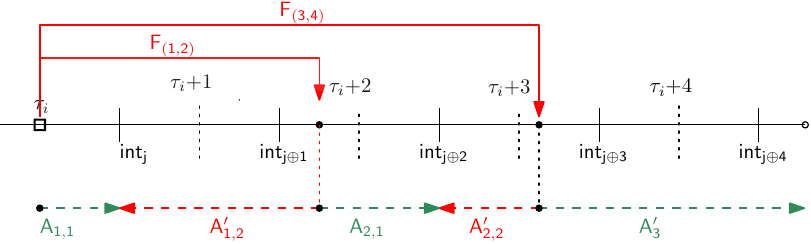}}
    \caption{Figure showing elimination  of $\fregm^2$ modality from temporal definition of the form $\wB(b \leftrightarrow \fregm^2_{(1,2), (3,4)}(A_1, A_2, A_3)(\Sigma')$. This is done by (i)checking for the first part of $A_1$, $A_{1,1}$, from present point to the next oversampling point at timestamp $\lceil \tau_i \rceil$, labelled, $\nt_j$, (ii)jumping to a non-deterministically chosen point within $(1,2)$ and asserting the remaining part of $A_1$ skipping oversampling points, $A'_{1,2}$, in reverse till $\nt_j$, (iii) Following the steps similar to (i) and (ii) for checking $A_2$ but starting the check of first part of $A_2$ from the point chosen in (ii).}
    \label{fig:pnregmtlmain-sat}
\end{figure}
The full $\emitl$ formula $\psi_f$, is obtained by disjuncting over all  $n$ length sequences of states reachable at oversampling points $k_g$ between $i'_{g-1}$ and $i'_g$, and   all  possible values of the unique label $\nt_{j_g}\in \Int$  holding at point $k_g$. 
\\\noindent \textbf{4) Converting the $\emitl$ to $\emitl_{0,\infty}$}: 
We use the reduction from $\emitl$ to equivalent $\emitl_{0,\infty}$ formula \cite{khushraj-thesis}.
In $\psi_f$, only the $\fut$ operators are timed with intervals of the form $\langle l, u \rangle$ where $l>0$ and $u \ne \infty$, but the $\freg$ and $\sreg$ modalities are untimed. We can reduce these time intervals into purely lower bound ($\langle l, \infty) $) or upper bound ($\langle 0, u \rangle$) constraints using these oversampling points, preserving satisfiability, by reduction showed in \cite{khushraj-thesis} Chapter 5 lemma 5.5.2 Page 90-91. 

The above 4 step construction shows that (i) the equisatisfiable $\emitl_{0,\infty}$ formula $\psi$
is of the size $(\mathcal{O}(|\phi|^{Poly(n)})$ where, $n$ is the arity $\phi$. (ii) For a non-adjacent 1-$\tptl$ formula  $\gamma$, applying the reduction in section \ref{sec:tptltopnregmtl} yields $\phi$ of size  $\mathcal{O}(2^{Poly|\gamma|})$ and,  arity of $\phi = \mathcal{O}(|\gamma|^2)$. Also, after applying the reduction of section \ref{sec:pnregsatisfiability} by plugging the value of $|\phi|$ from and its arity from (ii) in (i),  we get the $\emitlinf$ formula $\psi$ of size  $\mathcal{O}(2^{Poly(|\gamma|)*Poly(n)})= \mathcal{O}(2^{Poly(|\gamma|)})$.


\section{Conclusion}
We generalized the notion of non-punctuality  to  non-adjacency in $\tptl$. We proved that satisfiabilty checking for non-adjacent 1-variable fragment of $\tptl$  is EXPSPACE Complete. This gives us a strictly more expressive logic than $\mitl$ while retaining its satisfaction complexity. An interesting open problem is to compare the expressive power of non-adjacent 1-$\tptl$ with that of $\mitl$ with Pnueli modalities (and hence Q2MLO) of \cite{rabinovichY}. We also leave open the satisfiabilty checking problem for non-adjacent $\tptl$ with multiple variables. 
\subsubsection*{Acknowledgements}
{We thank the reviewers of FM 2021 for their feedback to improve the article. This work is partially supported by the European Research Council through the SENTIENT project (ERC-2017-STG \#755953).}

\bibliographystyle{plain}
\bibliography{papers}

\appendix
\newpage
\section{Linear Temporal Logic}
\label{app:lt}
Formulae of $\ltl$ are built over a finite set of propositions $\Sigma$ using Boolean connectives and temporal modalities ($\until$ and $\since$) as follows: $\varphi::=a~|\top~|~\varphi \wedge \varphi~|~\neg \varphi~|
~\varphi \until \varphi ~|~ \varphi \since \varphi$, where  $a \in \Sigma$. 
The satisfaction of an $\ltl$ formula is evaluated over pointed words. For a word $\sigma = \sigma_1 \sigma_2 \ldots \sigma_n \in \Sigma^*$ and a point $i \in dom(\sigma)$, the satisfaction of an $\ltl$ formula $\varphi$ at point $i$ in $\sigma$ is defined, recursively, as follows:
\\(i) \noindent $\sigma, i \models a$  iff  $a \in \sigma_{i}$, (ii) $\sigma,i  \models \neg \varphi$ iff  $\sigma,i \nvDash  \varphi$,
\\(iii) $\rho,i \models \varphi_{1} \wedge \varphi_{2}$  iff  
$\sigma,i \models \varphi_{1}$ 
and $\sigma,i\ \models\ \varphi_{2}$,
\\(iv) $\sigma,i\ \models\ \varphi_{1} \until \varphi_{2}$  iff  $\exists j > i$, 
$\sigma,j\ \models\ \varphi_{2}$, and  $\sigma,k\ \models\ \varphi_{1}$ $\forall$ $i< k <j$,\\
(v) $\sigma,i\ \models\ \varphi_{1} \since_{I} \varphi_{2}$  iff  $\exists j < i$, 
$\sigma,j\ \models\ \varphi_{2}$, and  $\sigma,k\ \models\ \varphi_{1}$ $\forall$ $j< k <i$.

\noindent 
The language of any $\ltl$ formula $\varphi$ is defined as $L(\varphi) = \{\sigma | \sigma, 1\models \varphi\}$.
\section{Examples of non-adjacent specifications}
\label{sec:examp1}
Note that in 1-$\tptl$ we can abbreviate a constraint $T-x \in I$ by $\widehat{I}$.

\begin{example}[non-adjacent 1-$\tptl$] 
An indoor cycling exercise regime  may be specified as follows. One must  {slow pedal} (prop. \emph{sp}) for at least 60 seconds  but until the odometer reads 1km (prop. \emph{od1}). From then onwards one must {fast pedal} (prop \emph{fp}) to a time point in 
the interval [600:900] from the start of the exercise such that pulse rate is sufficiently high
(prop \emph{ph}) for the last 60 seconds of the exercise. This can be given by the following
formula.

\[
x. sp ~~\until~~ \left[
\begin{array}[t]{l}
 ~\widehat{[60,\infty)} ~~\land od1 ~~\land  \\
( ~~fp ~~\until~~ (\widehat{[600:900]} \land x. H(\widehat{[-60,0]} \Rightarrow ph))~~)
\end{array}
\right]
\]

It can be shown that this formula cannot be expressed in logic $\mitl$.
\end{example}

\begin{example}[non-adjacent $\pnregmtl$]
A sugar level test involves the following: A patient visits the lab and is given a sugar measurement test  (prop $sm$) to get fasting sugar level. After this she is given glucose (prop $gl$)
and this must be within 5 min of coming to the lab. After this the patient rests  between 120 and 150 minutes and she is administered sugar measurement again to check the sugar clearance level. After this,
the result (prop $rez$) is given out between 23 to 25 hours (1380:1500 min) of coming to the lab.  We assume that these propositions are mutually exclusive and prop $idle$ denotes negation of all of them.
This protocol is specified by the following non-adjacent $\pnregmtl$ formula. For convenience we
specify the automata by their regular expressions. We follow the convention where the tail automaton $A_{k+1}$ can be omitted in $\fregm^k$.
\[
\fregm^2_{[0,5],~ [1380,1500]} 
   \left[ 
      \begin{array}{l}
         sm \cdot (idle^*) \cdot (gl \land \fregm^1_{[120,150]}(~gl \cdot (idle^*) \cdot sm~), \\
         gl \cdot ((\neg rez)^*) \cdot rez
      \end{array}
    \right]
\]
For readability the two regular expressions of the top $\fregm^2$ are given in two separate lines. It states that the first regular expression must end at time within $[0,5]$ of starting and the second regular expression must end at a time within $[1380,1500]$ of starting. Note the nested use of $\fregm$ in order to anchor the duration between glucose and the second sugar measurement.  
\end{example}

\section{Section \ref{sec:natptl}}
\label{app:exp}

\subsection{Proof of Theorem \ref{thm:me}}
\begin{proof}
Let $\varphi$ be any $\mitl$ formula in negation normal form. Let $TPTL (\varphi)$ be defined as follows.
\begin{enumerate}
    \item $TPTL(a) = a, TPTL(\neg a) = \neg a$ for any $a \in \Sigma$ 
\item $TPTL(\varphi_1 \wedge \varphi_2) = TPTL(\varphi_1) \wedge TPTL(\varphi_2)$, 
\item $TPTL(\varphi_1 \vee \varphi_2) = TPTL(\varphi_1) \vee TPTL(\varphi_2)$ 
\item $\varphi_1 \until_I \varphi_2 = x.((TPTL(\varphi_1)) \until (TPTL(\varphi_2) \wedge x \in I))$, 
\item $\varphi_1 \since_I \varphi_2 = x.((TPTL(\varphi_1)) \since (TPTL(\varphi_2) \wedge x \in I))$ 
\item $\sbf_{[l,u)}(\varphi) = x.\sbf(x\in [0,l) \vee x \in [u,\infty) \vee TPTL(\varphi))$, 
\item $\sbm_{[l,u)}(\varphi) = x.\sbm(x\in [0,l) \vee x \in[u,\infty) \vee TPTL(\varphi))$\footnote{Similar reduction applies for other type of intervals(closed, open left-open right-closed)}
\end{enumerate}

Semantics of $\mtl$ and $\tptl$ imply that for any $\mtl$ formula $\varphi$, $\varphi \equiv TPTL(\varphi)$. 
If $\varphi$ is an $\mitl$ formula, then $TPTL(\varphi)$ is a non-adjacent 1-$\tptl$ formula (when applying 1-5 there will be only one non-punctual interval  and on applying 6 $[0,l)$ and $[u,\infty$ appear as intervals within the scope of clock reset). Hence, every formula in $\mitl$ can be expressed in 1-$\tptl$.
The strict containment is a consequence of Theorem 5 \cite{simoni}. More precisely, \cite{simoni} proves that $\gamma = Fx.(a \wedge \fut(b \wedge x \in (1,2) \wedge T-x \in (1,2) \wedge \fut(c \wedge T-x \in (1,2))))$ is not expressible in $\mtl[\until.\since]$. Note that $\gamma$ is indeed a non-adjacent 1-$\tptl$ formula as there is only one non-punctual interval $(1,2)$ within the scope of the freeze quantifier. 
\end{proof}
\subsection{Figure Illustrating $\pnregmtl$ semantics}
\begin{figure}[h]
\scalebox{0.6}{
    \includegraphics{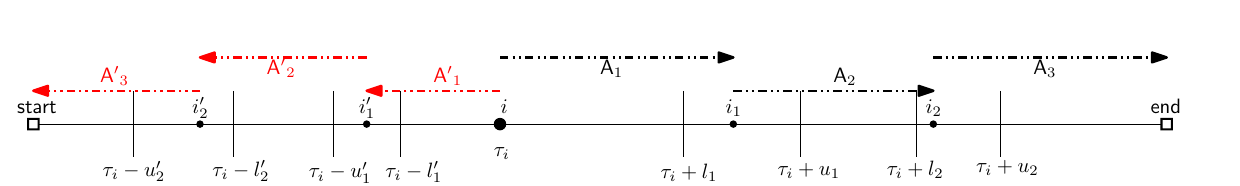}} 
    \caption{Semantics of $\pnregmtl$.$\rho, i{\models}\fregm^{2}_{I_1, I_2}(\mathsf{\re_1, \re_2, \re_3})$ {\&} $\rho, i {\models} \sregm^2_{J_1, J_2}(\mathsf{\re'_1, \re'_2, \re'_3})$ where $I_1 {=} \langle l_1, u_1\rangle, I_2 {=} \langle l_2, u_2\rangle, J_1 {=} \langle l'_1, u'_1\rangle$, $J_2 {=} \langle l'_2, u'_2\rangle$}
    \label{fig:pnregmtl}
\end{figure}

\section{Useful Notations for the rest of the Appendix}
We give some useful notations that will be used repeatedly in the following proofs follows.
\begin{enumerate}
    \item For any set $S$ containing propositions or formulae, let $\bigvee S$ denote $\bigvee \limits_{s\in S}(s)$. Similarly, let $A = \{I_1, \ldots, I_n\}$ be any set of intervals. $\bigcap A = I_1 \cap \ldots \cap I_n, \bigcup A = I_1 \cup \ldots \cup I_n$. For any automaton $A$ let $L(A)$ denote the language of $A$.
 \item For any NFA $A = (Q,\Sigma, i,F,\delta)$, for any $q \in Q$ and 
 $F' \subseteq Q$. $A[q,F'] = (Q,\Sigma, q, F', \delta)$. In other words, $A[q,F']$ is the automaton where the set of states and transition relation are identical to $A$, but the initial state is $q$ and the set of final stats is $F'$. 
 For the sake of brevity, we denote $A[q,\{q'\}]$ as $A[q,q']$. Let $\rev(A) = (Q\cup\{f\},\Sigma, f,\{i\},\delta')$, where $\delta'(f,\epsilon) = F$, for any $a \in \Sigma,q \in Q$, $(q,a,q') \in \delta'$ iff $(q',a,q) \in \delta$. In other words, $\rev(A)$ is an automata that accepts the reverse of the words accepted by $A$.
\item Given any sequence Str, let $|$Str$|$ denote length of the sequence Str. Str[$x$] denotes $x^{th}$ letter of the sequence if $x\le|$Str$|$. Str[$1...x$] denotes prefix of String Str ending at position $x$. Similarly, Str[$x...$] denotes suffix of string starting from $x$ position. Let $S_1, \ldots S_k$ be sets. Then, for any $t \in S_1 \times \ldots \times S_k$ if $t = (x_1, x_2, \ldots, x_k)$. $t(j)$, for any $j < k$, denotes $x_j$. 
\item For a timed word $\rho$, $\rho[i](1)$ gives the set of propositions true at point $i$.  $\rho[i](2)$ gives the timestamp of the point $i$.  
\end{enumerate}

\section{Details  for Section \ref{sec:interval word}}
In this section, we give the missing proofs for all results in section \ref{sec:interval word}. 
\subsection{Proof of Lemma \ref{lem:normalize}}
\label{app: normalize}
We split the proof of Lemma \ref{lem:normalize} into two parts. First, Lemma \ref{c1} shows $\kappa \cong \col(\kappa)$. Lemma \ref{fl2} implies that $\col(\kappa) \cong \normalize(\kappa)$. Hence, both Lemma \ref{c1}, \ref{fl2} together imply Lemma \ref{lem:normalize}.
\begin{lemma}
\label{c1}
Let $\kappa$ be an $I_\nu$-interval word. Then $\kappa \cong \col(\kappa)$.
\end{lemma}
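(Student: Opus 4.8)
The plan is to prove $\kappa \cong \col(\kappa)$, i.e. $\mathsf{Time}(\kappa) = \mathsf{Time}(\col(\kappa))$, by showing that for every pointed timed word $\rho, i$, we have $\rho, i \in \mathsf{Time}(\kappa)$ if and only if $\rho, i \in \mathsf{Time}(\col(\kappa))$. Recall that $\col(\kappa)$ is obtained from $\kappa = a_1 \dots a_n$ by replacing, at each position $j$, the largest set $\Ii_j \subseteq I_\nu$ of intervals contained in $a_j$ by the single interval $\bigcap_{I \in \Ii_j} I$; the propositional part $a_j \cap \Sigma$ and the position of $\anch$ are untouched, so in particular $\col(\kappa) \sim \kappa$ and $dom(\col(\kappa)) = dom(\kappa)$.

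First I would fix $\rho = (b_1, \tau_1) \dots (b_m, \tau_m)$ and $i$, and unfold the definition of consistency. By definition $\rho, i \in \mathsf{Time}(\kappa)$ requires $dom(\rho) = dom(\kappa)$, $i = \anch(\kappa)$, $b_j = a_j \cap \Sigma$ for all $j$, and, for every $j \ne i$ and every interval $I \in a_j \cap I_\nu$, that $\tau_j - \tau_i \in I$. Since $\col$ preserves the domain, the anchor, and the propositional labels, the first three clauses hold for $\kappa$ iff they hold for $\col(\kappa)$, and only the timing clause needs attention. For the timing clause, the key observation is the elementary set-theoretic fact that for a point $j$, the condition ``$\tau_j - \tau_i \in I$ for all $I \in \Ii_j$'' is logically equivalent to the single condition ``$\tau_j - \tau_i \in \bigcap_{I \in \Ii_j} I$''. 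Here I must be slightly careful about what survives collapsing at position $j$: $\col(\kappa)$'s label at $j$ retains exactly the interval $\bigcap \Ii_j$ (when $\Ii_j \ne \emptyset$) and no interval at all when $\Ii_j = \emptyset$. In the forward direction, if $\rho, i$ is consistent with $\kappa$ then $\tau_j - \tau_i \in I$ for each $I \in \Ii_j$, hence $\tau_j - \tau_i \in \bigcap \Ii_j$, so the (unique) interval appearing at position $j$ in $\col(\kappa)$ is respected; conversely, if $\tau_j - \tau_i \in \bigcap \Ii_j$ then $\tau_j - \tau_i \in I$ for every $I \in \Ii_j$, recovering consistency with $\kappa$ at position $j$. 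Ranging over all $j \ne i$ gives both inclusions.

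The argument is essentially a one-line observation per position plus bookkeeping, so I do not expect a genuine obstacle; the only subtlety worth spelling out carefully is the edge behaviour — positions $j$ where $\Ii_j$ is empty (no timing constraint either way, so nothing to check) and the anchor position $i$ itself (excluded from the timing clause in both words, and labelled identically apart from intervals, which are irrelevant at the anchor). One should also note that $\bigcap \Ii_j$ is again a genuine interval (an intersection of intervals is an interval, possibly empty), so $\col(\kappa)$ is a well-formed $\Clcap(I_\nu)$ interval word; if $\bigcap \Ii_j = \emptyset$ then no timed word can be consistent with either $\kappa$ or $\col(\kappa)$ through that position, which is again consistent with the claimed equality $\mathsf{Time}(\kappa) = \mathsf{Time}(\col(\kappa))$ (both sides may then be empty). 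Assembling these observations yields $\mathsf{Time}(\kappa) = \mathsf{Time}(\col(\kappa))$, i.e. $\kappa \cong \col(\kappa)$.
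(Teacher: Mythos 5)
Your proof is correct and follows essentially the same route as the paper's: unfold the definition of consistency, observe that the domain, anchor, and propositional clauses are invariant under $\col$, and reduce the timing clause to the set-theoretic equivalence between membership in every $I \in \Ii_j$ and membership in $\bigcap \Ii_j$. Your treatment of the edge cases (empty $\Ii_j$, empty intersection, the anchor position) is in fact slightly more explicit than the paper's.
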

\begin{proof}
A pointed word $\rho,i$ is consistent with $\kappa$ iff
\begin{itemize}
    \item[(i)] $dom(\rho){=}dom(\kappa)$, \item[(ii)]$i{=}\anch(\kappa)$, \item[(iii)] for all $j\in dom(\kappa)$, $\rho[j](1)=\kappa[j]\cap \Sigma$ and \item[(iv)] for all $j \ne i$, $I \in a_j \cap I_\nu$ implies $\rho[j](2) - \rho[i](2)\in I$.
    \item[(v)]$\kappa \sim \col(\kappa)$, by definition of $\col$.
\end{itemize}
 Hence given (v),
(i) iff (a) (ii)iff (b)(iii) iff (c) where:\\
(a) $dom(\rho){=}dom(\kappa){=}dom(\col(\kappa))$, (b) $i{=}\anch(\kappa) = \anch(\col(\kappa))$, (c) for all $j\in dom(\kappa)$, $\rho[j](1)=\kappa[j]\cap \Sigma = \col(\kappa)[j] \cap \Sigma$.
(iv) is equivalent to $\rho[j](2) - \rho[i](2) \in \bigcap(\kappa[j] \cap I_\nu)$, but $\bigcap(\kappa[j] \cap I_\nu) = \col(\kappa)[j]$. Hence, (iv) iff (d) $\rho[j](2) - \rho[i](2) \in \col(\kappa)[j]$.
Hence, (i)(ii)(iii) and (iv) iff (a)(b)(c) and (d). Hence, $\rho,i$ is consistent with $\kappa$ iff it is consistent with $\col(\kappa)$.
\end{proof}
\begin{lemma}
\label{fl2}
Let $\kappa$ and $\kappa'$ be $I_\nu$-interval words such that $\kappa \sim \kappa'$. If for all $I \in I_\nu$, $\first(\kappa, I) = \first(\kappa', I) $ and $\last(\kappa, I) = \last(\kappa', I)$, then $\kappa \cong \kappa'$.
\end{lemma}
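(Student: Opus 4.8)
The plan is to show that two similar $I_\nu$-interval words that agree on the first and last time-restricted points for every interval must abstract the same set of pointed timed words, i.e. $\mathsf{Time}(\kappa) = \mathsf{Time}(\kappa')$. Since $\cong$ is symmetric, it suffices to prove one containment, say $\mathsf{Time}(\kappa) \subseteq \mathsf{Time}(\kappa')$; the reverse follows by swapping the roles of $\kappa$ and $\kappa'$. So fix an arbitrary pointed timed word $\rho, i \in \mathsf{Time}(\kappa)$ and show $\rho, i \in \mathsf{Time}(\kappa')$.

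First I would unwind the definition of consistency. Because $\kappa \sim \kappa'$, we automatically get $dom(\rho) = dom(\kappa) = dom(\kappa')$, the anchor positions coincide ($i = \anch(\kappa) = \anch(\kappa')$), and for every $j$ the $\Sigma$-projections agree ($\rho[j](1) = \kappa[j] \cap \Sigma = \kappa'[j] \cap \Sigma$). Thus the only thing left to verify is the timing condition for $\kappa'$: for every $j \neq i$ and every $I \in \kappa'[j] \cap I_\nu$, we need $\tau_j - \tau_i \in I$. The key observation, already flagged in the proof sketch of Lemma~\ref{lem:normalize}, is a convexity/betweenness argument: fix such a $j$ and $I$. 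Let $i' = \first(\kappa', I)$ and $j' = \last(\kappa', I)$; these are defined since $j$ witnesses that $\kappa'$ has an $I$-time-restricted point, and by definition of $\first$ and $\last$ we have $i' \le j \le j'$. By hypothesis $\first(\kappa, I) = i'$ and $\last(\kappa, I) = j'$, so $i'$ and $j'$ are $I$-time-restricted in $\kappa$ as well. Since $\rho, i$ is consistent with $\kappa$, we get $\tau_{i'} - \tau_i \in I$ and $\tau_{j'} - \tau_i \in I$.

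Now I would use that $I$ is an interval (a convex subset of $\mathbb{R}$) together with the monotonicity of timestamps $\tau_{i'} \le \tau_j \le \tau_{j'}$ (valid because $i' \le j \le j'$ and $\rho$ is a timed word). Subtracting $\tau_i$ preserves the order: $\tau_{i'} - \tau_i \le \tau_j - \tau_i \le \tau_{j'} - \tau_i$. Since both endpoints of this chain lie in the convex set $I$, the middle value $\tau_j - \tau_i$ lies in $I$ as well. One small subtlety to handle carefully: the case $j = i'$ or $j = j'$, where the conclusion is immediate from consistency of $\kappa$, and also the case $i = i'$ or $i = j'$ is vacuous since the timing constraint is only imposed for $j \ne i$ — but here $j \neq i$ by assumption, while $i'$ or $j'$ could equal $i$; if $i' = i$ then $\tau_{i'} - \tau_i = 0$ still lies in $I$ only if $0 \in I$, which is guaranteed precisely because a point being $I$-time-restricted and equal to the anchor forces $0 \in I$ in any word in $\mathsf{Time}(\kappa)$ — so this edge case is consistent. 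Having checked the timing condition for all $j \ne i$, we conclude $\rho, i$ is consistent with $\kappa'$.

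The argument is essentially routine once set up, so I do not expect a genuine obstacle; the only care needed is in the bookkeeping of boundary/edge cases (anchor coinciding with a first/last point, empty sets of $I$-time-restricted points handled via $\first = \last = \bot$, and the degenerate interval $[0,0]$). Putting the two containments together yields $\mathsf{Time}(\kappa) = \mathsf{Time}(\kappa')$, i.e. $\kappa \cong \kappa'$, which is the statement of Lemma~\ref{fl2}. Combined with Lemma~\ref{c1} ($\kappa \cong \col(\kappa)$) and the fact that $\col(\kappa) \sim \normalize(\kappa)$ with matching first/last time-restricted points by construction of $\normalize$, this gives $\kappa \cong \normalize(\kappa)$ and hence Lemma~\ref{lem:normalize}.
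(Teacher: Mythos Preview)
Your proof is correct and follows essentially the same approach as the paper: both arguments reduce to the observation that if $\rho,i$ is consistent with $\kappa$, then for any $I$-time-restricted position $j$ in $\kappa'$ one has $\first(\kappa,I)\le j\le\last(\kappa,I)$, and monotonicity of timestamps together with convexity of $I$ forces $\tau_j-\tau_i\in I$; the paper phrases this as a contradiction while you argue directly, but the content is identical. One small remark: your edge-case discussion about $i'=i$ is unnecessary, since by definition of an $I_\nu$-interval word the anchor position contains no interval from $I_\nu$, so $\first(\kappa,I)$ and $\last(\kappa,I)$ can never coincide with the anchor.
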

\begin{proof}
The proof idea is the following:
\begin{itemize}
    \item As $\kappa \sim \kappa'$, the set of timed words consistent with any of them will have identical untimed behaviour.
    \item As for the timed part, the intermediate $I$-time restricted points ($I$-time restricted points other than the first and the last) do not offer any extra information regarding the timing behaviour. In other words, the restriction from the first and last $I$ restricted points will imply the restrictions offered by intermediate $I$ restricted points.
\end{itemize}
Let $\rho= (a_1, \tau_1),\ldots (a_n,\tau_n)$ be any timed word.
$\rho, i$ is consistent with $\kappa$ iff
\begin{enumerate}
    \item 
    \begin{itemize}
        \item[(i)]$dom(\rho) = dom(\kappa)$, \item[(ii)] $i = \anch(\kappa)$, \item[(iii)] for all $j \in dom(\rho)$, $\kappa[j]\cap \Sigma = a_j$ and \item[(iv)]for all $j\ne i \in dom(\rho)$, $\tau_j -\tau_i  \in \bigcap (I_\nu \cap \kappa[j])$.
    \end{itemize}
    \smallskip 
    
Similarly, $\rho, i$ is consistent with $\kappa'$ if and only if 
\item 
\begin{itemize}
    \item[(a)] $dom(\rho) = dom(\kappa')$, \item[(b)]$i = \anch(\kappa')$, \item[(c)]for all $j \in dom(\rho)$, if $\kappa'[j]  \cap \Sigma = a_j$ and \item[(d)] for all $j\ne i \in dom(\rho)$, $\tau_j -\tau_i  \in \bigcap (I_\nu \cap \kappa'[j])$.

\end{itemize}

\end{enumerate}

Note that as $\kappa \sim \kappa'$, we have, $dom(\kappa) = dom(\kappa')$, $\anch(\kappa) = \anch(\kappa')$, for all $j \in dom (\kappa)$, $\kappa[j]\cap \Sigma = \kappa'[j]\cap \Sigma$. Thus, 2(a) $\equiv$ 1(i), 2(b) $\equiv$ 1(ii) and 2(c) $\equiv$ 1(iii).

Suppose there exists a $\rho, i$ consistent with $\kappa$ but there exists $j' \ne i \in dom(\rho)$, $\tau_j' -\tau_i  \notin I'$ for some $I' \in \kappa'[j']$. By definition, $\first(\kappa',I') \le j' \le \last(\kappa',I')$. But $\first(\kappa',I') = \first(\kappa,I')$, $\last(\kappa',I') = \last(\kappa,I')$. Hence, $\first(\kappa,I') \le j' \le \last(\kappa, I')$. As the time stamps of the timed word increases monotonically, $x \le y \le z$ implies that $\tau_x \le \tau_y \le \tau_z$ which implies that $\tau_x - \tau_i \le \tau_y - \tau_i \le \tau_z -\tau_i$. Hence,  $\tau_{\first(\kappa,I')} - \tau_i \le \tau_{j'} - \tau_i \le \tau_{\last(\kappa,I')}- \tau_i$. But $\tau_{\first(\kappa,I')} - \tau_i \in I'$ and $\tau_{\last(\kappa,I')}- \tau_i \in I'$ because $\rho$ is consistent with $\kappa$. This implies, that $\tau_{j'} - \tau_i \in I'$ (as $I'$ is a convex set) which is a contradiction. Hence, if $\rho,i$ is consistent with $\kappa$ then it is consistent with $\kappa'$ too. By symmetry, if $\rho, i$ is consistent with $\kappa'$, it is also consistent with $\kappa$. Hence $\kappa \cong \kappa'$.
\end{proof}

\section{Details for  Section \ref{sec:tptltopnregmtl}}
In this section, we add all the missing details for section \ref{sec:tptltopnregmtl}. 
\subsection{Proof of Theorem \ref{thm:tptl-ltl}}
\label{app:tptl-ltl}
\begin{proof}
Note that for any timed word $\rho = (a_1,\tau_1) \ldots (a_n,\tau_n) $ and $i \in dom(\rho)$, $\rho,i, [x=:\tau_i] \models \varphi $ is equivalent to $\rho, i \models \psi$.
Let $\kappa$ be any $\I_\nu$-interval word over $\Sigma$ with $\anch(\kappa) = i$. 
\begin{itemize}
    \item [(i)] If $\kappa, i \models \ltl(\varphi)$ then for all 
$\rho \in \mathsf{Time}({\kappa})$ $\rho,i \models \psi$ 
\item[(ii)] If for any timed word $\rho$, $\rho,i \models \psi$ then there exists some $\I_\nu$-interval word over $\Sigma$ such that $\rho, i \in \mathsf{Time}(\kappa)$ and $\kappa, i \models \ltl(\varphi)$. 
Intuitively, this is because $\ltl(\varphi)$ is asserting similar timing constraints via intervals words that is asserted by $\varphi$ on the timed words directly.
\end{itemize}

Formally, (i) and (ii) rely on  Lemma \ref{prop:tptl-ltl}. Substitute $j=i$ and $\gamma = \varphi$ in Lemma \ref{prop:tptl-ltl}.
Hence, $\rho, i \models \psi$ if and only if $\rho, i \in \mathsf{Time}(\{\kappa| \kappa, i \models \ltl(\varphi)\})$. Moreover, $\kappa \in L(\alpha)$ if and only if $\kappa, i \models \ltl(\varphi)$ and $\anch(\kappa) = i$.
\end{proof}

\begin{lemma}
\label{prop:tptl-ltl}
Let $\gamma$ be any subformula of $\varphi$. 
\begin{itemize}
    \item[(i)] For any $\I_\nu$-interval word $\kappa$ and $j \in dom(\kappa)$,  $\kappa,j \models \ltl(\gamma)$ implies for all $\rho, i \in \mathsf{Time}(\kappa)$, $\rho,j, [x =: \tau_i] \models \gamma$.
    \item[(ii)] For every timed word $\rho = (a_1,\tau_1) \ldots (a_n,\tau_n)$ and $j \in dom(\rho)$, $\rho,j, [x =: \tau_i] \models \gamma$ implies there exists an $\I_\nu$-interval word $\kappa$ such that $\rho, i \in \mathsf{Time}(\kappa)$ and $\kappa, j \models \ltl(\gamma)$.
\end{itemize}
\end{lemma}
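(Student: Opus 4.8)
The plan is to prove (i) and (ii) of Lemma~\ref{prop:tptl-ltl} together by structural induction on $\gamma$. Since $\gamma$ is a subformula of the body $\varphi$ of a \emph{simple} formula $\psi=x.\varphi$, it contains no freeze quantifier, so only the cases $a$, $\neg a$, the constants, the clock constraints $T-x\in I$ and $x-T\in I$, the Booleans $\wedge,\vee$, and the temporal operators $\until,\since,\sbf,\sbm$ need to be treated. The single recurring observation that makes the induction work is that, since $\gamma$ is in negation normal form, clock constraints occur only positively; hence for every subformula $\delta$ the translated formula $\ltl(\delta)$ is syntactically positive in the interval atoms of $I_\nu$ and never mentions $\anch$, and is therefore \emph{monotone} in the interval atoms: adding interval atoms at arbitrary positions of an interval word cannot falsify $\ltl(\delta)$ anywhere.

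For the base cases, part~(i) is read off directly from the definition of $\mathsf{Time}(\kappa)$: the clause $\rho[j](1)=\kappa[j]\cap\Sigma$ handles $a$, $\neg a$ and the constants (using that $\Sigma$ is disjoint from $\{\anch\}\cup I_\nu$ and that the anchor label carries no interval atom), while for a clock constraint the clause ``$I\in\kappa[j]\cap I_\nu$ with $j\neq\anch(\kappa)$ implies $\tau_j-\tau_i\in I$'' is exactly what lets us pass from $\kappa,j\models I$ to $\rho,j,[x{:=}\tau_i]\models T-x\in I$. For part~(ii) we must also produce a concrete $I_\nu$-interval word: from the given $\rho,i$ take $\kappa$ with $\kappa[m]=\rho[m](1)$ for $m\notin\{i,j\}$, $\kappa[i]=\rho[i](1)\cup\{\anch\}$, and $\kappa[j]=\rho[j](1)$ augmented, when $\gamma$ is a clock constraint, by the witnessing interval atom (this is consistent, since then $j\neq i$, the anchor being unable to carry interval atoms). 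The membership hypothesis on $\rho$ is precisely consistency clause (iv) for that lone interval atom, so $\rho,i\in\mathsf{Time}(\kappa)$, and $\kappa,j\models\ltl(\gamma)$ holds by construction.

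The inductive step for part~(i) is then immediate: $\ltl$ commutes with every connective and operator, the valuation $[x{:=}\tau_i]$ stays fixed while a quantifier-free formula is evaluated, and one applies the inductive hypothesis to subformulae over the \emph{same} $\kappa$. The real work is in part~(ii), where the interval words delivered by the inductive hypothesis for distinct subformulae must be fused; for this I would isolate a \emph{merge lemma}: if $\kappa_1,\dots,\kappa_r$ are $I_\nu$-interval words each with $\rho,i\in\mathsf{Time}(\kappa_\ell)$ (so in particular pairwise $\sim$-similar), then their pointwise union $\kappa=\bigcup_\ell\kappa_\ell$ is again an $I_\nu$-interval word with $\rho,i\in\mathsf{Time}(\kappa)$; the only step needing care is consistency clause (iv), which survives because every interval atom of $\kappa[m]$ already appears in some $\kappa_\ell[m]$ and is thus already time-consistent. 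For $\gamma=\gamma_1\until\gamma_2$ one then takes the witness position $j'$ and the finitely many interval words given by the hypothesis for $\gamma_2$ at $j'$ and for $\gamma_1$ at each $k\in(j,j')$, merges them, and uses monotonicity to conclude $\ltl(\gamma_1)\until\ltl(\gamma_2)$ at $j$ in the merge; the cases $\since,\sbf,\sbm,\wedge$ are analogous and $\vee$ needs no merge. I expect the merge lemma together with the monotonicity observation to be the crux of the argument, since together they are what reconciles the per-subformula ``there exists an interval word'' of part~(ii) with the global, word-level semantics of LTL; once both are in place the structural induction is routine. Finally, instantiating (i) and (ii) at $j=i$ and $\gamma=\varphi$ yields Theorem~\ref{thm:tptl-ltl}, as already noted in the excerpt.
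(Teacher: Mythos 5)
Your proposal is correct and follows essentially the same route as the paper: an induction on the structure (the paper uses modal depth) of $\gamma$, whose two load-bearing ingredients --- the monotonicity of $\ltl(\gamma)$ in the interval atoms (a consequence of negation normal form, the paper's Proposition~\ref{prop:ltlextended}) and the merge lemma stating that the pointwise union of interval words consistent with $\rho,i$ abstracts the intersection of their $\mathsf{Time}$ sets (the paper's Proposition~\ref{prop: unioniw}) --- are exactly the two auxiliary propositions the paper isolates. The only cosmetic difference is that you argue part~(ii) directly by merging the witnesses supplied by the inductive hypothesis, whereas the paper phrases the inductive step as a proof by contradiction; the underlying argument is the same.
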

\begin{proof}
We apply induction on the modal depth of $\gamma$. For depth 0 formulae,  $\gamma$ is a propositional logic formula and  the statement holds trivially. If $\gamma = x \in I$, then $\ltl(\gamma) = I$. For any $\I_\nu$-interval word $\kappa$, if $\kappa, j \models I$ then any $\rho, i$ is consistent with $\kappa$ iff $\tau_j - \tau_i \in I$. This implies, that  $\rho,j, [x = \tau_i] \models \gamma$. Similar argument can be extended to handle the Boolean closure of propositional formulae and clock constraints. 

Assume the above result is true for  formulae of modal depth $k-1$ but not true for some $\gamma$ with modal depth $k > 1$. Hence, $\gamma$ can be written as a Boolean formula over subformulae $\gamma_1, \ldots, \gamma_n$ such that all the formulae are of the modal depth at most $k$, there are no Boolean operators at the top most level of these formulae, and there exists at least one formula $\gamma_a$ of depth $k$ which is a counter-example to the above result. 
Then, $\gamma_a$ is such that:
\begin{itemize}
    \item (i) fails to hold. Let $\kappa,j $ be any arbitrary pointed $\I_\nu$-interval word. $\kappa, j \models \ltl(\gamma_a)$  but there exists $\rho,i\in \mathsf{Time}(\kappa)$ such that $\rho,j, [x =: \tau_i] \models \neg \gamma_a$ or,
    \item (ii) fails to hold. There exists a word $\rho$ such that for $i,j \in dom(\rho)$,  $\rho,j, [x =: \tau_i] \models \neg \gamma_a$ and, there exists an $\I_\nu$-interval word $\kappa$ such that $\rho,i\in \mathsf{Time}(\kappa)$ and
    $\kappa, j \models \ltl(\gamma_a)$.
\end{itemize}
Suppose (i) fails to hold. Then, $\kappa, j \models \ltl(\gamma_a)$  but there exists $\rho,i\in \mathsf{Time}(\kappa)$ such that $\rho,j, [x =: \tau_i] \models \neg \gamma_a$.
Let the outermost modal depth of $\gamma_a$ be $\until$(the reasoning for since modality is similar). Hence, $\gamma_a = \gamma_{a,1} \until \gamma_{a,2}$. $\kappa, j \models \ltl(\gamma_a)$ implies there exists a point $j' >j$ such that $\kappa, j' \models \ltl(\gamma_{a,2})$ and for all points $j<j''<j'$ $\kappa, j'' \models \ltl(\gamma_{a,1})$. 
But as $\gamma_{a_1}$ and $\gamma_{a_2}$ are formulae of depth less than $k$, for every $\rho,i$ consistent with $\kappa$, $\rho, j', [x = \tau_i] \models \gamma_{a,2}$ and for all points $j<j''<j'$ $\rho, j'', [x = \tau_i] \models \gamma_{a,1}$. This implies that for every $\rho, i$ consistent with $\kappa$, $\rho,j, [x = \tau_i] \models \gamma_a$, which is a contradiction. Hence, (i) is true for all subformulae of  $\gamma$.

Suppose (ii) fails to hold. Let $\rho,j, [x=:\tau_i] \models \neg \gamma$ for some $\rho$ and $i,j \in dom(\rho)$. Let $\kappa$ be an $\I_\nu$-interval word such that $\rho,i \in \mathsf{Time}(\kappa)$ and $\kappa \models \ltl(\gamma)$. Let the outermost modal depth of $\gamma_a$ be $\until$ (the reasoning for since is analogous). Hence, $\gamma_a = \gamma_{a,1} \until \gamma_{a,2}$. $\rho,j, [x=:\tau_i] \models \neg \gamma$ implies that 
\begin{itemize}
    \item[(a)]
for all points $j' > j$, $\rho, j', [x=:\tau_i] \models \neg \gamma_{a,2}$, or, 
\item[(b)]  there exists a point $j'> j$ such that $\rho, j' \models \neg(\gamma_{a,2} \wedge \gamma_{a,1})$ and for all points $j<j''<j'$, $\rho, j'' \models \neg \gamma_{a,2}$.
\end{itemize}
 By induction hypothesis, (a) implies there exists a $\I_\nu$-interval word $\kappa'$ such that $\rho, i \in \mathsf{Time}(\kappa')$ and for all points $j' > j$, $\kappa',j' \models \neg \ltl(\gamma_{a,2})$. Similarly (b) implies there exists an $\I_\nu$-interval word $\kappa''$ such that $\rho, i \in \mathsf{Time}(\kappa'')$ and there exists a point $j'> j$ such that $\kappa'', j' \models \neg\ltl(\gamma_{a,2} \wedge \gamma_{a,1})$ and for all points $j<j''<j'$, $\kappa'', j'' \models \neg \ltl(\gamma_{a,2})$. Note that $\kappa' \sim \kappa''$ as $\rho, i$ is consistent with both $\kappa'$ and $\kappa''$. Consider a word $\kappa \sim \kappa' \sim \kappa''$ and $\kappa[i] = \kappa'[i] \cup \kappa''[i]$. By Proposition \ref{prop: unioniw} (below) $\mathsf{Time}(\kappa) = \mathsf{Time}(\kappa') \cap \mathsf{Time}(\kappa'')$. Hence, $\rho, i \in \mathsf{Time}(\kappa)$. Moreover, by Proposition \ref{prop:ltlextended} (below), 
 \begin{itemize}
     \item [(c)] for all points $j' > j$, $\kappa,j' \models \neg \ltl(\gamma_{a,2})$ and
     \item[(d)] there exists a point $j'> j$ such that $\kappa, j' \models \neg\ltl(\gamma_{a,2} \wedge \gamma_{a,1})$ and for all point $j<j''<j'$, $\kappa, j'' \models \neg \ltl(\gamma_{a,2})$.
 \end{itemize}
  Both (c) and (d) implies $\kappa, j'' \models \neg \ltl(\gamma_a)$, which is a contradiction. Hence, (ii) is true for any subformulae $\gamma$ of $\varphi$.
\end{proof}

\begin{proposition}
\label{prop:ltlextended}
Let $\gamma$ be any subformulae of $\varphi$. Let $\kappa, \kappa'$ be any $\I_\nu$-interval words such that $\kappa'\sim \kappa$ and for any $i \in dom(\kappa)$ $\kappa[i]\subseteq \kappa'[i]$. For any $j \in dom(\kappa)$, if $\kappa, j \models \ltl(\gamma)$ then $\kappa', j \models \ltl(\gamma)$.
\end{proposition}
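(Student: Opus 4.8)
The plan is a routine structural induction on the formula $\gamma$; the whole content of the proposition sits in one observation. Since $\gamma$ is a subformula of $\varphi$ and $\varphi$ is taken in negation normal form, every clock constraint $x\in I$ occurs in $\gamma$ only positively, so the atomic proposition $I$ occurs only positively in $\ltl(\gamma)$. Hence $\ltl(\gamma)$ is \emph{monotone} in the interval propositions: making more of the propositions $I\in\I_\nu$ true at positions can only keep $\ltl(\gamma)$ true. Now $\kappa\sim\kappa'$ forces $dom(\kappa)=dom(\kappa')$, $\kappa[i]\cap\Sigma=\kappa'[i]\cap\Sigma$ for every $i$, and $\anch(\kappa)=\anch(\kappa')$; since an interval word has a unique anchor, $\kappa[i]$ and $\kappa'[i]$ also agree on whether $\anch$ is present. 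Together with $\kappa[i]\subseteq\kappa'[i]$ this means the set $\kappa'[i]\setminus\kappa[i]$ consists only of interval labels from $\I_\nu$. So passing from $\kappa$ to $\kappa'$ adds only positively-occurring atoms, and monotonicity of $\ltl(\gamma)$ gives the claim.

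First I would dispatch the base cases. If $\gamma\in\{a,\neg a,\top,\bot\}$ then $\ltl(\gamma)=\gamma$ mentions only $\Sigma$-letters (or nothing at all), and since $\kappa[j]\cap\Sigma=\kappa'[j]\cap\Sigma$ satisfaction at $j$ is literally unchanged. If $\gamma$ is a clock constraint $T-x\in I$ or $x-T\in I$, then $\ltl(\gamma)=I$ and $\kappa,j\models I$ means $I\in\kappa[j]\subseteq\kappa'[j]$, hence $\kappa',j\models I$. This last case is exactly where one-directionality is essential: the converse fails (removing the label $I$ falsifies $I$ while changing nothing else), which is why the proposition is stated as an implication, not an equivalence, and why the negation-normal-form hypothesis on $\varphi$ cannot be dropped.

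For the inductive step I would handle the Boolean connectives $\wedge,\vee$ — which, being negation-free in negation normal form, commute with satisfaction componentwise and preserve implication — and then the temporal operators $\until,\since,\sbf,\sbm$. In each temporal case, $\kappa,j\models\ltl(\gamma)$ unfolds to a statement quantifying (existentially or universally) over positions $j'$ and intermediate positions $j''$ at which subformulae of $\gamma$ of strictly smaller size hold in $\kappa$; because $dom(\kappa)=dom(\kappa')$ the same witnesses and ranges of positions are available in $\kappa'$, and the induction hypothesis transfers each pointwise satisfaction from $\kappa$ to $\kappa'$, after which reassembling yields $\kappa',j\models\ltl(\gamma)$. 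This transfer is the same bookkeeping already carried out in the proof of Lemma~\ref{prop:tptl-ltl}.

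I do not expect a genuine obstacle here; the only point needing care is to verify that no clause in the syntax of $\gamma$ secretly introduces a negated interval atom — in particular $\sbf$ and $\sbm$ must be kept as primitives of the negation normal form rather than being re-encoded via $\neg\fut\neg$ and $\neg\past\neg$ — and once that is checked every case of the induction is immediate.
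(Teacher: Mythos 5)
Your proposal is correct and follows essentially the same route as the paper: both arguments rest on the observation that negation normal form forces interval atoms to occur only positively in $\ltl(\gamma)$, handle the base cases by $I\in\kappa[j]\subseteq\kappa'[j]$ and agreement on the $\Sigma$-parts, and push the claim through the temporal operators by transferring witnesses under the induction hypothesis (the paper inducts on modal depth and phrases the step as a contradiction, but this is only cosmetic). Your explicit remark that $\kappa'[i]\setminus\kappa[i]$ can contain only interval labels is a small clarification the paper leaves implicit, but the substance is identical.
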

\begin{proof}
Note that $\gamma$ is in negation normal form. Hence, any subformulae of the form $x \in I$ will never be within the scope of a negation. Hence, $\gamma$ can never have a subformulae of the form $\neg (x \in I)$. This implies that $\ltl(\gamma)$ can never have a subformulae of the form $\neg I$ for any $I \in \I_\nu$.
We apply induction on modal depth of $\gamma$. For depth 0 formulae,  $\gamma$ is a propositional logic formula and $\ltl(\gamma)$ is also a propositional logic formula over $\Sigma$ and the statement holds trivially for any pair of similar $\I_\nu$-interval words.
If $\gamma = x \in I$, then $\ltl(\gamma) = I$. If $\kappa, j \models I$ then $I \in \kappa[j]$. This implies that $I \in \kappa' [j]$ (as $\kappa[j] \subseteq \kappa'[j]$). Hence, $\kappa', j\models \ltl(\gamma)$. Similar argument can be extended to handle the Boolean closure of atomic formulae and clock constraints. 

Assume that the above proposition is true for  formulae of modal depth $k-1$ but not true for some $\gamma$ with modal depth $k > 1$. Hence, $\gamma$ can be written as a Boolean formula over subformulae $\gamma_1, \ldots, \gamma_n$ such that all the formulae are of the modal depth at most $k$, there are no Boolean operators at the top most level of these formulae, and there exists at least one formula $\gamma_a$ of depth $k$ such that $\kappa, j \models \ltl(\gamma_a)$ but $\kappa', j$ does not. Let the outermost modality of $\gamma_a$ be $\until$(for other modalities, similar reasoning can be given). Hence, $\gamma_a = \gamma_{a,1} \until \gamma_{a,2}$. This implies that there exists $j' >j$ such that $\kappa, j' \models \ltl(\gamma_{a,2})$ and for all $j < j''<j'$, $\kappa, j'' \models \ltl(\gamma_{a,1})$. But, both $md(\gamma_{a,1})$ and $md(\gamma_{a,2})$ are less than $k$. Hence, by induction hypothesis, $\kappa',j' \models \ltl(\gamma_{a,2})$ and $\kappa', j'' \models \ltl(\gamma_{a,1})$. Hence, $\kappa', j \models \ltl(\gamma_a)$, which is a contradiction. Hence, if $\kappa, j \models \ltl(\gamma)$ then $\kappa', j \models \ltl(\gamma)$.
\end{proof}

\begin{proposition}
\label{prop: unioniw}
Let $\kappa, \kappa', \kappa''$ be $\I_\nu$-interval words such that $\kappa \sim \kappa' \sim \kappa''$ and $\kappa[j] = \kappa'[j] \cup \kappa''[j]$ for any $j \in dom(\kappa)$. Then $\mathsf{Time}(\kappa) = \mathsf{Time}(\kappa') \cap \mathsf{Time}(\kappa'')$.
\end{proposition}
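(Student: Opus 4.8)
The plan is to unfold the definition of the consistency relation between pointed timed words and $I_\nu$-interval words and observe that it decomposes as a conjunction over the interval labels at each position, so that replacing a label set by a union of two label sets turns consistency into the conjunction of two consistency requirements.

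First I would record that, since $\kappa \sim \kappa' \sim \kappa''$, all three words share the same domain, the same anchor position, and the same propositional labelling $\kappa[j]\cap\Sigma = \kappa'[j]\cap\Sigma = \kappa''[j]\cap\Sigma$ at every $j$. Hence, for any fixed pointed timed word $\rho,i$, the first three clauses in ``$\rho,i$ is consistent with $\kappa$'' --- namely $dom(\rho){=}dom(\kappa)$, $i{=}\anch(\kappa)$, and $\rho[j](1){=}\kappa[j]\cap\Sigma$ for all $j$ --- hold for $\kappa$ exactly when the corresponding clauses hold for $\kappa'$ and for $\kappa''$. In particular the hypothesis $\kappa[j]{=}\kappa'[j]\cup\kappa''[j]$ is only informative on the $I_\nu$-component: it gives $\kappa[j]\cap I_\nu = (\kappa'[j]\cap I_\nu)\cup(\kappa''[j]\cap I_\nu)$ for every $j\in dom(\kappa)$, and does not disturb the anchor or propositional components.

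Next I would treat the timing clause. The last clause of consistency reads: for all $j\ne i$ and all $I\in \kappa[j]\cap I_\nu$, $\rho[j](2)-\rho[i](2)\in I$. Using $\kappa[j]\cap I_\nu = (\kappa'[j]\cap I_\nu)\cup(\kappa''[j]\cap I_\nu)$, quantifying over this union is equivalent to quantifying over $\kappa'[j]\cap I_\nu$ and over $\kappa''[j]\cap I_\nu$ separately; that is, the timing clause for $\kappa$ is the logical conjunction of the timing clauses for $\kappa'$ and for $\kappa''$. Combining this with the previous paragraph yields: $\rho,i$ is consistent with $\kappa$ iff $\rho,i$ is consistent with $\kappa'$ and with $\kappa''$. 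Since $\mathsf{Time}(\cdot)$ is by definition the set of all consistent pointed timed words, this is exactly $\mathsf{Time}(\kappa)=\mathsf{Time}(\kappa')\cap\mathsf{Time}(\kappa'')$.

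I do not anticipate a real obstacle: the statement is essentially the observation that imposing interval-membership constraints for a union of interval-label sets is the same as imposing them for each set in turn. The only point needing a line of care is the one flagged above --- that similarity $\sim$ already pins down the non-interval components, so that forming $\kappa'[j]\cup\kappa''[j]$ neither introduces a spurious anchor marker nor changes $\kappa[j]\cap\Sigma$, and hence the equivalence of the first three clauses is genuinely automatic rather than something that must be argued afresh.
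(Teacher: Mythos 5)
Your proof is correct and takes essentially the same approach as the paper's: both unfold the definition of consistency, use the similarity hypothesis to equate the domain, anchor, and propositional clauses across the three words, and observe that the timing clause quantified over $\kappa[j]\cap I_\nu = (\kappa'[j]\cap I_\nu)\cup(\kappa''[j]\cap I_\nu)$ is precisely the conjunction of the timing clauses for $\kappa'$ and $\kappa''$. Your explicit remark that the union hypothesis only affects the interval component is a slightly more careful articulation of a point the paper leaves implicit, but the argument is the same.
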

\begin{proof}
We need to prove that $\rho,i \in \mathsf{Time}(\kappa)$ iff $\rho, i \in \mathsf{Time}(\kappa')$ and $\rho, i \in \mathsf{Time}(\kappa'')$.
For any $\rho= (a_1,\tau_1) \ldots (a_n,\tau_n)$ and $i \in dom(\rho)$,\\
$\rho,i \in \mathsf{Time}(\kappa') \cap \mathsf{Time}(\kappa'')$ iff \\
$\forall j \in dom(\rho), a_j = \kappa'[j]\cap \Sigma = \kappa''[j] \cap \Sigma$(as $\kappa' \sim \kappa''$) and $\tau_j - \tau_i \in I$ for all $I \in (\kappa'[j]\cap \I_\nu) \cup (\kappa''[j]\cap \I_\nu)$ iff\\
$\forall j \in dom(\rho), a_j = \kappa[j]\cap \Sigma$(as $\kappa \sim \kappa'' \sim \kappa'$) and $\tau_j - \tau_i \in I$ for all $I \in (\kappa[j]\cap \I_\nu)$(as $\kappa[j]  = \kappa'[j] \cup \kappa''[j]$) iff \\
$\rho,i \in \mathsf{Time}(\kappa)$.
\end{proof}

\subsection{Construction of NFA of type $\seq$}
\label{app:aseqconstruct}
Let $\seq$ be any sequence in $\mathcal{T}(I_\nu)$. Given $A{=}(Q, \init, 2^{\Sigma'}, \delta, F)$ over collapsed interval words from LTL formula $\alpha$. We construct an NFA  $A_\seq{=}(Q \times \{1,2,\ldots |\seq|+1\}\cup \{\bot\}, (\init, 1), 2^{\Sigma'}, \delta_\seq, F \times \{|\seq|+1\})$ such that $L(A_\seq){=}\normalize(L(A) \cap W_\seq)$.

For any $(q,i) \in  Q \times \{1, \ldots, |\seq|+1 \}$,  $S \in 2^{\Sigma \cup I_\nu \cup \anch}$ and $I \in I_\nu \cup \{\anch\}$ such that $\seq[i]{=}I$, $\delta_\seq$ is defined as follows:
\\$\bullet$ If $1\le i \le |\seq|$ 
\begin{itemize}
    \item {\bf(i)} If $\seq[i] \in S$, then $\delta_\seq((q,i), S){=}\delta(q, S) \times \{i+1\}$
\item {\bf(ii)} If $\seq[i] \notin S \wedge S \setminus \Sigma \ne \emptyset$, then $\delta_\seq((q,i), S){=}\emptyset$ 
\item {\bf (iii)} If $S \setminus \Sigma{=}\emptyset$, then $\delta((q,i), S){=}[\bigcup \limits_{I’ \in \I_i} \delta (q,S \cup \{I'\}) \cup \delta_\seq(q, S)] \times \{i\}$ where $\I_i{=}\{I' | I' \in I_{\nu} \wedge \exists i',i''. i' < i \le i''$, $\seq[i'] {=}\seq[i'']{=}I'\}$.
\end{itemize}
$\bullet$ If $i{=}|\seq|{+}1$, $\delta_\seq((q, i), S){=}\emptyset$ if 
$S \setminus \Sigma \ne \emptyset$, 
$\delta_\seq((q,i), S){=}\delta(q, S)\times \{i\}$ if 
$S \setminus \Sigma{=}\emptyset$.
Let $W_\seq$ be all the set of $I_\nu$ intervals words over $\Sigma$ of type $\seq$.

\subsection{Proof of Lemma \ref{lem:nfatonfaseq}}
\label{app:lemnfa}
Let $W_\seq$ be the set of $I_\nu$-interval words of type $\seq$. 

\begin{proof}
\begin{enumerate}
\item (i) Let $w$ be any collapsed interval word of type $\seq$ and $w'{=}\normalize (w)$. Let $\bseq(w){=}\bseq(w')= i_1 i_2 \ldots i_n$ be the boundary positions. Let $i_0 = 0$ and $i_{n+1} = \infty$. Let $j$ be any number such that $i_{k-1} \le j < i_k$. If a state $q$ is reachable by $A$ on reading first $j$ letters of $w$, then $(q, k)$ is reachable by $A_\seq$ on reading the corresponding first $j$ letters of $w'$.

\item (ii) Let $w'$ be any normalized timed word of type $\seq$. Let $\bseq(w')= i_1 i_2 \ldots i_n$ be the boundary positions. Let $i_0 = 0$ and $i_{n+1} = \infty$. Let $j$ be any number such that $i_{k-1} \le j < i_k$.
If a state $(q,k)$ is reachable by $A_\seq$ on reading first $j$ letters of $w'$, then there exists a word $w$ of type $\seq$ such that $w' = \normalize(w)$ and $q$ is reachable by $A$ on reading the corresponding first $j$ letters of $w$.
\end{enumerate}
Statement (i) and (ii) are formally proved in Lemma \ref{lem:AsubsetAseq} and Lemma \ref{lem:AseqsubsetA}, respectively.

(i) implies that on reading any word $w \in W_\seq$, if $A$ reaches the final state then $A_\seq$ reaches the final state on reading $w'{=}\normalize(w)$. Hence, (a) $L(A_\seq) \supseteq \normalize(L(A)\cap W_\seq)$

(ii) implies that on reading any normalized word $w' \in W_\seq$, if $A_\seq$ reaches the final state then there exists a word $w$ accepted by $A$ such that $w = \normalize(w')$(hence, $w \in W_\seq$).  
Hence, (b) $L(A_\seq)\cap \normalize(W_\seq) \subseteq \normalize(L(A)\cap W_\seq)$. By proposition \ref{prop: aseqnorm} (below) we have (c) $L(A_\seq)\cap \normalize(W_\seq) = L(A_\seq)$. Moreover, by (b) and (c) we have (d) $L(A_\seq) \subseteq \normalize(L(A)\cap W_\seq)$. By (a) and (d) we have $L(A_\seq) = \normalize(L(A)\cap W_\seq)$.


\end{proof}

\begin{proposition}
\label{prop: aseqnorm}
$L(A_\seq) \subseteq \normalize(W_\seq)$
\end{proposition}
\begin{proof}
Let $Q_i{=}Q{\times}\{i\}$. By construction of $A_\seq$, transition from a state in $Q_{i}$ to $Q_{i'}$, where $i{\ne}i'$ happens only on reading an interval $I{=}\seq [i]$\footnote{Let $I$ be any symbol in $I_\nu \cup \{\anch\}$. By ``reading of an interval $I$'' we mean ``reading a symbol $S$ containing interval $I$''.}. Moreover, $i'{=}i{+}1$. Thus, any word $w$ is accepted by $A_\seq$ only if
there exists $1{\le} i_1{<}i_2{<}\ldots{<}i_{|\seq|}{\le}|w|$ such that $w[i_k] \setminus \Sigma{=}\{\seq[i_k]\}$ and all other points except $\{i_1,\ldots,i_k\}$ are unrestricted points. This implies, $w{\in}L(A_\seq){\rightarrow}w{\in}\normalize(W_\seq)$.
\end{proof}

Let the set of the states reachable from initial state, $\init$, of any NFA $C$ on reading first $j$ letters of a word $w$ be denoted as $C<w,j>$. Hence, $A<w, 0> = \{\init\}$ and $A_\seq<w,0> = \{(\init, 1)\}$.
\begin{lemma}
\label{lem:AsubsetAseq}
Let $w$ be any collapsed $I_\nu$-interval word of type $\seq$ and $\bseq(w)  = i_1 i_2 \ldots i_n$. Let $i_0 = 0$ and $i_{n+1} = \infty$. Let $w' = \normalize(w)$. Hence, $\bseq(w) = \bseq(w')$. For any $q \in Q$, $q \in A<w, j>$ implies $ (q, k) \in A_\seq<w', j>$ where $i_{k-1} \le j < i_{k}$.
\end{lemma}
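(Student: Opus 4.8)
The plan is to argue by induction on the prefix length $j$, after first strengthening the statement so that the induction also passes cleanly through boundary positions. For $0\le j\le|w|$ let $k_j$ be one plus the number of boundary positions of $w$ lying among its first $j$ letters; with the conventions $i_0:=0$ and $i_{n+1}:=|w|+1$ one has $k_j=k$ exactly when $i_{k-1}<j<i_k$, and $k_j=\ell+1$ when $j=i_\ell$. I would prove: for every $q\in Q$, if $q\in A<w,j>$ then $(q,k_j)\in A_\seq<w',j>$; the lemma is its restriction to non-boundary $j$. A useful preliminary observation, to be recorded up front, is what $w$ being collapsed buys us: $\col(w)=w$, hence $w'=\normalize(w)\sim w$, so $dom(w')=dom(w)$ and $w'[p]\cap\Sigma=w[p]\cap\Sigma$ everywhere; moreover each position of $w'$ carries at most one symbol of $I_\nu\cup\{\anch\}$, every non-boundary position of $w'$ is unrestricted, and each boundary position $i_\ell$ carries in both $w$ and $w'$ the single label $\seq[\ell]$ (normalization preserves first/last time-restricted points and the anchor, while ``type $\seq$'' fixes those labels).

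For $j=0$ the claim is immediate since $A<w,0>=\{\init\}$, $A_\seq<w',0>=\{(\init,1)\}$, and $k_0=1$. For the inductive step, take $q\in A<w,j+1>$, fix $q'\in A<w,j>$ with $(q',w[j+1],q)\in\delta$, and use the hypothesis to get $(q',k_j)\in A_\seq<w',j>$. If $j+1$ is a boundary position $i_\ell$ then $k_j=\ell$, $k_{j+1}=\ell+1$ and $w'[j+1]=w[j+1]$ contains $\seq[\ell]$, so clause (i) of $\delta_\seq$ takes $(q',\ell)$ to $(q,\ell+1)$. If $j+1$ is not a boundary position then $k_j=k_{j+1}=:k$ with $i_{k-1}<j+1<i_k$, the letter $w'[j+1]=P:=w[j+1]\cap\Sigma$ is unrestricted, and $w[j+1]$ is either $P$ or $P\cup\{I\}$ for a single interval $I$ (the anchor cannot sit at a non-boundary position); in the first case clause (iii) via its summand $\delta(q',P)$ supplies the transition $(q',k)\to(q,k)$, and in the second case the same clause works via the summand $\delta(q',P\cup\{I\})$ \emph{provided} $I\in\I_k$.

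The crux is therefore establishing $I\in\I_k$ in that last case. Here $j+1$ is an $I$-time-restricted point of $w$ distinct from $\first(w,I)$ and $\last(w,I)$, so $\first(w,I)<j+1<\last(w,I)$; writing $\first(w,I)=i_{\ell'}$ and $\last(w,I)=i_{\ell''}$ --- boundary positions with $\seq[\ell']=\seq[\ell'']=I$ --- monotonicity of $i_1<\dots<i_n$ together with $i_{k-1}\le j<j+1<i_k$ forces $\ell'<k\le\ell''$, which is exactly the defining condition for $I\in\I_k$. This is the only point where both ``$w$ collapsed'' and ``$w$ of type $\seq$'' are genuinely used; everything else is bookkeeping with the three clauses of $\delta_\seq$. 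I would close by noting that the reverse inclusion needed to complete Lemma~\ref{lem:nfatonfaseq} is Lemma~\ref{lem:AseqsubsetA}, obtained by a symmetric induction: an accepting run of $A_\seq$ on a word of type $\seq$ is forced by the construction of $\delta_\seq$ to read the symbols $\seq[1],\dots,\seq[n]$ exactly at the boundary positions and in that order, so its projection onto $A$-states at position $j$ lies in $A<w,j>$ for the unique $w$ of type $\seq$ with $\normalize(w)=w'$.
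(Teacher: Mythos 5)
Your proof is correct and follows essentially the same route as the paper's: induction on the prefix length, a case split on whether position $j+1$ is a boundary point, and, in the non-boundary time-restricted case, the observation that $\first(w,I)=i_{\ell'}<j+1<i_{\ell''}=\last(w,I)$ together with $i_{k-1}\le j<j+1<i_k$ forces $\ell'<k\le\ell''$, i.e.\ $I\in\I_k$. Your explicit counter $k_j$ merely makes precise the invariant the paper carries implicitly (its inductive hypothesis already uses $i_{k-1}\le m<i_k$ to cover boundary positions), so no further comment is needed.
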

\begin{proof}
Recall that $\bseq$ is the sequence of boundary points in order. 
We apply induction on the number of letters read, $j$. Note that for $j = 0$, by definition, $A<w, 0> = \{\init\}$ and $A_\seq(\normalize(w), 0) = \{(\init, 1)\}$ the statement trivially holds as $i_0 \le 0 < i_1 \ldots <i_n$. Let us assume that for some $m$, for every state $q \in A<w, m>$  there exists  $(q, k) \in A_\seq<\normalize(w), m>$ such that $i_1 <\ldots <i_{k-1} \le m < i_k <\ldots i_n$. Now let $j = m+1$. 
Let us assume that $q'$ is any state in $A<w,m+1>$. We just need to show that for some $(q',k') \in A_\seq<w', m+1>$ where $k' = k+1$ if $m+1 \in \boundaryint(w)$. Else $k' = k$. 

As $q' \in A<w, m+1>$, there exists a state $q \in A<w,m>$ such that $q' \in \delta (q, w[m+1])$. By induction hypothesis, $(q, k) \in A_\seq<\normalize(w), m>$.  Note that $(q', k') \in \delta_\seq((q,k), w'[m+1])$ implies $(q',k') \in A_\seq<w', m+1>$.
Let $w[m+1] = S_J$, where $S_J \subseteq \Sigma \cup I_\nu \cup \{\anch\}$ and $S_J \setminus \Sigma$ contains at most 1 element.
\\\textbf{Case 1}: $m+1 \in \boundaryint(w)$. This implies that $w'[m+1] = w[m+1]$. As both $w$ and $w'$ are of type $\seq$, $\{\seq[k]\} = S_J \setminus \Sigma$(by definition of $\seq$). Hence, by construction of $A_\seq$, $\delta_\seq((q,k),S_J) = \delta(q,S_J) \times \{k+1\}$. As $q' \in \delta(q, S_J)$,  $(q',k+1) \in \delta_\seq((q,k),S_J)$.
\\\textbf{Case 2}: $m+1 \notin \boundaryint(w)$. This implies that $w'[m+1] = S = S_J \cap \Sigma $. \\

\noindent \textit{Case2.1}:$S = S_J$. By construction of $A_\seq$, $\delta_\seq ((q,k), S) \supseteq \delta (q, S) \times {k}$. Thus, $(q', k) \in \delta_\seq((q,k),S_J \cap \Sigma)$.\\

\noindent \textit{Case2.2}:$S \ne S_J$. Let $S_J \setminus \Sigma = \{J\}$ where $J \in I_\nu \cup \{\anch\}$. Then $m+1$ is neither the first nor the last $J$-time restricted point nor the anchor point in $w$. Hence, $\first(J, w) < m+1 < \last(J,w)$. By induction hypothesis, $i_{k-1}\le m < i_{k}$. Note, as $m+1$ is not in $\boundaryint(w)$, $m+1 \ne i_k$. Hence, $i_{k-1} \le m < m+1 < i_k$. This implies, $\first (J,w) < i_k \le \last(J,w)$. By definition of $\seq$, there exists $k'$ and $k''$ such that $k' < k \le k''$ and $\seq[k'] = \seq[k''] = J$. Hence, by construction of $\delta_\seq$, $\delta_\seq((q,k), S) \supseteq \delta(a,S_J) \times \{k\}$.
Hence $(q', k) \in \delta_\seq((q,k), S_J)$.
\end{proof}
\begin{lemma}
\label{lem:AseqsubsetA} 
Let $w'$ be any normalized $I_\nu$-interval word of type $\seq$ and \\$\bseq(w')  = i_1 i_2 \ldots i_n$. Let $i_0 = 0$ and $i_{n+1} = \infty$. For any $q \in Q$, $(q,k) \in A_\seq<w', j>$ implies there exists a collapsed $I_\nu$-interval word $w$, such that $\normalize(w) = w'$, $q \in A<w, j>$  and $i_{k-1} \le j < i_k$
\end{lemma}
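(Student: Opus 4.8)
The plan is to prove Lemma~\ref{lem:AseqsubsetA} --- whose conclusion we read as: there is a collapsed $I_\nu$ interval word $w$ with $\normalize(w)=w'$ such that $A$ reaches $q$ after reading $w[1\ldots j]$ --- by induction on $j$, as the exact converse of Lemma~\ref{lem:AsubsetAseq}. Instead of \emph{erasing} intermediate time-restricted points of a collapsed word to pass to $w'=\normalize(w)$, we will \emph{reinsert} them while following a run of $A_\seq$. Concretely, fix $(q,k)\in A_\seq<w',j>$ together with a run of $A_\seq$ on $w'[1\ldots j]$ from $(\init,1)$ to $(q,k)$; along it I would build a word $u$ of length $j$ over $2^{\Sigma\cup I_\nu\cup\{\anch\}}$ and a run of $A$ on $u$ ending at $q$, keeping three invariants: (a) $u[p]\cap\Sigma=w'[p]\cap\Sigma$ for all $p\le j$; (b) each $u[p]$ contains at most one symbol of $I_\nu\cup\{\anch\}$, so $u$ is collapsed; (c) the positions $p\le j$ where $u[p]$ carries a symbol of $I_\nu\cup\{\anch\}$ are exactly the boundary positions $i_1<\cdots<i_{k-1}$ of $w'$, with $\seq[m]$ at $i_m$, plus possibly some intermediate positions, each carrying a single interval $I'\in I_\nu$ and lying strictly between $\first(w',I')$ and $\last(w',I')$. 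The word claimed by the lemma is then $w=u\cdot w'[j+1\ldots|w'|]$.

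The base case $j=0$ is immediate ($u=\epsilon$, $w=w'$, $q=\init$, $k=1$, $i_0=0$). For the inductive step, let the run's last transition be $(p,k')\stackrel{S}{\rightarrow}(q,k)$ with $S=w'[j]$, and split according to the definition of $\delta_\seq$ (Appendix~\ref{app:aseqconstruct}). If $S$ is a boundary letter of $w'$, i.e.\ $j=i_{k'}$ and $S\setminus\Sigma=\{\seq[k']\}$, case~(i) applies, $k=k'+1$, $q\in\delta(p,S)$, and we append $S$ to $u$. If $S\setminus\Sigma=\emptyset$ then $k=k'$ and, by case~(iii), either $q\in\delta(p,S)$ and we append $S$, or $q\in\delta(p,S\cup\{I'\})$ for some $I'\in\mathcal{I}_{k'}=\{I'\in I_\nu\mid\exists\,m'<k'\le m'',\ \seq[m']=\seq[m'']=I'\}$ and we append $S\cup\{I'\}$, creating an intermediate $I'$-restricted point. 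In each case the matching transition of $A$ out of $p$ exists and extends the $A$-run; case~(ii) and transitions to the sink never arise on a surviving run, so the counter stays synchronized with the boundary positions of $w'$ and $k$ equals one plus the number of those positions in $\{1,\ldots,j\}$, which yields the stated bound on $j$ (tracking the counter exactly as in the proof of Lemma~\ref{lem:AsubsetAseq}); the tail index $|\seq|+1$ behaves like a segment with no further boundary letter. The three invariants are preserved throughout.

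Finally one checks the word $w=u\cdot w'[j+1\ldots|w'|]$. By (a), (b) and the fact that $w'$ is normalized (hence collapsed, with a single symbol at each restricted position), $w$ is a collapsed $I_\nu$ interval word similar to $w'$ with the same unique anchor. By (c), every restricted position of $w$ not in $\boundaryint(w')$ carries some interval $I'$ and lies strictly inside $(\first(w',I'),\last(w',I'))$, while the suffix of $w$ is a verbatim copy of that of $w'$; hence $\first(w,I)=\first(w',I)$ and $\last(w,I)=\last(w',I)$ for all $I$, so $\boundaryint(w)=\boundaryint(w')$, $w\downarrow_{\boundaryint(w)}=\seq$ (i.e.\ $w\in W_\seq$), and, since the normalized word is determined by these first/last restricted positions and the anchor while $w'$ is already normalized, $\normalize(w)=w'$ (with $w\cong w'$ also following from Lemma~\ref{fl2}). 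The $A$-run built on $u=w[1\ldots j]$ witnesses $q\in A<w,j>$. In particular, for $j=|w'|$ and a final state $q$ we obtain $w\in L(A)\cap W_\seq$ with $\normalize(w)=w'$, i.e.\ $L(A_\seq)\subseteq\normalize(L(A)\cap W_\seq)$, which with Lemma~\ref{lem:AsubsetAseq} completes the proof of Lemma~\ref{lem:nfatonfaseq}.

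The three-way case split on $\delta_\seq$ and the counter bookkeeping are routine; the delicate point I expect is the last paragraph --- verifying that each interval symbol reinserted through case~(iii) lands strictly inside $(\first(w',I'),\last(w',I'))$, so that $\normalize(w)$ is forced to equal $w'$ and not some other normalized word. This is exactly where the constraint $m'<i\le m''$ in the definition of $\mathcal{I}_i$ is used: $m'<k'$ places the new point after the first $\seq$-occurrence of $I'$, and $k'\le m''$ together with $j<i_{k'}\le i_{m''}$ places it before the last.
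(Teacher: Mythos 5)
Your proposal is correct and follows essentially the same route as the paper: induction on the prefix length, a case split on $\delta_\seq$ that reinserts an interval symbol $I'\in\mathcal{I}_{k'}$ at non-boundary positions, and the observation that the constraint $m'<k'\le m''$ forces every reinserted point strictly between $\first(w',I')$ and $\last(w',I')$, so that first/last restricted points are unchanged and $\normalize(w)=w'$. The only difference is presentational --- you build the prefix $u$ explicitly and concatenate with $w'$'s suffix, whereas the paper carries the witness word $w''$ through the induction hypothesis and modifies position $m+1$ --- but the construction and the delicate step are identical.
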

\begin{proof}
We apply induction on the value of $j$ as in proof of Lemma \ref{lem:AsubsetAseq}. For $j = 0$, the statement trivially holds. Assume that for $j = m$, the statement holds and $(q',k') \in A_\seq<w', m+1>$(Assumption 1). We need to show 
\begin{itemize}
    \item [(A)]$i_{k'-1} \le m+1 <i_{k'}$ and, 
    \item[(B)] there exists $w$ such that $\normalize(w) = w'$ and $q' \in A <w, m+1>$.  
    \end{itemize}
$(q',k') \in A_\seq<w', m+1>$ implies, there exists $(q, k) \in A_\seq<w', m>$ such that $(q',k') \in \delta_\seq((q,k),w'[m+1])$. 

By induction hypothesis, $i_{k-1} \le m < i_{k}$ [IH1] and  there exists a word $w''$ such that $\normalize(w'') = w'$ and $q \in A<w'', m>$ [IH2].

\noindent \textbf{Case 1 } $m+1 \in \boundaryint(w')$: 
This implies  
\begin{itemize}
    \item[(a)] $m+1 \in \{i_1, i_2, \ldots, i_k\}$.
    \item[(b)] $k' = k+1$ (by construction of $\delta_\seq$).
    \item[(c)] $w''[m+1] = w'[m+1] = S \cup \{J\}$ such that $S \subseteq \Sigma $ and $J \in (I_\nu \cup \{\anch\})$.
 In other words, $m+1$ is either a time restricted point or an anchor point in both $w''$ and $w'$. 
 \item[(d)] $\seq[i_k'] = J$, otherwise $\delta_\seq((q,k), S \cup \{J\})) = \emptyset$ contradicting Assumption 1. 
\end{itemize}
Note the following:
\begin{itemize}
    \item [(i)] IH1 and (a) implies that $m+1 = i_k$. This along with b) implies that $m+1 = i_{k'-1}$. Hence proving (A) for Case 1.
\item [(ii)] IH2 along with (c) and (d) implies that $\delta_\seq((q,k), w'[m+1]) = \delta(q, w[m+1]) \times \{k+1\}$. Hence, if $(q',k') \in A_\seq<w', m+1>$ then $q' \in A<w'', m+1>$. Hence, there exists a $w = w''$ such that $q' \in A<w, m+1>$, proving (B) for Case 1.
\end{itemize}

\noindent \textbf{Case 2} $m+1 \notin \boundaryint(w')$ : This implies 
\begin{itemize}
    \item [(1)] $m+1 \notin \{i_1, i_2, \ldots, i_k\}$.
    \item[(2)] $k' = k$ (by construction of $\delta_\seq$). 
    \item[(3)] $w''[m+1] \subseteq \Sigma $. In other words, $m+1$ is an unrestricted point in $w'$ but may or may not be time restricted in $w''$.
\end{itemize}
Now we have 
\begin{itemize}
    \item [(i)]  IH1 implies $i_{k-1} \le m < m+1 \le i_{k}$. This along with (1) and (2) implies $i_{k'-1} \le m < m+1 < i_{k'}$. Hence proving (A) for Case 2.
\item [(ii)] IH2 along with (3) and the construction of $\delta_\seq$ implies $\delta_\seq((q,k), w'[m+1]) = (\bigcup \delta(q, w'[m+1] \cup \{J\}) \cup \delta(q, w'[m+1]))\times k$ for $J \in I_\nu$ such that there exists $j<k'<l$ such that $\seq[j] = \seq[l] = J$. 
\end{itemize}

Hence, $J$ is an interval which appears twice in $\seq$ and only one of those $J$'s have been encountered within first $m$ letters. Hence, the prefix $w'[1...m+1]$ and the suffix $w'[m+2...]$ contains exactly one $J$-time restricted point each. 
This implies that \\

\noindent (Case 2.1) $q' \in \delta_\seq((q,k), w'[m+1] \cup \{J\})$ for some $J$ such that $w'[1...m]$ and $w'[m+2...]$ contains exactly one $J$-time restricted point or \\
\noindent  (Case 2.2) $q' \in \delta_\seq((q,k), w'[m+1])$.\\

As $\normalize(w'') = w'$, first and last $J$-time restricted points are the same in both $w''$ and $w'$. Hence, first $J$-time restricted point in $w''$ is within $w''[1...m]$ and the last is within $w''[m+2...]$. Consider a set of words $W$ such that for any $w \in W$, $w[1...m] = w''[1...m]$, and either $w[m+1] = w'[m+1]$ or $w[m+1] = w'[m+1] \cup \{J\}$ where $J \in I_\nu$ such that both $w'[1...m]$ and $w'[m+2...]$ contains $J$-time restricted points. Notice that $W$ is not an empty set as it will at least contain $w''$ and $w'''$ where $w'''[m+2..]= w''[m+2...]$ and $w'''[m+1] = w'[m+1] \cup \{J\}$ for some $J \in I_\nu$ such that both $w'[1...m]$ and $w'[m+2...]$ contains $J$-time restricted points.
Notice that $m+1 \notin \boundaryint(w)$. Hence, making it time unrestricted will still imply $\boundaryint(w) = \boundaryint(w')$. 

When there exists a $J$ restricted time point in prefix $w[1....m]$ and suffix $w[m+2...]$ for $J \in I_\nu$, making point $m+1$ as $J$ restricted time point will still imply $\boundaryint(w) = \boundaryint(w')$. Hence, this implies that $\normalize(w) = w'$ for any $w \in W$. Moreover, as for any $w \in W$, $w[1...m] = w''[1...m]$, $A<w,m> = A<w'',m>$ and $A_\seq<w,m> = A_\seq<w'',m>$. Hence, for any $q \in Q$ such that $(q,k) \in A_\seq<w',m>$ implies for every $w \in W$, $q \in A<w,m>$.

It suffices to show that there exists a $w\in W$ such that $q' \in A<w,m+1>$. In case of Case 2.1, for any word $w \in W$ such that $w[m+1]$ is a $J$-time restricted point $q' \in A<w,m+1>$. Note that such a word exists as Case 2.1 implies that $w'[1...m]$ and $w'[m+2...]$ contains exactly one $J$-time restricted point. In case of 2.2, for any word $w  \in W$ where $w[m+1] \subseteq \Sigma$, $q' \in A<w,m+1>$. Hence, proving (B) for Case 2.
\end{proof}
\section{Proof of Theorem \ref{thm:natptltonapnregmtl}}
\label{app: tptltopnregmtl}
\begin{proof}
Let $|\psi| = m, |\I_\nu| = n$.
\begin{itemize}
\item Construct an $\ltl$ formula $\alpha$ over interval words such that $\rho,i \models \varphi$ if and only if $ \rho,i \models \mathsf{Time}(L(\alpha))$ as in Section \ref{sec:tptltopnregmtl} bullet \textbf{1a)} such that $|\alpha| = O(n)$.
\item Reduce the $\ltl$ formula $\alpha$ to language equivalent NFA $A'$ using \cite{gastin-oddoux}. This has the complexity $O(2^{n})$. This step is followed by reducing $A'$ to $A$ over interval words over $I_\nu$ such that $L(A) = \col(L(A'))$. Note that the size of $|I_\nu| = O(n^2)$ Section \ref{sec:tptltopnregmtl} bullet \textbf{1b)}.
\item As shown in  bullet \textbf{3)} of Section \ref{sec:tptltopnregmtl} and Lemma \ref{lem:nfatonfaseq}, for any type $\seq$, we can construct $A_\seq$ from $A$ such that $L(A_\seq) = \normalize(L(A_\seq) \cup W_\seq)$ with number of states $k  =  O(2^{Poly(m)})$.
\item As shown in bullet \textbf{4)} of Section \ref{sec:tptltopnregmtl}, for any $\seq$, we can construct $\phi_\seq$ using intervals from $I_\nu$ such that $\rho,i \models \phi_\seq$ iff $\rho, i \in L(A_\seq)$. Note that $Time(L(\varphi)) = Time(L(A)) = \bigcup \limits_{\seq \in \mathcal{T}(I_\nu)} Time(L(A_\seq))$. Note that $|\Sseq| \le (n)^{2n^2} = O(2^{Poly(n)})$. Size of formula $\phi_\seq$ is $(2^{n*m}) \le 2^{m^2}$. Moreover, the arity of the formula $\phi_{\seq} = 2 \times |\seq| = O(2 \times |I_\nu|)$(as each interval from $I_\nu$ appears at most twice in $\seq$)$=O(n^2)$. 
Hence, $\rho, i \models \psi$ if and only if $\rho,i \models \phi$ where $\phi = \bigvee \limits_{\seq \in \Sseq} \phi_\seq$ and the timing intervals used in  $\phi$ comes from $I_\nu$. Note that if $\I$ is non-adjacent than $I_\nu$ is non-adjacent too. Hence, we get a non-adjacent $\pnregmtl$ formula $\phi$ the size of which is $O(2^{Poly(m)})$ and arity is $O(n^2)$. 
\end{itemize}
\end{proof}
\begin{lemma}
\label{thm:simpletptltopnregmtl}
Any simple 1-$\tptl$ formula using intervals from $\I_\nu$ can be reduced to an equivalent $\pnregmtl$ formula using constraints from $I_\nu$ where $I_\nu = \Clcap(\I_\nu)$.  
\end{lemma}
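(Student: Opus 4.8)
The plan is to reduce any simple 1-$\tptl$ formula $\psi = x.\varphi$ (with $\varphi$ freeze-free, using intervals from $\I_\nu$) to an equivalent $\pnregmtl$ formula by chaining together the four-step machinery already set up in Section~\ref{sec:tptltopnregmtl}. First I would invoke Theorem~\ref{thm:tptl-ltl} to replace $\psi$ by the $\ltl$ formula $\alpha$ over $\I_\nu$-interval words, so that $\rho,i\models\psi$ iff $\rho,i\in\mathsf{Time}(L(\alpha))$. Then, applying \cite{gastin-oddoux}, convert $\alpha$ into an NFA $A_\alpha$ over $2^{\Sigma\cup\I_\nu\cup\{\anch\}}$, and collapse its transitions (step \textbf{1b)}) to obtain an NFA $A$ over $I_\nu{=}\Clcap(\I_\nu)$-interval words with $L(A){=}\col(L(A_\alpha))$; this preserves $\mathsf{Time}$ because $\mathsf{Time}(\kappa){=}\mathsf{Time}(\col(\kappa))$ (Lemma~\ref{c1}).

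Next I would partition the set $W$ of all $I_\nu$-interval words into the finitely many types $\seq\in\mathcal{T}(I_\nu)$ (step \textbf{2)}), and for each $\seq$ build the NFA $A_\seq$ of step \textbf{3)} with $L(A_\seq){=}\normalize(L(A)\cap W_\seq)$ by Lemma~\ref{lem:nfatonfaseq}. Since $\bigcup_{\seq}L(A_\seq){=}\normalize(L(A))$ and $\normalize$ preserves $\mathsf{Time}$ by Lemma~\ref{lem:normalize}, we have $\mathsf{Time}(L(A)){=}\bigcup_{\seq\in\mathcal{T}(I_\nu)}\mathsf{Time}(L(A_\seq))$. Finally, for each type I would apply Lemma~\ref{lem:nfaseqtopnregmtl} (equivalently Lemma~\ref{lem:nfatopnregmtl}) to obtain a $\pnregmtl$ formula $\phi_\seq$ with $\mathsf{Constraint}(\phi_\seq)\subseteq I_\nu$ such that $\rho,i\models\phi_\seq$ iff $\rho,i\in\mathsf{Time}(L(A_\seq))$, and take $\phi=\bigvee_{\seq\in\mathcal{T}(I_\nu)}\phi_\seq$. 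Putting these together, $\rho,i\models\psi$ iff $\rho,i\in\mathsf{Time}(L(A))$ iff $\rho,i\in\bigcup_\seq\mathsf{Time}(L(A_\seq))$ iff $\rho,i\models\phi$, and by construction every interval appearing in $\phi$ lies in $I_\nu=\Clcap(\I_\nu)$.

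The main obstacle is not any single new argument — the lemma is essentially a bookkeeping corollary of the four steps — but rather verifying that the interval-word abstraction is faithful in \emph{both} directions when the freeze variable is bound at the anchor: one must be sure that the $\ltl$ translation $\ltl(\varphi)$ over interval words captures exactly the timed words $\rho$ with $\rho,i\models x.\varphi$, which is precisely the content of Theorem~\ref{thm:tptl-ltl} and its supporting Lemma~\ref{prop:tptl-ltl}. The one subtlety worth stating explicitly in the proof is that passing through $\col$ and $\normalize$ enlarges the interval set only to $\Clcap(\I_\nu)$ (intersections of subsets of $\I_\nu$), so the resulting $\pnregmtl$ formula uses constraints from $I_\nu=\Clcap(\I_\nu)$ and no others; this is exactly why the statement is phrased with $I_\nu=\Clcap(\I_\nu)$ rather than $\I_\nu$ itself, and it is what later lets Theorem~\ref{thm:natptltonapnregmtl} conclude non-adjacency is preserved (since $\Clcap(I_\nu)$ is non adjacent iff $I_\nu$ is).
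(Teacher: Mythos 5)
Your proposal is correct and follows exactly the same route as the paper, whose own proof is a one-line citation of Theorem~\ref{thm:tptl-ltl}, the $\ltl$-to-NFA construction of \cite{gastin-oddoux}, Lemma~\ref{lem:nfatonfaseq}, and Lemma~\ref{lem:nfatopnregmtl}; you have simply spelled out the chain of steps \textbf{1a)}--\textbf{4)} from Section~\ref{sec:tptltopnregmtl} that those citations compress. Your explicit remark that the intervals grow only to $I_\nu=\Clcap(\I_\nu)$ through the collapsing step is a useful clarification that the paper leaves implicit.
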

\begin{proof}
The above lemma is a consequence of Theorem \ref{thm:tptl-ltl}, construction from $\ltl$ to NFA \cite{gastin-oddoux}, Lemma \ref{lem:nfatonfaseq} and Lemma \ref{lem:nfatopnregmtl}. 
\end{proof}
\begin{theorem}
\label{thm: tptltopnregmtl}
Any 1-$\tptl$ formula $\psi$ can be reduced to an equivalent $\pnregmtl$ formula.
\end{theorem}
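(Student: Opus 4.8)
The plan is to bootstrap off Lemma~\ref{thm:simpletptltopnregmtl}, which already translates any \emph{simple} $1$-$\tptl$ formula (a single outermost freeze over a freeze-free body) into an equivalent $\pnregmtl$ formula via the interval-word abstraction of Section~\ref{sec:interval word} and the NFA-to-$\pnregmtl$ construction of Lemma~\ref{lem:nfatopnregmtl}. The general case then follows by induction on the freeze depth $\fd(\psi)$. The scheme is flatten--translate--substitute, and the point that makes it yield genuine equivalence (rather than mere equisatisfiability, as in Theorem~\ref{thm: tptltopnregmtlsat}) is that $\pnregmtl$ is closed under substitution of a $\pnregmtl$ formula for a proposition, including inside the automata alphabets of its $\fregm$ and $\sregm$ modalities.

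\emph{Base case} $\fd(\psi)=0$. Then $\psi$ is freeze-free, and since $\tau_1=0=\overline{0}(x)$ we have $\rho,1\models\psi \iff \rho,1\models x.\psi$, hence $L(\psi)=L(x.\psi)$; as $x.\psi$ is simple, Lemma~\ref{thm:simpletptltopnregmtl} supplies the desired $\phi$. \emph{Inductive step} $\fd(\psi)=k\ge 1$. Enumerate all subformula occurrences $x.\chi_1,\dots,x.\chi_m$ of $\psi$ whose body $\chi_i$ is freeze-free; each $x.\chi_i$ is a simple $1$-$\tptl$ formula over $\Sigma$, hence by Lemma~\ref{thm:simpletptltopnregmtl} equivalent to some $\pnregmtl$ formula $\gamma_i$ over $\Sigma$. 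Choose fresh propositions $b_1,\dots,b_m$ and let $\psi^{\flat}$ be $\psi$ with each $x.\chi_i$ replaced by $b_i$. Then $\psi^{\flat}$ is a $1$-$\tptl$ formula over $\Sigma\cup\{b_1,\dots,b_m\}$ with $\fd(\psi^{\flat})<k$, since this replacement strips the innermost freeze from every maximal chain of nested freeze quantifiers in $\psi$. By the induction hypothesis there is a $\pnregmtl$ formula $\phi^{\flat}$ over $\Sigma\cup\{b_1,\dots,b_m\}$ with $L(\phi^{\flat})=L(\psi^{\flat})$. Finally obtain $\phi$ from $\phi^{\flat}$ by substituting $\gamma_i$ for $b_i$: Boolean occurrences are replaced directly, and an occurrence with $b_i$ in some modality-argument set $S$ is handled by putting $\gamma_i$ in place of $b_i$ in $S$ and relabelling the alphabet $2^{S}$ of each automaton of that modality accordingly; the result is again a $\pnregmtl$ formula, now over $\Sigma$.

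\emph{Correctness.} Fix a timed word $\rho$ over $\Sigma$ and let $\rho'$ over $\Sigma\cup\{b_1,\dots,b_m\}$ have the same positions and timestamps, agree with $\rho$ on $\Sigma$, and satisfy $b_i\in\rho'[j] \iff \rho,j\models\gamma_i$; by Lemma~\ref{thm:simpletptltopnregmtl} this last condition is equivalent to $\rho,j\models x.\chi_i$ (both $\gamma_i$ and $x.\chi_i$ denote valuation-independent sets of pointed timed words). Compositionality of the $\pnregmtl$ semantics --- the $\fregm$ and $\sregm$ modalities only inspect which formulas of the argument set hold at each point, and our relabelling matches $b_i$ with $\gamma_i$ --- gives $\rho,1\models\phi \iff \rho',1\models\phi^{\flat}$. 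Compositionality of the $\tptl$ semantics, together with the fact that $\psi$ is recovered from $\psi^{\flat}$ by reading each $b_i$ back as $x.\chi_i$ (which $\rho'$ marks correctly), gives $\rho,1\models\psi \iff \rho',1\models\psi^{\flat}$. Since $L(\phi^{\flat})=L(\psi^{\flat})$, the two right-hand sides agree, so $L(\phi)=L(\psi)$, closing the induction. (Tracking sizes, each simple translation is singly exponential and the repeated substitutions do not escape that bound, so $|\phi|$ is at most exponential in $|\psi|$.)

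\emph{Main obstacle.} The substantive work --- converting the timing constraints of a simple formula into $\fregm$/$\sregm$ modalities --- is entirely contained in Lemma~\ref{thm:simpletptltopnregmtl} and is already available. What needs care here is purely structural: first, that replacing all freeze subformulas with freeze-free body strictly lowers $\fd$, so the recursion terminates; and second, that substituting a $\pnregmtl$ formula for a proposition occurring inside an automaton alphabet preserves the accepted timed language. This second point is exactly where equivalence, rather than equisatisfiability, is bought: a $\pnregmtl$ subformula contains no freeze quantifier, so it denotes a fixed set of pointed timed words and behaves at every position precisely like the proposition it replaces.
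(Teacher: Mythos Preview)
Your proposal is correct and follows essentially the same approach as the paper: induction on freeze depth, replacing the innermost simple subformulae $x.\chi_i$ by fresh propositions, applying the induction hypothesis to the resulting lower-depth formula over the extended alphabet, and then substituting the $\pnregmtl$ translations of the $x.\chi_i$ back in using closure of $\pnregmtl$ under nesting. Your write-up is somewhat more explicit than the paper's about the compositionality argument needed to justify that the final substitution yields genuine equivalence rather than equisatisfiability, and about why the replacement strictly decreases freeze depth; the paper leaves both of these implicit.
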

\begin{proof}
Without loss of generality we assume that $\psi$ is of the form $x.\varphi$. We apply induction on the freeze depth of $\varphi$. For $\fd(\varphi) = 0$ the theorem holds due to Lemma \ref{thm:simpletptltopnregmtl}. Assume the theorem holds for $\fd(\varphi) = n$. Consider $\varphi$ with $\fd(\varphi) = n+1$.

Consider every simple subformulae of the form $x.\varphi_i$. Hence, $\fd(\varphi_i) = 0$. Due to Lemma \ref{thm:simpletptltopnregmtl}, we can reduce the formula to an equivalent $\pnregmtl$ formula $\phi_i$. Moreover, if $\psi$ is non-adjacent , then all its subformulae are non-adjacent. Due to Lemma \ref{thm:natptltonapnregmtl}, we have that all $\phi_i$ are non-adjacent and $|\phi_i| = O(2^{Poly(n)})$. Substitute a symbol $a_i$ in place of $x.\varphi_i$ in formula $\psi$. Let the resulting formula be $\psi_a$. Note that $\psi_a$ is of the form $x.\varphi_a$ where $\fd(\varphi_a) = n$. Hence, $\psi_a$ can be reduced to an equivalent $\pnregmtl$ formula $\phi_a$, over $2^{\Sigma\cup A}$ where $A = \{a_1, a_2, \ldots, a_m\}$ and $m$ is the number of simple subformulae of $\psi$. Moreover, if $\psi$ is non-adjacent so is $\phi_a$ and $|\phi_a| = 2^{Poly(|\psi|}$ (by induction hypothesis). As $\pnregmtl$ formulae are closed under nesting, one can substitute back $\phi_i$ in place of $a_i$ to get a formula $\phi$ which is language equivalent to $\psi$.
\end{proof}
\section{Satisfiability Checking for $\pnregmtl$}
\label{app:pnregsatisfiability}
In this section, we show that the satisfiability checking  for non-adjacent $\pnregmtl$ and 1-$\tptl$ is EXPSPACE complete. We show that for any  given non-adjacent $\pnregmtl$ formula $\phi$, we construct an equisatisfiable formula $\emitl_{0,\infty}$ $\psi$ of size $O(2^{Poly(|\phi|)})$. Satisfiability checking for $\emitl_{0,\infty}$ is  PSPACE complete \cite{H19}. Hence, this establishes EXPSPACE upper bound for satisfiability checking problem for non-adjacent $\pnregmtl$. The EXPSPACE lower bound is implied from the lower bound of sublogic $\mitl$.

The rest of the section is dedicated to the construction of equisatisfiable $\emitl_{0,\infty}$ formula $\psi$ given a non-adjacent $\pnregmtl$ formula $\phi$ with at most exponential blow up.


We use the technique of equisatisfiability modulo oversampling \cite{time14}\cite{khushraj-thesis}. Let $\Sigma$ and $\Ovs$ be disjoint set of propositions. 
Given any timed word $\rho$ over $\Sigma$, we say that a word $\rho'$ over $\Sigma \cup \Ovs$ is an oversampling of $\rho$ if $|\rho| \le |\rho'|$ and if you delete the symbols in $\Ovs$ from $\rho'$ you get back $\rho$. Note that when $|\rho'| > |\rho|$, $\rho'$ will have some time points where no proposition from $\Sigma$ is true. These new points are called oversampling points. Moreover, we say that any point $i' \in dom(\rho')$ is an old point of $\rho'$ corresponding to $i$ iff $i'$ is the $i^{th}$ point of $\rho'$ when we remove all the oversampling points. For illustration refer figure \ref{fig:oversample}.
\begin{figure}
    \scalebox{0.63}{
    \includegraphics{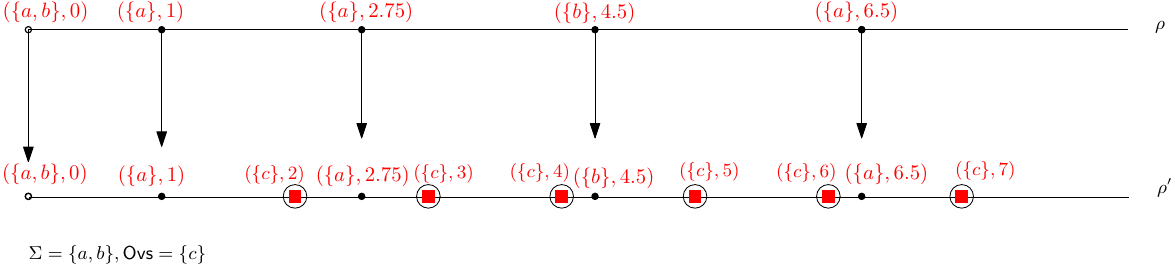}}
    \caption{Figure illustrating oversampling behaviours and projections. $\rho'$ is an oversampling of $\rho$. The points marked with red boxes are oversampling points. The arrow maps an action point of $\rho$ to an old action point of $\rho'$ corresponding to $i$.}
    \label{fig:oversample}
\end{figure}
For the rest of this section, let $\phi$ be a non-adjacent $\pnregmtl$ formula over $\Sigma$. We break down the construction of an $\emitl_{0,\infty}$ formula $\psi$ as follows. 
\\\noindent 1) Add oversampling points at every integer timestamp using formula $\varphi_{\ovs}$,\\
\noindent 2) Flatten the $\pnregmtl$ modalities to get rid of nested automata modalities, obtaining an  equisatisfiable formula $\phi_{flat}$,\\
\noindent 3) With the help of oversampling points, assert the properties expressed by $\pnregmtl$ subformulae $\phi_i$ 
of $\phi_{flat}$ using only $\emitl$ formulae,\\
\noindent 4) Finally, using oversampling points,  get rid of bounded intervals with non-zero lower bound, getting the required $\emitl_{0,\infty}$ formula $\varphi_i$.  Replace  $\phi_i$ in $\phi_{flat}$ with this $\emitl_{0,\infty}$ formula getting $\varphi$.\\
Let $\Last{=}\sbf \bot$ and $\Lastts{=}\sbf \bot \vee (\bot \until_{(0,\infty)} \top)$. $\Last$ is true only at the last point of any timed word. Similarly, $\Lastts$, is true at a point $i$ if there is no next point $i+1$ with the same timestamp $\tau_i$.
Let $\cmax$ be the maximum constant used in the intervals appearing in $\phi$.

\noindent\textbf{1) Behaviour of Oversampling Points}. Let $\Int{=}\{\nt_0,\nt_1,\ldots, \nt_{\cmax-1}\}$  and $\Int \cap \Sigma = \emptyset$.
We oversample  timed words over $\Sigma$ by adding new points where only propositions from $\Int$ holds.  
Given a timed word $\rho$ over $\Sigma$, consider an extension of $\rho$ called $\rho'$, by extending the alphabet $\Sigma$ of $\rho$ to 
$\Sigma' = \Sigma \cup \Int$. Compared to $\rho$, $\rho'$  has extra points called \emph{oversampling} points, where 
 $\neg \bigvee \Sigma$ (and $\bigvee \Int$) hold good. These extra points are chosen at all integer timestamps, in such a way that if $\rho$ already has points with integer time stamps, then the oversampled point with the same time stamp  
 appears last among all points with the same time stamp  in $\rho'$. 
  We will make use of these oversampling points to reduce the $\pnregmtl$ modalities into $\emitl_{0,\infty}$. These oversampling points are labelled with a modulo counter $\Int{=}\{\nt_0,\nt_1,\ldots, \nt_{\cmax-1}\}$. 
The counter is initialized to be $0$ at the first oversampled point with timestamp $0$ and is incremented, modulo $\cmax$, after exactly one time unit till the last point of $\rho$.  
Let $i \oplus j{=}(i+ j) \% \cmax$, where $\%$ is a modulo operator. 
The  oversampled behaviours are expressed using the formula $\varphi_{\ovs}$:  
$\{\neg \fut_{(0,1)} \bigvee \Int \wedge \fut_{[0,1)} \nt_{0}\} \wedge $ \\
$ \{\bigwedge \limits_{i=0}^{\cmax-1}\wB\{(\nt_i \wedge \fut (\bigvee \Sigma))\rightarrow (\neg \fut_{(0,1)} (\bigvee \Int) \wedge \fut_{(0,1]} (\nt_{i \oplus 1} \wedge(\neg \bigvee \Sigma) \wedge \Lastts))\}$. 
Let $\ext: T\Sigma^*\longrightarrow T\Sigma'^* $ map timed words $\rho$  
to a word $\rho'$ over $\Sigma'$ given by $\ext(\rho){=}\rho'$ iff \textbf{(i)}$\rho'$ is an oversampling of $\rho$ and \textbf{(ii)}$\rho' \models \varphi_{\ovs}$. Map  $\ext$ is well defined : for any word $\rho$, there is a unique $\rho'$ over $\Sigma'$ satisfying (i) and (ii). Moreover, for any word $\rho'$ satisfying $\varphi_{\ovs}$ there exists a unique word $\rho$ such that $\ext(\rho) = \rho'$.

\noindent \textbf{2) Flattening}. Next, we  flatten $\phi$ to eliminate the nested $\fregkm$ and $\sregkm$ modalities while preserving satisfiability. Flattening is well studied  \cite{deepak08}, \cite{time14}, \cite{khushraj-thesis}, \cite{H19}.
The idea is to associate a fresh witness variable $b_i$ to each subformula  $\phi_i$ which needs to be eliminated. 
This is achieved using the \emph{temporal definition} $T_i=\wB((\bigvee \Sigma \wedge \phi_i) \leftrightarrow b_i)$  and replacing 
$\phi_i$ with $b_i$ in $\phi$,  $\phi''_i{=}\phi[b_i /\phi_i]$, where  $\wB$ is the weaker form of 
$\sbf$ asserting at the current point and strict future.   Then, $\phi'_i{=}\phi''_i \wedge T_i \wedge \bigvee\Sigma$ is equisatisfiable to $\phi$. 
Repeating this across all  subformulae of $\phi$, we obtain $\phi_{flat}{=}\phi_t \wedge T \wedge \wB(\bigvee W \rightarrow \bigvee \Sigma)$ 
over the alphabet  $\Sigma'{=}\Sigma \cup W$, where $W$ is the set of all fresh variables, $T=\bigwedge_i T_i$, $\phi_t$ 
is a propositional logic formula over $W$.  Each $T_i$ is of the form $\wB(b_i \leftrightarrow (\phi_f \wedge \bigvee\Sigma))$ where $\phi_f{=}\fregnm(\re_1, \ldots, \re_{n+1})(S)$ (or uses $\sregnm$) and $S \subseteq \Sigma'$. Note that the flattening results in $S$ being a set  of propositional logic formulae over $\Sigma'$. We eliminate the Boolean operators by further flattening those and replacing it with a single witness proposition; thus $S \subseteq \Sigma'$.  This  flattens $\phi$, and  all the $\fregnm$ and $\sregnm$ subformulae appearing in $\phi_{flat}$ are of modal depth 1.

For example, consider the formula $\phi=\fregm^2_{(0,1)(2,3)}(A_1, A_2,A_3)(\{\phi_1 , \phi_2\})$, where $\phi_1 = \sregm^2_{(0,2)(3,4)}(A_4,A_5, A_6)(\Sigma),\phi_2 = \sregm^2_{(1,2)(4,5)}(A_7,A_8, A_9)(\Sigma)$.   
Replacing the $\phi_1, \phi_2$ modality with witness propositions 
$b_1, b_2$, respectively, we get \\$\phi_t=\fregm^2_{(0,1)(2,3)}(A_1',A_2',A_3')
) (\{b_1, b_2\}) \wedge T$, 
where
$T=\wB(b_1 \leftrightarrow (\bigvee \Sigma \wedge  \phi_1)) \wedge \wB(b_2 \leftrightarrow (\bigvee \Sigma \wedge \phi_2))$, $A_1',A'_2,A'_3$ are automata constructed from $A_1, A_2, A_3$, respectively by replacing $\phi_1$ by $b_1$ and $\phi_2$ by $b_2$ in the labels of their transitions . Hence, $\phi_{flat}=\phi_t \wedge T$ is obtained by flattening the $\fregkm,\sregkm$ modalities from $\varphi$.

Without losing generality, we make following assumptions. 

\noindent \textbf{Assumption 1:} All the subformulae of the given $\pnregmtl$ formula, $\phi$, of the form $\fregkm(\re_1,\ldots, \re_{k+1})(S)$ and $\sregkm(\re_1,\ldots, \re_{k+1})(S)$ is such that $\inf(\mathsf{I_1}) \le \inf(\mathsf{I_2}) \le \ldots \le \inf(\mathsf{I_n})$ and $\sup(\mathsf{I_1}) \le \sup(\mathsf{I_2}) \le \ldots \le \sup(\mathsf{I_n})$. This is because check for $\re_{i+1}$ cannot start before check of $\re_{i}$ in case of $\fregkm$ modality (and vice-versa for $\sregkm$ modality) for any $1\le i \le k$ . More precisely, for any $1\le i <  j k$,  let $\mathsf{I}_i = [l_i, u_i), \mathsf{I}_j = [l_j, u_j)$ and $l_i > l_j$, then   $\fregkm(\re_1,\ldots, \re_{k+1})(S) \equiv \freg^{\mathsf{k}}_{\mathsf{I_1,\ldots,I'_j\ldots I_k}}(\re_1,\ldots, \re_{k+1})(S)$, where $\mathsf{I}'_j = [l_i, u_j)$ (the argument can be similarly generalized for other type of interval for the case where $u_i > u_j$ and for $\sregkm$ modality). 

\noindent \textbf{Assumption 2:} Intervals $I_1, \ldots I_{k-1}$ are bounded intervals. Interval $I_k$ may or may not be bounded \footnote{Applying the identity $\fregm^k_{I_1, I_2, \ldots [l_1, \infty) [l_2, \infty)}(\re_1, \ldots, \re_{k+1}){\equiv}\\\fregm^k_{I_1, I_2, \ldots [l_1, \cmax) [l_2, \infty)}(\re_1, \ldots, \re_{k+1}) {\vee} \fregm^k_{I_1, I_2, \ldots  [l_2, \infty)}(\re_1, \ldots, \re_{k-1},\re_{k} \cdot \re_{k+1})$ we can get rid of intermediate unbounded intervals.}.

\noindent \textbf{3)Obtaining  equisatisfiable $\emitl$ formula $\psi_f$  for the $\pnregmtl$ formula $\phi_f$ in 
each $T_i=\wB(b_i \leftrightarrow (\phi_f \wedge \bigvee\Sigma))$}. The next step is to replace all the $\pnregmtl$ formulae occurring in temporal definitions $T_i$. 
We use oversampling to construct the formula $\psi_f$ : for any timed word $\rho$ 
over $\Sigma$, $i \in dom(\rho)$, there is an extension $\rho'=\ext(\rho)$ over 
an extended alphabet $\Sigma'$, and a point $i' \in dom(\rho')$ which is an old point 
corresponding to $i$ such that 
$\rho', i' \models \psi_f$ iff $\rho, i \models \phi_f$.  
\smallskip 
Consider $\phi_f=\fregnm(\re_1, \ldots, \re_{n+1})(S)$ where $S \subseteq \Sigma'$. 

\noindent Let $\rho =(a_1, \tau_1) \dots (a_n, \tau_n)$ be a timed word over $\Sigma$, $i \in dom(\rho)$. Let $\rho' = \ext(\rho)$ be defined by 
$(b_1, \tau'_1)\dots (b_m, \tau'_m)$ with $m \geq n$, and each $\tau'_i$ is a either a new integer timestamp 
not among $\{\tau_1, \dots, \tau_n\}$ or is some $\tau_j$ if $i$ is an old point corresponding to $j$. 
Let  $i'$ be an old point in $\rho'$ corresponding to $i$. Let $i'_0{=}i'$ and $i'_{n+1} = |\rho'|$. 
 $\rho, i \models \phi_f$ iff Condition C = $\exists i'\le i_1' \le \ldots \le i'_{n+1} \bigwedge \limits_{g{=}1}^n (\tau'_{i'_g} -\tau'_{i'} {\in} \mathsf{I}_g \wedge \rho',i'_g{\models}\bigvee \Sigma \wedge \mathsf{Seg^+}(\rho', i'_{g-1}, i'_g, S'){\in}L(\re'_g)) \wedge \mathsf{Seg^+}(\rho', i'_{n}, i'_{n+1}, S'){\in}L(\re'_{n+1})$ where for any $1\le j \le n+1$, $\re'_j$ is the automata built from $\re_j$ by adding self loop on $\neg \bigvee \Sigma$ (oversampling points) and $S' = S \cup \{\neg \bigvee \Sigma\}$ 
This self loop makes sure that $\re'_j$ ignores(or skips) all the oversampling points while checking for $\re_j$. Hence, $\re'_j$ allows arbitrary interleaving of oversampling points while checking for $\re_j$. Hence, for any $g,h \in dom(\rho)$ with $g',h'$ being old action points of $\rho'$ corresponding to $g,h$, respectively, $\mathsf{Seg^{s}} (\rho, g, h, S){\in}L(\re_i)$ iff  $\mathsf{Seg^{s}} (\rho', g', h', S\cup \{\neg \bigvee \Sigma\}) \in L(\re'_i)$ for $s \in \{+,-\}$. Note that we reduced the problem `$\rho,i \models \phi_f$?' into a problem to check a similar property on $\rho',i'$ (Condition C).
\smallskip 

\noindent \textbf{Checking the conditions for $\rho, i \models \phi_f$}. Let $I_g{=}[ l_g, u_g )$\footnote{Similar reasoning will hold for other type of intervals and their combination} for any $1 \le g \le n$. 
We first discuss the simple case, \textbf{Non-Overlapping Case}
where $\{I_1, \ldots, I_n\}$ are pairwise disjoint and $\inf(I_1) \ne 0$ in $\phi_f=\fregnm(\re_1, \ldots, \re_{n+1})(S)$. The more general case of overlapping intervals is discussed later.
The disjoint interval assumption along with [Assumption 1] implies that for any $1 \le g \le n$, $u_{g-1} < l_g$. 
Between $i'_{g-1}$ and $i'_g$, we have an oversampling point $k_g$. 
 The point $k_g$ is guaranteed to exist between $i'_{g-1}$ and $i'_g$, since these two points lie within two distinct non-overlapping, non-adjacent intervals  $\mathsf{I}_{g-1}$ and $\mathsf{I}_{g}$ from $i'$. Hence their timestamps have different integral parts, and there is always a uniquely labelled oversampling point $k_g$ between $i'_{g-1}$ and $i'_{g}$, labelled $j_g$ for all $1\le g \le n$. Let for all $1\le g \le n+1$, $\re'_g = (Q_g,2^S,init_g, F_g, \delta'_g)$. 
 
Formally, $\tau'_{i'_{g-1}} - \tau'_{i'} \le u_{g-1} < l_g  \le  \tau'{i'_{g}} - \tau'_{i'}$(By (i)). As the $l_g$ and $u_{g-1}$ are integers, $u_{g-1} +1 \le l_g$. Hence, $\tau'{i'_{g}} - \tau'{i'_{g-1}} \ge 1$. 

For any $1\le g \le n$, we assert that the behaviour of propositions in $S$ between points $i'_{g-1}$ and $i'_g$ should be accepted by the automaton $\re'_g$. This is done by  splitting the run at the oversampling point $k_g$ with timestamp $\tau'_{k_g}{=}\lceil \tau'_{i'_{g-1}}\rceil$, lying in between  $i'_{g-1}$ and $i'_g$. \\
 (1) The idea is to start from the initial state $init_g$ of $\re_g$, and move to the state (say $q$) that is reached 
 at the closest oversampling point $k_g$. Note that we use only $\re_g$ (we do not allow self loops on $\bigvee \Sigma$) 
 to end up at the closest oversampling point.  Let $\nt_{j_g}$ be the symbol from $\Int$ decorating $k_g$ : this is checked 
 by $\rho', i'_{g-1}  \models \fut_{[0,1)} \nt_{j_g}$. \\
(2)  If we can guess  point $i'_g$ which is within interval $\mathsf{I}_g$ from $i'$, such that, 
 the automaton $\re'_g$  starts from state $q$ reading $\nt_{j_g}$ and reaches a final state in $F_g$ at point 
 $i'_g$, then indeed, the behaviour of propositions from $S$ between $i'_{g-1}$ and $i'_g$ respect $\re'_g$, and also 
 $\tau'_{i'_g} -\tau'_{i'} \in \mathsf{I}_g$. \\
 (1) amounts to $\mathsf{Seg^+}(\rho', i'_{g-1}, k_g, S){\in} L(\re_g[init_g,q])\cdot \nt_{j_g}$.  This is 
 defined by the formula  $\psi^+_{g-1, \nt_{j_g},Q_g}$
 which asserts $\re_{g+1}[init_{g},q] \cdot \nt_{j_g}$ from point $i'_{g-1}$ to the next nearest oversampling point $k_g$ where $\nt_{j_g}$ holds.
 (2) amounts to checking from point $i$, within interval $\mathsf{I_g}$ in its future, the existence of a point $i'_g$ such that $\mathsf{Seg^-}(\rho', i'_g, k_g, S){\in}L(\rev(\nt_{j_g} \cdot \re'_g[q,F_g]))$. This is defined by the formula $\psi^-_{g, \nt_{j_g},Q_g}$ which asserts $\rev(\nt_{j_g}\cdot \re'_g[q, F_g])$, from point $i'_g$ to an oversampling point $k_g$ which is the earliest oversampling point  s.t. $i'_{g-1}< k_g < i'_g$. 
 For $\rho', i' \models \phi_f$, we define the formula 
 $\psi{=}\fut_{[0,1)} \nt_{j_0} \wedge \bigvee \limits_{g{=}1}^{n} [\psi^+_{g-1, \nt_{j_g},Q_g} \wedge \psi^-_{g,\nt_{j_g},Q_g}] \wedge \psi^+_n$. 
 
 \textbf{Note that there is a unique point between $i'_{g-1}$ and $i'_{g}$ labelled $j_g$. This is because, $\tau'_{i'_g} - \tau'_{i'_{g-1}} < \tau'_{i'_g} - \tau'_{i'} \le \cmax$. As the oversampling points are labelled by counter which resets only after $\cmax$ time units, there can not be two different oversampling points labelled with the same counter value}. Hence, we can ensure that the meeting point for the check (1) and (2) is indeed marked with a unique label.

We define a set of sequences $\Dseq{=}\{x_1 x_2 \ldots x_n| 1\le g \le n$, $x_g \in Q_g\}$. 
Note that $|\Dseq|{=}|Q_1| \times \ldots \times |Q_{n+1}|{=}O(2^{Poly(|\phi_f|)})$ . Each sequence $\dseq = q_1 q_2 \ldots q_n {\in} \Dseq$ defines a subcase where the part of accepting run between $i'_{g-1}$ and next nearest oversampling point $k_g$ ends at state $q_g$. 
Similarly, we define a set $\Tseq$ containing sequences of the from $y_1 \ldots y_n$ such that $0 \le y_1 \le \cmax-1$ and for any $2\le g \le n$, $y_g{=}\{ y \oplus y_1 | l_g \le y \le u_g\}$. Hence, $|\Tseq|{=}\cmax \times |I_1| \times |I_2| \times |I_{n-1}|$. 
Intuitively, a sequence $\tseq =  j_1 j_2 \ldots j_{n} {\in} \Tseq$ identifies a subcase where the point next closest oversampling point from $i'_{g-1}$ ($k_g$) is labelled as $\nt_{j_g}$ for any $1 \le g \le n-1$. In other words, $\rho', i'_{g-1} \models \fut_{[0,1)} \nt_{j_g}$. Note that, due to [Assumption 2], $\mathsf{I_1, \ldots, I_{n-1}}$ are necessarily bounded intervals. This implies that all the points $i_1' \ldots, i'_{n-1}$ appears within $\cmax$ time units from $i'$. As the counter $\Int$ is modulo $\cmax$, for any $1\le g<g' \le n-1$, $j_g' \ne j_g$. 
Given $\tseq {\in} \Tseq$ and $\dseq {\in} \Dseq$, for any $g\le n$, condition $(\tau'_{i'_g} - \tau'_{i'} {\in} \mathsf{I}_g \wedge \rho,i'_g{\models}\bigvee \Sigma \wedge \mathsf{Seg^+}(\rho', i'_{g-1}, i'_g, S'){\in}L(\re'_g))$ is equivalent to (1) $\mathsf{Seg^+}(\rho', i'_{g-1}, k_g, S'){\in}(L(\re_g[init_g,q_g]\cdot \nt_{j_{g}})$ and, (2) from point $i$, within interval $\mathsf{I_g}$ in future, there exists a point $i'_g$ such that $\mathsf{Seg^-}(\rho', i'_g, k_g, S'){\in}L(\rev(\nt_{j_g} \cdot \re'_g[q_g,F_g]))$. 
We define a formula \\$\psi_{\tseq, \dseq}{=}\fut_{[0,1)} \nt_{j_1} \wedge \bigvee \limits_{g{=}1}^{n} [\psi^+_{g-1, \nt_{j_g},q_g} \wedge \psi^-_{g, \nt_{j_g},q_g}] \wedge \psi^+_n$.
\\$\psi^+_{g-1, \nt_{j_g},q_g}$ asserts $\re_{g}[init_{g},q_g] \cdots \nt_{j_g}$ from point $i'_{g-1}$ to the next nearest oversampling point $k_g$ where $\nt_{j_g}$ holds, hence equivalent to (a) . Similarly, $\psi^-_{g, j_g,q_g}$ formula asserts $\rev(\nt_{j_g}\cdot \re_g[q_g, F_g])$, from point $i'_g$ to an oversampling point $k_g$ which is earliest oversampling point  s.t. $i'_{g-1}< k_g < i'_g$.

$\psi^+_{0, \nt_{j_1},q_1}{=} (\bigvee \Sigma \wedge  \fregm(\re_{1}[init_{1},q_1] \cdot \{\nt_{j_1}\})(S \cup \{\nt_{j_1}\}))$

For $2 \le g \le n$,
$\psi^+_{g-1, \nt_{j_g},q_g}{=}\fut_{I_{g-1}}( \bigvee \Sigma \wedge  \fregm(\re_{g}[init_{g},q_g] \cdot \{\nt_{j_g}\})(S \cup \{\nt_{j_g}\}))$

$\varphi^+_{n}{=}\fut_{I_n}(\bigvee \Sigma \wedge\fregm(\re_{n+1}\cdot[\Last])( S \cup \{\Last\}))$.

For $1 \le g \le n$
\\$\varphi^-_{g, j_g, q_g}{=}\fut_{I_g}(\bigvee \Sigma \wedge\sregm(\rev(\nt_{j_g} \cdot \re_{g}[q_g,F])) (S \cup \{\nt_{j_g}\} ))$.

Note that there is exactly one point labeled $\nt_{j_g}$ from any point within future $\cmax$ or past $\cmax$ time units(by $\varphi_{\ovs}$). We encourage the readers to see the figure $\ref{fig:pnregmtl-sat}$.
\begin{figure}
    \centering\scalebox{0.85}{
    \includegraphics{pnreg-sat-1.pdf}}
    \caption{$\re_{i,1} = \re_1[init_i, x_i] \circ \nt_{y_i}, \re'_{1,2} = \nt_{y_i} \circ \re'_2[init_2, x_2]$ for $i{\in}\{1,2\}$, $\dseq = x_1 x_2$, $y_1 = j, y_2 = j\oplus 2$. Hence, $\tseq = j\ j{\oplus}2$}
    \label{fig:pnregmtl-sat}
\end{figure}
Finally disjunction over all $\tseq {\in} \Tseq$ and $\dseq{\in} \Dseq$ we get the formula $\varphi'_f$.
\\$\varphi'_f{=}\bigvee \limits_{\tseq {\in} \Tseq, \dseq {\in} \Dseq} \varphi_{\tseq, \dseq}$
\subsection{General Case}
 We now discuss the more general case where the intervals can have overlaps.  As we assumed there were no overlaps amongst intervals in the previous case, we were able to guarantee that $\tseq[g]<\tseq[g+1]$. In other words, if for some $1 \le g \le n$ if $I_{g-1} \cap I_g \ne \emptyset $ we can not guarantee existence of an oversampling point $k_g$ between $i'_{g-1}$ and $i'_{g}$. This also implies that their could be a sequences $\tseq \in \Tseq$ $\tseq = j_1\ldots j_n$ such that $j_{g}=j_{g+1}$ for some $1\le g \le n$. 
All these points are within the scope of a time constrained existential quantifier with a restriction on sequence of points (i.e., $i'\le i_1 \le \ldots \le i_n$). In the previous case, we express this specification by asserting the required $\fregm$ formulae within the scope of $\fut_{I_k}$ operator for $1\le k \le n$. This was enough in the previous case because every point within $I_k$ time from $i'$ appears strictly before every point within time $I_{k+1}$ from $i'$ as these intervals are non-adjacent and all the intervals were assumed to be pair-wise disjoint. Hence, the natural ordering of intervals implicitly asserted the required restriction on the ordering of points. While we still have non-adjacency, we no more have such an assumption on disjointedness of intervals $\mathsf{I_0, I_1, \ldots, I_n}$. Hence, $\fut_{\mathsf{I}}$ interval doesn't seem to be enough to assert these restrictions. For example, consider $\mathsf{I_1} = (1,2)=\mathsf{I_2}$. It is not possible to make sure that $\fut_{(1,2)}$ modality in $\psi^+(1,-,-)$ chooses a point that appears before the point chosen by $\fut_{(1,2)}$ modality in $\psi^-(2,-,-)$. This precise property makes $\mtl$ strictly less expressive than logics like $\tptl$.
But we can still assert this time constrained existential quantification with the restriction on the sequence of points using the following observation based on non-adjacency. 
\begin{proposition}
\label{prop:non-adjacent}
Let $\I{=}\{\langle l_1, u_1\rangle , \ldots, \langle l_n, u_n\rangle\}$ be non-adjacent set of intervals in $\intintervaln$ and $t'$ be any real number. Then, there does not exist $ 1 \le i \le j \le n$ such that $t< t'+ u_i\le t+1$ and $t< t'+ l_{j} \le t+1$.
\end{proposition}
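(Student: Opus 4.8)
The plan is to argue by contradiction. Suppose that, for some integer $t$ (recall that the oversampling points sit exactly at integer time stamps, so the only $t$ of interest ranges over $\mathbb{Z}$), there are indices $1 \le i \le j \le n$ with $t < t' + u_i \le t+1$ and $t < t' + l_j \le t+1$. First I would note that if $u_i = \infty$ the first inequality is vacuously false, so $u_i$ is finite; being an endpoint of an interval in $\intintervaln$, both $u_i$ and $l_j$ are then nonnegative integers. Since $t' + u_i$ and $t' + l_j$ both lie in the half-open window $(t, t+1]$ of length $1$, we get $|u_i - l_j| = |(t' + u_i) - (t' + l_j)| < 1$, and an integer of absolute value strictly below $1$ is $0$; hence $u_i = l_j$.

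Next I would feed this equality into non-adjacency. We have $\sup(\langle l_i,u_i\rangle) = u_i = l_j = \inf(\langle l_j,u_j\rangle)$, i.e. the supremum of the $i$-th interval equals the infimum of the $j$-th. If $i < j$ and these two intervals are distinct, non-adjacency of $\mathcal{I}$ forces this common value to be $0$; and if $i = j$ (or the two intervals coincide as sets), then $l_i = u_i = l_j$, so the $i$-th interval is punctual and, being in a non-adjacent set, equals $[0,0]$, so again the value is $0$. In every case $u_i = l_j = 0$, and since $0 \le l_i \le u_i = 0$ this also forces $l_i = 0$, so the $i$-th interval is $[0,0]$. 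This contradicts non-punctuality of the timing intervals $\mathsf{I_1},\ldots,\mathsf{I_n}$ occurring in the $\fregnm$ / $\sregnm$ modalities we reduce — equivalently, just as in the non-overlapping case one may take $\inf(\mathsf{I_1}) \ne 0$, so that every $l_g \ge 1$ and $u_i = l_j = 0$ is impossible. Either way we reach a contradiction, which proves the proposition.

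I expect the only place one could trip up is the bookkeeping around interval-endpoint conventions: one must check that the bound $|u_i - l_j| < 1$ survives regardless of whether the relevant unit window is $(t,t+1]$, $[t,t+1)$, or $(t,t+1)$ — it does, since each has length $1$ and the two points $t'+u_i$ and $t'+l_j$ both lie in it — and one must keep an eye on the degenerate interval $[0,0]$, which a non-adjacent set is in principle allowed to contain and which is precisely the case ruled out by our standing assumption that the modalities' intervals are non-punctual (or that $\inf(\mathsf{I_1}) \ne 0$). Everything else is the two short computations above, once the integrality of the endpoints is used.
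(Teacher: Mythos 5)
Your proof is correct and follows essentially the same route as the paper's: both points land in a unit half-open window, so $|u_i - l_j| < 1$, integrality of the endpoints forces $u_i = l_j$, and this collides with non-adjacency. The only place you diverge is the degenerate case $u_i = l_j = 0$: the paper's one-line proof simply declares that $u_i = l_j$ makes $\I$ adjacent, which is slightly too quick because the paper's own definition of non-adjacency permits $\sup(I_1) = \inf(I_2)$ when the common value is $0$; your explicit treatment of that case is more careful than the original, at the cost of importing a hypothesis (the modalities' intervals are not $[0,0]$, or $\inf(\mathsf{I_1}) \neq 0$) that is not literally part of the proposition's statement.
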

\begin{proof}
Suppose there exists $u_i$ and $l_j$ such that $t< t'+ u_i \le t+1$ and $t< t'+ l_{j}\le  t+1$. This implies, that $|u_i - l_j| < 1$. But $u_i$ and $l_j$ are integers. Hence, $u_i{=}l_j$. This implies that $\I$ is adjacent, which is a contradiction.
\end{proof}
Let $t_0$ be timestamp of $i'$. Let for some $0\le g \le n$, $1\le h \le n-g-1$, $j_{g}<j_{g+1}{=}j_{g+2}{=}\ldots =j_{g+h+1}<j_{g+h+2}$\footnote{let $j_0 = -1$ and $j_{n+1} = \infty$}. Such a string corresponds to a case where points $i'_{g}, \ldots i'_{g+h}$ have timestamps in $(t,t+1]$ from the first point for some integer $t$. Hence, there is no oversampling point appearing between these points. This is only possible if $t_0 + \mathsf{I}_{g+k} \cap (t,t+1] \ne \emptyset$ for $1\le k \le h+1$. 
Also note that if $(t,t+1] \subseteq t_0 + \mathsf{I}_{g+k}$ for some $1\le k \le h+1$ then we do not need to assert any time constrained for $i'_{g+k}$ as its occurrence within time interval $(t, t+1]$ implies the required time constraint for $i'_{g+k}$. By [Assumption 1] $u_g$ (and $l_{g+h}$) is the smallest amongst upper-bounds (largest amongst lower bounds, respectively) appearing in  $\mathsf{I_{g}, \ldots I_{g+h}}$. By proposition \ref{prop:non-adjacent}, only one of (A)$t < t_0 + u_g \le t+1$ or (B)$t < t_0 + l_{g+h} \le t+1$ is true. 
\begin{itemize}
    \item Case 0: Neither (A) nor (B) is true. In this case, $(t,t+1] \subseteq t_0 + I_{g+k}$ for any $0\le k \le h$. Hence, there is no need to assert any timing requirements.
     \item Case 1: Only (A) holds. Let $0 \le k \le h$ be maximum number such that $u_{g+k}{=}u_g$. Then, $i'_{g+k} {\in} I_k$ implies for all $0\le k' \le k$ $i'_{g+k'} {\in} I_{k'}$. Moreover, for all $k < k'' \le h$, $(t,t+1] \subseteq t_0 + I_{g+k''}$. Hence, $i'_{g+k} {\in} I_k$ implies for all $0\le k' \le h$ $i'_{g+k'} {\in} I_{k'}$
     \item Case 2: Only (B) holds. Let $0 \le k \le h$ be minimum number such that $l_{g+k}{=}l_{g+h}$. Then, $i'_{g+k} {\in} I_k$ implies for all $k\le k' \le h$ $i'_{g+k'} {\in} I_{k'}$. Moreover, for all $0 \le k'' < k$, $(t,t+1] \subseteq t_0 + I_{g+k''}$. Hence, $i'_{g+k} {\in} I_k$ implies for all $0\le k' \le h$ $i'_{g+k'} {\in} I_{k'}$
 \end{itemize}
Note that Case 0 holds if and only if, $u_g > (j_{g+1} - j_1)\%\cmax$ and $l_{g+h} \le  (j_{g+1} - j_1)\%\cmax$. Case 1 holds when $(j_{g+1} - j_1)\%\cmax- 1<u_g \le (j_{g+1} - j_1)\%\cmax$. Similarly, Case 2 holds when $(j_{g+1} - j_1)\%\cmax+1 \ge l_{g+h} > (j_{g+1} - j_1)\%\cmax$. Note that given a $\tseq$ and $\mathsf{I_1, \ldots , I_k}$, $k$ can be determined uniquely. Let $\dseq = q_1 q_2 \ldots q_n$. 
Hence the following formulae are asserted depending on the conditions that hold:
 $\psi_{0, \tseq, \dseq,g,h}{=}\neg \nt_{j_{g+1}-1} \until (\nt_{j_{g+1}-1} \wedge \fregm(\re_g[q_g,F_g]\cdot \re_{g+1}.\cdots.\re_{g+h}\cdot\re_{g+h+1}[init_{g+h+1}, q_{g+h+1}]\cdot \{\nt_{j_{g+1}}\})( S \cup \{\nt_{j_{g+1}}\}))$
 In the case 0, we start from point $i'$, reach the first oversampling point $k$ labelled $\nt_{j_{g+1}-1}$. Note that all the points $i'_g$ to $i'_{g+h}$ are within interval $(\tau'_k, \tau'_k+1]$. Moreover, the required timing constraints imposed by $\phi_f$ is implied by the existence of these points within interval $(\tau'_k, \tau'_k+1]$. Hence, we just need to assert the specification imposed by automata. Hence, we start from $k$ and go on asserting the behaviour imposed by automata $\re_g, \re_{g+1}, \ldots, \re_{g+h}$. Note that these automata do no allow occurrence of any oversampling point. Hence, assertion of these automata implies that all the points from $i'_g$ to $i'_{g+h}$ are within $(\tau'_k, \tau'_k+1]$. Finally we stop at point $i'_{g+h}$ from where we continue asserting first part of $\re_{g+h+1}$ till the next point labelled $j_{g+1}$\footnote{$-$ used in the subscript of $\nt$ is assumed to be - modulo $\cmax$.}.
 
$\psi_{1,\tseq, \dseq, k,g,h} {=}\psi_{2, \tseq, \dseq, k,g,h}= \psi_{\tseq, \dseq, k,g,h} =
\\\fut_{I_k}[\fregm(\re_{g+k+1}\cdots\re_{g+h}\cdot\re_{g+h+1,q_{g+h+1},1}\cdot \{\nt_{\tseq[g]}\})(S \cup \{\nt_{\tseq[g]}\}) \wedge\\ \sregm(\rev(\re_{g+k}).\rev(\re_{g+k-1})\cdots.\rev(\re_{g+1}).\rev(\re_{g,\dseq[g],2}).\{\nt_{\tseq[g]-1}\})\\(S \cup \{\nt_{\tseq[g]-1}\})]$
 
 Given a $\tseq$, let $P$ be the set of indices of $\tseq$ such that for any $p {\in} P$, $\tseq[p]$ occurs exactly once in $\tseq$. Similarly, let 
 $G'{=}\{(g,h) | \tseq[g] \ne \tseq[g+1]{=}\tseq[g+1]{=}\ldots{=}\tseq[g+h+1] \ne \tseq[g+h+2]\}$. $G_0{=}\{(g,h) {\in} G'|  u_g > \tseq[h] - \tseq[0]\%\cmax \wedge l_{g+h} \le  \tseq[h] - \tseq[0]\%\cmax\}$ and $G{=}G' \setminus G_0$. $P$ are the set of points whose corresponding intervals do not overlap with any other interval (hence formulae of non-overlapping case is applicable for these points). $G_0$ corresponds to the Case 0 (neither (A) nor (B)) of the Overlapping Case. $G$ corresponds the Case 1 and 2 (either (A) or (B) holds) of the Overlapping Case.
 Hence the final formula for given $\tseq$ and $\dseq$ as follows:
 $\psi_{\tseq, \dseq}{=}\fut_{[0,1)} \tseq[1] \wedge \bigwedge \limits_{1\le g <n \wedge g {\in} P} \psi^+_{g, \tseq[g+1],\dseq[g+1]} \wedge \bigwedge \limits_{1\le g \le n \wedge g {\in} P}\psi^-_{g, \tseq[g],\dseq[g]} \wedge \psi^+_n \wedge \\ \bigwedge \limits_{(g,h) {\in} G_0} \psi_{0, \tseq, \dseq,g,h} \wedge  \bigvee\limits_{(g,h) {\in} G} \psi_{\tseq, \dseq,g,h}$
The final required formula is disjunction over all $\tseq$ and $\dseq$. 
$$\psi'_f{=}\bigvee \limits_{\tseq {\in} \Tseq, \dseq {\in} \Dseq} \psi_{\tseq, \dseq} \wedge \varphi_{\ovs}$$

\subsection{Wrapping Up}
\noindent \textbf{4) Converting the $\emitl$ to $\emitl_{0,\infty}$}: While all the $\freg$ and $\sreg$ modalities are untimed, there are timing intervals of the form $\langle l, u \rangle$ where $l>0$ and $u \le \cmax$ associated with $\fut$ modality in $\psi'_f$. We can reduce these timing intervals into purely lower bound ($\langle l, \infty) $) or upper bound constraint ($\langle 0, u \rangle$) using these oversampling points, preserving equivalence by technique showed in \cite{khushraj-thesis} Chapter 5 lemma 5.5.2 Page 90-91. We present a note of intuition for the sake of completeness. After this step we finally get an $\emitl_{0,\infty}$ formula $\psi_f$. 
subsection{Converting timing constraints to $0,\infty$}
\label{app:0inf}
Consider subformulae of $\psi_f$ of the form $\fut_{\langle l,u \rangle}(\beta)$ where $l>0$ and $u$ is finite.
Let $\rho,1 \models \varphi_\ovs$. $\rho = (a_1, \tau_1)\ldots (a_n, \tau_n)$. Then $\rho,i \models \fut_{[ l,u )}(\beta) \iff \exists j > i. \tau_j - \tau_i \in [ l, u ) \rho,j \models (\beta)$. Note that, if $i \models \fut_{[0,1)} (\nt_k)$ then 
$\rho, i \models \psi_{r,k} = \fut_{[0,u)}\wedge \fut_{[0,1)}(\nt_{k\oplus u}) \wedge \beta$ iff there exists $j$ where $\beta$ holds and $\tau_j \in (\lfloor \tau_i + u \rfloor, \rho[i](2) + u )$. Moreover, if $\rho, i \models \psi {l, k} = \neg \nt_{k \oplus (u -1)} \until_{[l, \infty)} (\nt_{k \oplus (u -1)} \wedge \beta$ iff there exists a point $j$ such that $\tau_j - \tau_i \in [l \infty)$ but the point $j$ occurs before the next occurrence of $\nt_{k \oplus (u-1)}$ which occurs at timestamp $\lfloor \tau_i + u \rfloor$. Hence, $\rho,i \models \fut_{[ l, u )} \beta \wedge \varphi_{\ovs} \iff \rho,i \models \psi_{\beta} = \bigvee \limits_{g = 0}^{\cmax}(\fut_{[0,1)}\nt_g \wedge \psi_{l,g} \wedge \psi_{r,g}) \wedge \varphi_{\ovs}$. Hence, $\psi_\beta$ is equivalent to $\fut_{\langle l,u \rangle}(\beta) \wedge \varphi_{\ovs}$. Hence, every bounded timing constraint with $l > 0$ appearing in $\psi'_f$ can be replaced to an $(\emitl)_{0,\infty}$ formula $\psi_f$  such  that $|\psi_f| = \mathcal{O}(\cmax \times \psi'_f)$ (we assume binary encoding of constants appearing in the interval).
\subsection{EXPSPACE Upper Bound for non-adjacent $\pnregmtl$ and non-adjacent 1-$\tptl$ } 
\label{app:comp}
We first compute the size of $\emitl_{0,\infty}$ formula $\psi$ constructed from $\phi$ in the previous section.
Each $T_i$ can be replaced by $T'_i{=}\sbf(b_i\leftrightarrow \psi_f) \wedge \varphi_{\ovs}$ to get formula $\psi$ from $\phi_{flat}$.
Note that $|\psi_{\tseq, \dseq}|{=}\mathcal{O}[2n\times$(number of $\freg, \sreg$ formulae for each $\tseq, \dseq$) $ \times |\re_{max}|$(max size of each $\freg$, $\sreg$ formula)$]$ . $|\Tseq|{=}\mathcal{O}(\cmax^n)$, $|\Dseq|{=}m_1 \times m_2 \ldots m_{n+1}$. Each $m_i{=}\mathcal{O}(|\re_i|)$.Hence $|\Dseq|{=}\mathcal{O} |\re_{max}|^n$,\\$|\varphi_{\ovs}|{=}\mathcal{O}(\cmax)$,  $|\psi_f|{=}(|\psi_{\tseq, \dseq}| \times |\Tseq| \times |\Dseq|) + |\psi_{\ovs}|{=}\mathcal{O}(n \times (\cmax)^n \times |\re_{max}|^n$.
$|\psi'_f|{=} \mathcal{O}(n \times (\cmax)^n \times |\re_{max}|^n) = \mathcal{O}(2^{Poly(|\phi|)}$, where $n$ is the arity of $\phi_f$, $\cmax$ is the max constant used in the timing intervals f $\phi_f$. Final $\emitl_{0,\infty}$ formula $\psi_f$ is such that $|\psi_f| = \mathcal{O}(\cmax \times \psi'_f) = \mathcal{O}(2^{Poly(|\phi|)}$. Finally $|\psi| = \mathcal{O}(|\psi_f| \times |\phi|) = \mathcal{O}(2^{Poly(|\phi|)}$.
$\emitl_{0,\infty}$ is PSPACE complete.  Hence, we have an EXPSPACE procedure for non-adjacent $\pnregmtl$.

\end{document}